\newtheorem{definition}{Definition}
\newtheorem{proposition}[definition]{Proposition}
\newtheorem{lemma}[definition]{Lemma}
\newtheorem{theorem}[definition]{Theorem}
\newtheorem{corollary}[definition]{Corollary}
\newtheorem{conjecture}[definition]{Conjecture}
\newtheorem{remark}[definition]{Remark}
\newtheorem{example}[definition]{Example}
\newtheorem{question}[definition]{Question}
\newtheorem{memo}[definition]{Memo}
\def\squareforqed{\hbox{\rlap{$\sqcap$}$\sqcup$}}
\def\qed{\ifmmode\squareforqed\else{\unskip\nobreak\hfil
		\penalty50\hskip1em\null\nobreak\hfil\squareforqed
		\parfillskip=0pt\finalhyphendemerits=0\endgraf}\fi}
\def\endenv{\ifmmode\;\else{\unskip\nobreak\hfil
		\penalty50\hskip1em\null\nobreak\hfil\;
		\parfillskip=0pt\finalhyphendemerits=0\endgraf}\fi}
\newenvironment{proof}{\noindent \textbf{{Proof.~} }}{\qed}
\def\Dbar{\leavevmode\lower.6ex\hbox to 0pt
	{\hskip-.23ex\accent"16\hss}D}
\def\url@leostyle{%
	\@ifundefined{selectfont}{\def\UrlFont{\sf}}{\def\UrlFont{\small\ttfamily}}}
\def\bcj{\begin{conjecture}}
	\def\ecj{\end{conjecture}}
\def\bcr{\begin{corollary}}
	\def\ecr{\end{corollary}}
\def\bd{\begin{definition}}
	\def\ed{\end{definition}}
\def\bea{\begin{eqnarray}}
	\def\eea{\end{eqnarray}}
\def\bem{\begin{enumerate}}
	\def\eem{\end{enumerate}}
\def\bex{\begin{example}}
	\def\eex{\end{example}}
\def\bim{\begin{itemize}}
	\def\eim{\end{itemize}}
\def\bl{\begin{lemma}}
	\def\el{\end{lemma}}
\def\bma{\begin{bmatrix}}
	\def\ema{\end{bmatrix}}
\def\bpf{\begin{proof}}
	\def\epf{\end{proof}}
\def\bpp{\begin{proposition}}
	\def\epp{\end{proposition}}
\def\bqu{\begin{question}}
	\def\equ{\end{question}}
\def\br{\begin{remark}}
	\def\er{\end{remark}}
\def\bt{\begin{theorem}}
	\def\et{\end{theorem}}
\def\bmm{\begin{memo}}
	\def\emm{\end{memo}}
\def\btb{\begin{tabular}}
	\def\etb{\end{tabular}}
	\newcommand{\nc}{\newcommand}
	\def\a{\alpha}
	\def\b{\beta}
	\def\g{\gamma}
	\def\z{\zeta}
	\def\t{\theta}
	\def\l{\lambda}
	\def\x{\xi}
	\def\r{\rho}
	\nc{\bbA}{\mathbb{A}} \nc{\bbB}{\mathbb{B}} \nc{\bbC}{\mathbb{C}}
	\nc{\bbD}{\mathbb{D}} \nc{\bbE}{\mathbb{E}} \nc{\bbF}{\mathbb{F}}
	\nc{\bbG}{\mathbb{G}} \nc{\bbH}{\mathbb{H}} \nc{\bbI}{\mathbb{I}}
	\nc{\bbJ}{\mathbb{J}} \nc{\bbK}{\mathbb{K}} \nc{\bbL}{\mathbb{L}}
	\nc{\bbM}{\mathbb{M}} \nc{\bbN}{\mathbb{N}} \nc{\bbO}{\mathbb{O}}
	\nc{\bbP}{\mathbb{P}} \nc{\bbQ}{\mathbb{Q}} \nc{\bbR}{\mathbb{R}}
	\nc{\bbS}{\mathbb{S}} \nc{\bbT}{\mathbb{T}} \nc{\bbU}{\mathbb{U}}
	\nc{\bbV}{\mathbb{V}} \nc{\bbW}{\mathbb{W}} \nc{\bbX}{\mathbb{X}}
	\nc{\bbZ}{\mathbb{Z}}
	\nc{\bA}{{\bf A}} \nc{\bB}{{\bf B}} \nc{\bC}{{\bf C}}
	\nc{\bD}{{\bf D}} \nc{\bE}{{\bf E}} \nc{\bF}{{\bf F}}
	\nc{\bG}{{\bf G}} \nc{\bH}{{\bf H}} \nc{\bI}{{\bf I}}
	\nc{\bJ}{{\bf J}} \nc{\bK}{{\bf K}} \nc{\bL}{{\bf L}}
	\nc{\bM}{{\bf M}} \nc{\bN}{{\bf N}} \nc{\bO}{{\bf O}}
	\nc{\bP}{{\bf P}} \nc{\bQ}{{\bf Q}} \nc{\bR}{{\bf R}}
	\nc{\bS}{{\bf S}} \nc{\bT}{{\bf T}} \nc{\bU}{{\bf U}}
	\nc{\bV}{{\bf V}} \nc{\bW}{{\bf W}} \nc{\bX}{{\bf X}}
	\nc{\bZ}{{\bf Z}}
	\nc{\as}{{\cal AS}}
	\nc{\app}{{\cal AP}}
	\nc{\ar}{{\cal AR}}
	\nc{\cA}{{\cal A}} \nc{\cB}{{\cal B}} \nc{\cC}{{\cal C}}
	\nc{\cD}{{\cal D}} \nc{\cE}{{\cal E}} \nc{\cF}{{\cal F}}
	\nc{\cG}{{\cal G}} \nc{\cH}{{\cal H}} \nc{\cI}{{\cal I}}
	\nc{\cJ}{{\cal J}} \nc{\cK}{{\cal K}} \nc{\cL}{{\cal L}}
	\nc{\cM}{{\cal M}} \nc{\cN}{{\cal N}} \nc{\cO}{{\cal O}}
	\nc{\cP}{{\cal P}} \nc{\cQ}{{\cal Q}} \nc{\cR}{{\cal R}}
	\nc{\cS}{{\cal S}} \nc{\cT}{{\cal T}} \nc{\cU}{{\cal U}}
	\nc{\cV}{{\cal V}} \nc{\cW}{{\cal W}} \nc{\cX}{{\cal X}}
	\nc{\cZ}{{\cal Z}}
	\nc{\cpp}{{\cal PP}}
	\nc{\hA}{{\hat{A}}} \nc{\hB}{{\hat{B}}} \nc{\hC}{{\hat{C}}}
	\nc{\hD}{{\hat{D}}} \nc{\hE}{{\hat{E}}} \nc{\hF}{{\hat{F}}}
	\nc{\hG}{{\hat{G}}} \nc{\hH}{{\hat{H}}} \nc{\hI}{{\hat{I}}}
	\nc{\hJ}{{\hat{J}}} \nc{\hK}{{\hat{K}}} \nc{\hL}{{\hat{L}}}
	\nc{\hM}{{\hat{M}}} \nc{\hN}{{\hat{N}}} \nc{\hO}{{\hat{O}}}
	\nc{\hP}{{\hat{P}}} \nc{\hR}{{\hat{R}}} \nc{\hS}{{\hat{S}}}
	\nc{\hT}{{\hat{T}}} \nc{\hU}{{\hat{U}}} \nc{\hV}{{\hat{V}}}
	\nc{\hW}{{\hat{W}}} \nc{\hX}{{\hat{X}}} \nc{\hZ}{{\hat{Z}}}
	\nc{\hn}{{\hat{n}}}
	\def\diag{\mathop{\rm diag}}
	\def\max{\mathop{\rm max}}
	\def\min{\mathop{\rm min}}
	\def\tr{\mathop{\rm Tr}}
	\def\dg{\dagger}
	\newcommand{\ket}[1]{|#1\rangle}
	\newcommand{\proj}[1]{| #1\rangle\!\langle #1 |}
	\def\Dbar{\leavevmode\lower.6ex\hbox to 0pt
		{\hskip-.23ex\accent"16\hss}D}
\begin{document}
	
\title{On the extreme points of sets of absolulely separable and PPT states}

\author{Zhiwei Song}\email[]{zhiweisong@buaa.edu.cn}
\affiliation{School of Mathematics and Systems Science, Beihang University, Beijing 100191, China}
 
\author{Lin Chen}\email[]{linchen@buaa.edu.cn (corresponding author)}
\affiliation{School of Mathematics and Systems Science, Beihang University, Beijing 100191, China}

\begin{abstract}
The absolutely separable (resp. PPT) states remain separable (resp. positive partial transpose) under any global unitary operation. We present a compact form of the extreme points in the sets of absolutely separable states and PPT states in two-qubit and qubit-qudit systems. The results imply that each extreme point has at most three distinct eigenvalues. We establish a necessary and sufficient condition for determining extreme points of the set of absolutely PPT states in two-qutrit and qutrit-qudit systems, expressed as solvable linear equations. We also demonstrate that any extreme point in qutrit-qudit system has at most seven distinct eigenvalues. We introduce the concept of robustness of nonabsolute separability. It quantifies the minimal amount by which a state needs to mix with other states such that the overall state is absolutely separable. We show that the robustness satisfies positivity, invariance under unitary transformation, monotonicity and convexity, so it is a good measure within the resource theory of nonabsolute separability. Analytical expressions for this measure are given for pure states in arbitrary system and rank-two mixed states in two-qubit system.
\end{abstract}

\date{ \today }

\maketitle


\section{Introduction}
\label{sec:int}

Quantum entanglement is one of the fundamental features in the quantum information theory \cite{horodecki2009quantum,nielsen2010quantum}. 
A quantum state $\r\in M_m(\bbC)\otimes M_n(\bbC)$
is called separable if it can be written as $\r=\sum_i p_i\proj{v_i}\otimes \proj{w_i}$, with $p_i\ge 0$, $\sum_i p_i=1$, $\ket{v_i}\in\bbC^m$ and $\ket{w_i}\in\bbC^n$. Otherwise it is called entangled, and can  
be considered as a valuable resource for a variety of information processing tasks \cite{chitambar2019quantum,1993Teleporting,gisin2002quantum}. Despite extensive efforts in entanglement detection (readers may refer to \cite{guhne2009entanglement} for a comprehensive review),  determining the separability of a quantum state remains challenging, as it has been proven to be an NP-hard problem \cite{gurvits2003classical}. Nevertheless, there are
some partial results that prove separability of
certain states.
The famous Peres criterion indicates that any separable state $\r$ has positive partial transpose (PPT), i.e., $(\text{id}_m\otimes T)(\r)$ remains positive semidefinite, where $\text{id}_m$ is the identity map on $M_m(\bbC)$ and 
$T$ is the transpose map on $M_n(\bbC)$ \cite{peres1996separability}. In particular,  the PPT states are separable in two-qubit and qubit–qutrit systems \cite{horodecki1996necessary}. However, in higher dimensions, there exist PPT entangled states, indicating that separable states constitute a proper subset of PPT states \cite{horodecki1997separability}. From a geometric perspective, both the sets of separable states and PPT states are characterized as convex and compact. According to the Krein-Milman theorem, a compact convex set is represented as the convex hull of its extreme points \cite{krein1940extreme}. Hence, a crucial approach to understanding these two sets involves identifying their extreme points. For the set of separable states, it is known that the extreme points correspond to the pure product states. Properties about the boundary of this set have also been investigated \cite{chen2015boundary}.
In the context of the set of PPT states, a necessary and sufficient condition for determining its extreme points has been outlined \cite{leinaas2007extreme}, followed by the application of numerical methods to obtain extreme PPT states of varying ranks \cite{leinaas2010numerical}.

An interesting problem related to separability is to study absolutely separable (AS) states, which are states that remain separable under any global unitary transformation \cite{knill2003separability}. Absolute separability is a spectral property, and the problem is to find conditions on the spectrum of a state for it to be AS. One motivation for this problem is that it is experimentally easier to determine the eigenvalues of a state rather than reconstructing the state itself \cite{ekert2002direct,tanaka2014determining}.
From the perspective of resource theory, non-absolutely separable states, which comprise separable and entangled states, can be viewed as a resource, with the absolutely separable states serving as the free states and a mixture of global unitary operations as the free operations \cite{2023Resource}.
Analogous to AS states, states that remain PPT under any global unitary transformation are termed absolutely PPT (AP) states. In the ensuing discussion, we will denote the sets of AS states and AP states in 
$M_m(\bbC)\otimes M_n(\bbC)$ as $\as_{m,n}$ and $\app_{m,n}$, respectively. It directly follows that $\as_{m,n}\subseteq \app_{m,n}$. The characterization of $\as_{2,2}$ was initially provided in \cite{verstraete2001maximally}.
 A necessary and sufficient condition for a state to belong to $\app_{m,n}$ was established, represented by a finite set of linear matrix inequalities  \cite{hildebrand2007positive}. Later, it was proved that $\as_{2,n}=\app_{2,n}$ for arbitrary $n$ \cite{johnston2013separability}. However, the problem of whether the two sets $\as_{m,n}$ and $\app_{m,n}$ are identical for $m,n\ge 3 $ still remains open.

Geometrically, the two sets $\as_{m,n}$ and $\app_{m,n}$ are also known to be convex and compact \cite{ganguly2014witness}. Therefore, a potential approach to this question involves determining whether every extreme point of $\app_{m,n}$ belongs to $\as_{m,n}$. Researches have concentrated on $\as_{2,n}$, showing that every deficient-rank AS state is an extreme point \cite{halder2021characterizing}. On another front, various geometrical measures of nonabsolute separability (NAS) for a state have been introduced \cite{2023Resource}. These measures, assessing the distance from a state to the set of AS states, are more fine-grained than the entanglement measures, as they can detect certain separable states.

The goal of the paper is to characterize more geometric properties of $\as_{m,n}$ and $\app_{m,n}$, focusing on their boundary points and extreme points. In Section \ref{sec:pre}, we introduce
the mathematical methods that we will use, and investigate some results about AS and AP states. We present a fundamental property regarding the non-extreme points of $\as_{m,n}$ and  $\app_{m,n}$ in Theorem \ref{thm:main}. In Section \ref{ex2n}, we give a full characterization of extreme points of $\as_{2,n}$ and $\app_{2,n}$ for arbitrary $n$ in Theorem \ref{mmain}. The results imply that each extreme point has at most three distinct eigenvalues. In Section \ref{sec:3x3}, we establish a sufficient and necessary condition on determining the extreme points of $\app_{3,3}$ in Theorem \ref{judge}. This condition can be expressed as a set of solvable linear equations. As a corollary, we present a family of extreme points in $\app_{3,3}$ which have two distinct eigenvalues. Furthermore, we extend the necessary and sufficient condition for determining the extreme points of $\app_{3,n}$ in Theorem \ref{3ndpd}. For convenience of the readers, we summarize the results in Table
\ref{tab}. In section \ref{ase}, we introduce the concept of robustness of nonabsolute separability, which is proven to be a new distance-based measure of NAS. In Theorem \ref{battle}, we provide an exact formula of certain states, including pure states as well as arbitrary rank-two two-qubit mixed states, under this measure. Finally, we conclude in Section \ref{sec:concl}.

\begin{table}[h]
\centering
	\caption{Summary of the results in Sections \ref{ex2n} and \ref{sec:3x3}. Here, $\l$ denotes the non-increasing ordered eigenvalue vector of a state.}
	\label{tab}
\begin{tabular}{c|c}
	\hline
	\textbf{Sets} & \textbf{Necessary and sufficient condition for determining extreme points} \\ \hline
	\multirow{2}{*}{
	$\as_{2,2}$ ($\app_{2,2}$)} &  $\lambda_1=\lambda_{3}+2\sqrt{\lambda_{2}\lambda_{4}}$,\\ & at least two of $\l_1,\l_2,\l_3,\l_4$ are equal.   \\ \hline
	\multirow{2}{*}{
		$\as_{2,n}$ ($\app_{2,n}$)}& $\diag(\l_1,\l_{2n-2},\l_{2n-1},\l_{2n})$ is an (unnormalized) extreme point of $\as_{2,2}$,\\ & $\lambda_i\in \{\lambda_1,\lambda_{2n-2}\}$
	for any $2\le i\le 2n-3$.\\ 
	\hline
	\multirow{2}{*}{
		$\app_{3,3}$} & at least one of $l_1(\l),l_2(\l)$ defined in (\ref{3e2}) equals to zero,\\ 
	&$\l$ satisfies the criterion proposed in Theorem \ref{judge}.\\
	\hline 
	\multirow{2}{*}{
		$\app_{3,n}$} & $\diag(\l_1,\l_2,\l_3,\l_{3n-5},\cdots,\l_{3n})$ is an (unnormalized) extreme point of $\app_{3,3}$, \\
	& $\l_i\in \{\l_3,\l_{3n-5}\}$ for any $i=4,\cdots,3n-6$.\\
\hline
\end{tabular}
\end{table}

\section{Preliminaries}
\label{sec:pre}

We first introduce the notations used in this paper. We refer a quantum state $\r\in M_m(\bbC)\otimes M_n(\bbC)$ as an $m\times n$ state.
For convenience, we also work with unnormalized states, and it will be clear from the context whether we require the states to be normalized.
We denote $\diag(a_1,\cdots,a_n)$ as the order-$n$ diagonal matrix whose $j$-th diagonal entry is $a_j$. 
Given an order-$n$ matrix $M$, we denote $\diag(M)$ as the order-$n$ diagonal matrix by vanishing all the non-diagonal entries of $M$. Suppose $\cK$ is a subset of $\{1,\cdots,n\}$ with $k$ elements, we write $M_{\cK}$ for the order-$k$ principal submatrix of $M$ that corresponds to the rows and columns with index in $\cK$. We denote $||M||_p:=[\tr(M^\dagger M)^\frac{p}{2}]^\frac{1}{p}$ as the Schatten $p$-norm of $M$. We denote $\cH_{n}$ and $\cS_{n}$ as the space of order-$n$ Hermitian matrices and real symmetric matrices, respectively. Given $M\in \cH_{n}$, we refer to $\l(M):=(\l_1(M),\cdots,\l_{n}(M))$ as the eigenvalue vector of $M$, arranged in non-increasing order. We shall take $\l$ as $\l(M)$ and $\l_j$ as $\l_j(M)$ when $M$ is clear from the context. We write $M\ge \mathcal{O}$ (resp. $M>\mathcal{O}$) if $M$ is positive semidefinite (resp. definite).
Given a vector $x\in \bbR^n$, we rearrange the components of $x$ and obtain $x^{\downarrow}:=(x_1^{\downarrow},\cdots,x_n^{\downarrow})$, where $x_1^{\downarrow}\ge \cdots\ge x_n^{\downarrow}$. Similarly, $x^{\uparrow}$ denotes the vector where the components of $x$ are in non-decreasing order. Given two vectors $x,y\in \bbR^n$, we say that $y$ majorizes $x$, denote as $y\succ x$ (or $x\prec y$)  if 
\begin{eqnarray}
	\notag
&&\sum_{i=1}^k x_i^{\downarrow}\le \sum_{i=1}^k y_i^{\downarrow}, k=1,\cdots,n-1,\\
\notag
&&\sum_{i=1}^n x_i^{\downarrow}=\sum_{i=1}^n y_i^{\downarrow}.
\end{eqnarray}
Given $A,B\in \cH_n$, we say that $A\succ B$ if $\lambda(A)\succ \lambda(B)$. 
It is a well-known result that $\l(A)+\l(B)\succ\l(A+B)$.
The Schur Theorem states that for any $M\in \cH_n$, $M\succ \diag(M)$. 
There are also some facts that we shall use in the following, which can be easily proven. For more details, we refer readers to \cite[Section.10]{2011Matrix}.

\begin{lemma}
	\label{hjk}
	(i) Let $x,y\in \bbR^n$. Then
$
x+y\succ x^{\downarrow}+y^{\uparrow}.
$

	(ii) Let $x,y\in \bbR^m$, $u,v\in \bbR^n$ satisfy that $x\succ y$ and $u\succ v$. Then $(x,u)\succ (y,v)$.
	
	(iii) Let $x\in \bbR^n$ with $x_1^{\downarrow}=\cdots=x_k^{\downarrow}\ge x_{k+1}^{\downarrow}=\cdots=x_{n}^{\downarrow}$ for $1\le k\le n$. If $y\in \bbR^n$ satisfies $\sum_{i=1}^k y_i^{\downarrow}\ge\sum_{i=1}^k x_i^{\downarrow}$ and $\sum_{i=1}^n y_i^{\downarrow}=\sum_{i=1}^n x_i^{\downarrow}$, then $y\succ x$.
\end{lemma}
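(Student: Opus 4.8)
The plan is to handle the three parts separately, since each is a standard majorization fact provable by elementary means.

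For part (i), I would first observe that the majorization class of a vector depends only on its multiset of entries, so applying a common permutation to $x$ and $y$ leaves both $x+y$ (as an object of $\succ$) and the values appearing in $x^{\downarrow}$ and $y^{\uparrow}$ unchanged. Hence I may assume $x=x^{\downarrow}$ is already sorted in non-increasing order, and the goal becomes $x^{\downarrow}+y\succ x^{\downarrow}+y^{\uparrow}$ for an arbitrary $y$. The core building block is the two-coordinate exchange: if $i<j$ with $x_i^{\downarrow}\ge x_j^{\downarrow}$ and $y_i>y_j$, then swapping $y_i,y_j$ replaces the pair $(x_i^{\downarrow}+y_i,\,x_j^{\downarrow}+y_j)$ by $(x_i^{\downarrow}+y_j,\,x_j^{\downarrow}+y_i)$; the two pairs have equal sum while the larger entry does not increase, so the pre-swap vector majorizes the post-swap one. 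Sorting $y$ into $y^{\uparrow}$ by a sequence of such adjacent transpositions (bubble sort), each step preserves the total sum and only lowers the vector in the majorization order; transitivity of $\succ$ then yields $x^{\downarrow}+y\succ x^{\downarrow}+y^{\uparrow}$.

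For part (ii), the cleanest route is the convex-function characterization of majorization: $a\succ b$ iff $\sum_i\varphi(a_i)\ge\sum_i\varphi(b_i)$ for every convex $\varphi:\bbR\to\bbR$ (the affine choices $\varphi(t)=\pm t$ force equal sums). Applying this to $x\succ y$ and to $u\succ v$ and adding the two inequalities gives $\sum_i\varphi(x_i)+\sum_i\varphi(u_i)\ge\sum_i\varphi(y_i)+\sum_i\varphi(v_i)$ for all convex $\varphi$, which is exactly $(x,u)\succ(y,v)$. Alternatively, one can argue directly on partial sums: the top-$k$ sum of $(y,v)$ splits as a top-$p$ sum of $y$ plus a top-$(k-p)$ sum of $v$, each bounded above by the corresponding top sum of $x$ and of $u$ via the hypotheses, and the sum of those is in turn at most the top-$k$ sum of $(x,u)$; the total sums match by hypothesis.

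For part (iii), I would exploit that $x^{\downarrow}$ takes only two values, say $a:=x_1^{\downarrow}$ on the first $k$ positions and $b:=x_{k+1}^{\downarrow}\le a$ on the last $n-k$, so its prefix sums are $ja$ for $j\le k$ and $ka+(j-k)b$ for $j\ge k$. Writing $S_j:=\sum_{i=1}^j y_i^{\downarrow}$, the hypotheses read $S_k\ge ka$ and $S_n=ka+(n-k)b$. For $j\le k$, monotonicity of the prefix averages of the sorted vector $y^{\downarrow}$ gives $S_j/j\ge S_k/k\ge a$, hence $S_j\ge ja$. For $j\ge k$, I pass to the tail: $S_k\ge ka$ together with the total-sum equality forces $S_n-S_k\le(n-k)b$, and monotonicity of tail averages then gives $\sum_{i=j+1}^n y_i^{\downarrow}\le(n-j)b$, which rearranges to $S_j\ge ka+(j-k)b$. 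In both ranges $S_j\ge\sum_{i=1}^j x_i^{\downarrow}$, and equality holds at $j=n$, so $y\succ x$.

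The only genuinely delicate point is part (i): the statement compares the unsorted sum $x+y$ with a sum of oppositely sorted vectors, so one must be careful that permuting to sort $x$ does not disturb the claim and that each bubble-sort swap moves monotonically in the majorization order. The two-coordinate exchange inequality is the crux, and once it is in place the remainder is bookkeeping; parts (ii) and (iii) are then routine splitting and averaging arguments.
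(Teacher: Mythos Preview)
Your proposal is correct in all three parts. The paper does not actually give a proof of this lemma: it is stated as a collection of standard majorization facts ``which can be easily proven'' and the reader is referred to \cite[Section~10]{2011Matrix}. Your arguments---the bubble-sort/transposition argument for (i), the convex-function (or direct partial-sum) argument for (ii), and the prefix/suffix average monotonicity argument for (iii)---are exactly the kind of elementary proofs one would find in such a reference, so there is nothing to compare against beyond noting that you have supplied what the paper omits.
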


 A real-valued function $f$ defined on a set $\cA \subset \bbR^n$
 is said to be Schur-concave on $\cA$ if $x\succ y$ implies $f(x)\le f(y)$. If, in addition, $f(x)<f(y)$ whenever $x\succ y$ but not a permutation of $y$, then $f$ is said to be strictly Schur-concave on $\cA$. 
Denote $\cZ_n:=\{x\in \bbR^n:x_1\ge\cdots\ge x_n\ge 0\}$ and $\cZ_n^\circ:=\{x\in \bbR^n:x_1>\cdots>x_n>0\}$ as the interior of $\cZ_n$. The following lemma is a corollary of Schur-Ostrowski Theorem (see \cite[Section 3]{marshall1979inequalities}). 
\begin{lemma}
	\label{scco}
Let $f(x)$ be a real-valued function, defined and
continuous on $\cZ_n$ and continuously differentiable on $\cZ_n^\circ$.
If $\frac{\partial f(z)}{\partial z_1}<\cdots<\frac{\partial f(z)}{\partial z_n}$ holds for any $z\in \cZ_n^\circ$, then $f(x)$ is strictly Schur-concave on $\cZ_n$.
\end{lemma}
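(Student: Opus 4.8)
The plan is to reduce the statement to the behaviour of $f$ along the straight segment joining the two points to be compared, and to read off strict monotonicity from the hypothesis on the gradient through a summation-by-parts identity. Fix $x,y\in\cZ_n$ with $x\succ y$ and $x\ne y$; since both vectors are already non-increasingly ordered, $x$ being a permutation of $y$ is the same as $x=y$, so this is exactly the situation in which strictness must hold. Because $\cZ_n$ is convex, the segment $z(t)=(1-t)x+ty$ stays in $\cZ_n$ for $t\in[0,1]$, and each $z(t)$ is again non-increasing and non-negative. Writing $D_k:=\sum_{i=1}^k (y_i-x_i)$, the relation $x\succ y$ means precisely $D_k\le 0$ for $1\le k\le n-1$ together with $D_n=0$; since the components of $z(t)$ are themselves sorted, the same partial-sum comparison shows $x\succ z(t)\succ y$ for every $t$. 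First I would treat the case in which the open segment lies in the interior $\cZ_n^\circ$.

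On such an interior portion $g(t):=f(z(t))$ is continuously differentiable, and
\begin{equation}
g'(t)=\sum_{i=1}^n \frac{\partial f}{\partial z_i}(z(t))\,(y_i-x_i).
\end{equation}
Setting $p_i:=\frac{\partial f}{\partial z_i}(z(t))$ and summing by parts with the convention $D_0=D_n=0$ gives
\begin{equation}
g'(t)=-\sum_{i=1}^{n-1}\bigl(p_{i+1}-p_i\bigr)\,D_i.
\end{equation}
The hypothesis forces $p_{i+1}-p_i>0$ for every $i$, while $D_i\le 0$, so every summand is non-negative and $g'(t)\ge 0$. Moreover $x\ne y$ together with $D_n=0$ forces $D_{i_0}<0$ for some $i_0$ (otherwise all increments $D_i-D_{i-1}$ vanish and $x=y$), so the corresponding summand is strictly positive and $g'(t)>0$. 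A function continuous on $[0,1]$ whose derivative is positive on the open interval is strictly increasing, whence $f(x)=g(0)<g(1)=f(y)$. This settles the claim whenever the open segment avoids the boundary, and in particular whenever $x,y\in\cZ_n^\circ$, because $\cZ_n^\circ$ is convex.

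The main obstacle is that the gradient hypothesis is available only on the open set $\cZ_n^\circ$, whereas the segment $z(t)$ may be pinned to the boundary for all $t\in(0,1)$. A direct check shows this happens precisely when $x$ and $y$ share a boundary feature: either $x_i=x_{i+1}$ and $y_i=y_{i+1}$ for a common $i$, or $x$ and $y$ have the same block of trailing zeros. To recover the non-strict inequality $f(x)\le f(y)$ in general, I would approximate: for a fixed strictly decreasing $a=(n,n-1,\dots,1)$ the shifted points $x+\eta a$ and $y+\eta a$ lie in $\cZ_n^\circ$, still satisfy $x+\eta a\succ y+\eta a$ (the quantities $D_i$ are unchanged), and converge to $x,y$; applying the interior case and letting $\eta\to 0$ gives $f(x)\le f(y)$ by continuity of $f$. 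The delicate point is upgrading this to a strict inequality, since strictness is not preserved in the limit. Here I would argue by contradiction: if $f(x)=f(y)$, then Schur-concavity forces $f$ to be constant on the whole majorization interval $\{w:x\succ w\succ y\}$. When the shared boundary feature is of the equal-entry type, this polytope still contains two distinct interior points comparable in majorization, contradicting the interior strictness already established. The remaining case of common trailing zeros collapses the interval onto the face $\{z_{k+1}=\dots=z_n=0\}\cong\cZ_k$, which I would handle by induction on $n$, applying the lemma to the restriction $w\mapsto f(w,0,\dots,0)$ on $\cZ_k$. Checking that this restriction inherits the continuous differentiability and the strict ordering of partials on $\cZ_k^\circ$ is the step that demands the most care, and is where I expect the real technical work to lie.
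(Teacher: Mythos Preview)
The paper does not prove this lemma; it simply records it as a corollary of the Schur--Ostrowski theorem in Marshall--Olkin. Your interior argument---the Abel summation identity giving $g'(t)=-\sum_{i=1}^{n-1}(p_{i+1}-p_i)D_i>0$---is correct and is essentially the standard proof that the strict gradient ordering yields strict Schur-concavity on $\cZ_n^\circ$. The non-strict extension to all of $\cZ_n$ by perturbing into the interior and passing to the limit is also fine.

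The gap is in upgrading to a strict inequality on the boundary, and it is not repairable: the lemma as stated is false. Your claim that in the ``equal-entry'' case the majorization interval $\{w:x\succ w\succ y\}$ contains two comparable points of $\cZ_n^\circ$ already fails for $x=(3,1,1,1)$, $y=(2,2,1,1)$, where every such $w$ is forced to have $w_3=w_4=1$. Your fallback to induction for trailing zeros cannot work either, because the restriction $w\mapsto f(w,0,\dots,0)$ need not inherit the strict ordering of partials---exactly the difficulty you flag. Concretely, for $n=3$ take
\[
f(z)=z_3\tanh(z_2-z_1)+2z_3+\tfrac12 z_3^2.
\]
This is continuous on $\cZ_3$, is $C^1$ on $\cZ_3^\circ$, and there one checks $\partial_1 f=-z_3\,\mathrm{sech}^2(z_2-z_1)<z_3\,\mathrm{sech}^2(z_2-z_1)=\partial_2 f\le z_3<1+z_3<\tanh(z_2-z_1)+2+z_3=\partial_3 f$; yet $f\equiv 0$ on the face $\{z_3=0\}$, so $f(2,1,0)=f(\tfrac32,\tfrac32,0)$ with $(2,1,0)\succ(\tfrac32,\tfrac32,0)$. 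What does survive is strict Schur-concavity on $\cZ_n^\circ$ and ordinary Schur-concavity on $\cZ_n$, which is all the cited Schur--Ostrowski theorem actually delivers; the paper's only use of the lemma (the specific function $f(x)=x_3+2\sqrt{x_2x_4}-x_1$ in Theorem~\ref{plpl}) happens to be strictly Schur-concave on the relevant boundary faces by direct inspection.
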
 
 
A state $\r\in \as_{m,n}$ (resp. $\app_{m,n}$) is called an extreme point if $\r=t\r_1+(1-t)\r_2$ for any $t\in (0,1)$ and $\r_1,\r_2\in \as_{m,n}$ (resp. $\app_{m,n}$) implies that $\r_1=\r_2$, or equivalently, $\r_1$ and $\r_2$ are linearly dependent. 
Since $\as_{m,n}\subseteq \app_{m,n}$, it follows that any extreme point of   
$\app_{m,n}$ that belongs to $\as_{m,n}$ is also an extreme point of $\as_{m,n}$. Next, one can verify that if $\r$ is an extreme point of $\as_{m,n}$ (resp. $\app_{m,n}$), then $U\r U^\dg$ is also an extreme point of $\as_{m,n}$ (resp. $\app_{m,n}$) for any unitary matrix $U$. This is due to the fact the linearly dependence of two quantum states remains unchanged under unitary operation. Hence, whether a state in $\as_{m,n}$ is an extreme point relies on its eigenvalues
only, allowing us to consider the state in diagonal form without loss of generality. Moreover, the state $\r\in \as_{m,n}$ (resp. $\app_{m,n}$) is called an 
interior point, if there exists  $\epsilon>0$ such that $\frac{1}{1-\epsilon}(\r-\epsilon \frac{1}{mn}I_{mn})\in \as_{m,n}$ (resp. $\app_{m,n}$). Otherwise, $\r$ is called a boundary point.
By definition, any extreme point of $\as_{m,n}$ (resp. $\app_{m,n}$) is necessarily a boundary point.

We next summarize some properties about $\as_{m,n}$ and $\app_{m,n}$.
Firstly, there is a ball of AS states centered at the maximally mixed state, which is known as the {\em maximal ball}. The exact size of such ball is characterized as follows:
\begin{lemma}[\cite{gurvits2002largest}]
	\label{le:I+X=AS}
If $X\in \cH_{mn}$ satisfies $||X||_2\le 1$, then $I_{mn}+X\in \as_{m,n}$ (unnormalized).
In particular, if the $m\times n$ state $\r$ satisfies $\tr(\rho^2)\le \frac{1}{mn-1}$, then $\rho\in \as_{m,n}$. 
\end{lemma}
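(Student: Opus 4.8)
The plan is to isolate the single nontrivial ingredient---that the Frobenius ball of radius one around $I_{mn}$ lies inside the separable cone---and then obtain both assertions from it by purely spectral and scaling arguments, exploiting that both $||\cdot||_2$ and $\tr(\cdot^2)$ are invariant under unitary conjugation. Throughout write $N=mn$.

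First I would reduce the claimed \emph{absolute} separability to ordinary separability. Since $||UXU^\dg||_2=||X||_2$ for every unitary $U$, the operator $I_{mn}+X$ is absolutely separable as soon as one knows that $I_{mn}+Y\in\sep$ for every Hermitian $Y$ with $||Y||_2\le1$: one simply applies this to $Y=UXU^\dg$ for each $U$. Note also that $||X||_2\le1$ forces $||X||_\infty\le1$, so $I_{mn}+X\ge\mathcal O$ is automatically a valid (unnormalized) state and positivity is not an extra hypothesis. Everything therefore rests on the separable-ball statement $||X||_2\le1\Rightarrow I_{mn}+X\in\sep$.

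To establish the separable-ball statement I would pass to the dual cone. By the bipolar theorem, a positive semidefinite $\rho$ is separable iff $\tr(W\rho)\ge0$ for every block-positive $W\in\cH_{mn}$, i.e. every Hermitian $W$ with $\bra{a}\!\bra{b}W\ket{a}\ket{b}\ge0$ for all $\ket a\in\bbC^m,\ket b\in\bbC^n$. Minimizing $\tr(WX)\ge-||W||_2\,||X||_2$ over the ball shows that ``$I_{mn}+X\in\sep$ for all $||X||_2\le1$'' is \emph{equivalent} to the single inequality $\tr W\ge||W||_2$ for every block-positive $W$. This inequality is where the real work lies: testing $W$ on a product basis gives $\tr W\ge0$, and averaging $g(a,b):=\bra{a}\!\bra{b}W\ket{a}\ket{b}\ge0$ over Haar-random product vectors gives $\mathbb E[g]=\tr W/N$ together with the second-moment identity $m(m+1)n(n+1)\,\mathbb E[g^2]=(\tr W)^2+\tr[(\tr_n W)^2]+\tr[(\tr_m W)^2]+\tr(W^2)$. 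The main obstacle is to convert the nonnegativity of $g$ and the product-numerical-range constraint it encodes into the required upper bound $\tr(W^2)\le(\tr W)^2$; this is precisely the content of the Gurvits--Barnum estimate of \cite{gurvits2002largest}, which I would either reproduce through the product-numerical-range analysis above or invoke directly as the cited input.

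Finally, granting the separable-ball statement, the ``in particular'' clause follows by a scaling trick. Given a normalized state $\rho$ with $\tr(\rho^2)\le\frac{1}{mn-1}$, set $c:=\tr(\rho^2)>0$ and $X:=\rho/c-I_{mn}$. Using $\tr\rho=1$ one computes $||X||_2^2=\tr[(\rho/c-I_{mn})^2]=\frac{1}{\tr(\rho^2)}-\frac{2}{\tr(\rho^2)}+mn=mn-\frac{1}{\tr(\rho^2)}\le mn-(mn-1)=1$, where the last inequality is exactly $\tr(\rho^2)\le\frac{1}{mn-1}$. By the first part $I_{mn}+X\in\as_{m,n}$, and since $\rho=c\,(I_{mn}+X)$ is a positive multiple of it while absolute separability depends only on the spectrum and is stable under positive rescaling of unnormalized states, we conclude $\rho\in\as_{m,n}$.
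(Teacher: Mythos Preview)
The paper does not provide its own proof of this lemma; it is stated with a citation to \cite{gurvits2002largest} and used as a black box throughout. Your proposal is a correct outline of the standard Gurvits--Barnum argument: the reduction from absolute separability to ordinary separability via unitary invariance of $\|\cdot\|_2$ is clean, the dual-cone reformulation $\tr W\ge\|W\|_2$ for block-positive $W$ is the right equivalent form, and you honestly flag that establishing this inequality is the nontrivial core of the cited reference. Your scaling derivation of the ``in particular'' clause is also correct (the computation $\|X\|_2^2=mn-1/\tr(\rho^2)\le1$ is exactly the needed bound). So there is nothing to compare against in the paper itself---you have supplied a proof sketch where the paper simply invokes the literature, and your sketch matches the approach of the original source.
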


\begin{lemma}
	\label{le:rankrhoA=m}	
 Suppose $\r\in\app_{m,n}$. Then

(i) \begin{eqnarray}
	\label{eq:e1}
	\lambda_1\le \min\{\lambda_{mn-1}+2\sqrt{\lambda_{mn-2}\lambda_{mn}}
,\frac{3}{2+mn}\}.
\end{eqnarray}
In particular, $\l_1=\frac{3}{2+mn}$ if and and only if $\l=(\frac{3}{2+mn},\frac{1}{2+mn},\cdots,\frac{1}{2+mn})$. In this case,  $\r\in \as_{m,n}$, and it is an extreme point of both $\app_{m,n}$ and $\as_{m,n}$.

(ii)  $\r$ has deficient rank if and only if $\l(\r)=({1\over mn-1},\cdots,{1\over mn-1},0)$. In this case, $\r\in \as_{m,n}$, and it is an extreme point of both $\app_{m,n}$ and $\as_{m,n}$.
\end{lemma}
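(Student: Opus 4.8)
The plan is to read the two displayed extremal spectra as the solutions of a single spectral optimization problem governed by the necessary condition
\[
\lambda_1\le\lambda_{mn-1}+2\sqrt{\lambda_{mn-2}\lambda_{mn}},
\]
which is exactly the first bound in (i). This inequality holds for every $\rho\in\app_{m,n}$: it is one of Hildebrand's defining linear matrix inequalities \cite{hildebrand2007positive}, equivalently the $\app_{2,2}$ condition of Table \ref{tab} applied to the $2\times2$ corner built from the eigenvalues $\lambda_1,\lambda_{mn-2},\lambda_{mn-1},\lambda_{mn}$. I would take this as given and use only it, together with $\lambda_1\ge\cdots\ge\lambda_{mn}\ge0$ and $\sum_i\lambda_i=1$, to extract the remaining upper bound in (i) and the forced spectrum in (ii); the separability and extremality assertions are handled afterwards.

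For the second bound $\lambda_1\le\frac3{2+mn}$ I would set $a=\lambda_{mn-2}\ge b=\lambda_{mn-1}\ge c=\lambda_{mn}$ and chain three elementary estimates. From $2\sqrt{ac}\le a+c$ the displayed inequality gives $\lambda_1\le a+b+c$; from $\sqrt{ac}\le a$ and $b\le a$ it gives $\lambda_1\le 3a$, i.e. $a\ge\lambda_1/3$; and since the $mn-3$ eigenvalues $\lambda_2,\dots,\lambda_{mn-2}$ are each $\ge a$, normalization yields $\lambda_1+(mn-3)a+b+c\le 1$. Substituting $b+c\ge\lambda_1-a$ into the last estimate gives $2\lambda_1+(mn-4)a\le 1$, and then $a\ge\lambda_1/3$ (legitimate since $mn\ge4$ for $m,n\ge2$) gives $\lambda_1\,\frac{mn+2}{3}\le 1$, i.e. $\lambda_1\le\frac3{2+mn}$. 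Tracing the equalities, $2\sqrt{ac}=a+c$ forces $a=c$ hence $a=b=c$; the normalization estimate is tight only when $\lambda_2=\cdots=\lambda_{mn-2}=a$; and $a=\lambda_1/3$ gives $\lambda_1=3a$. With $\sum_i\lambda_i=1$ this pins down $a=\frac1{2+mn}$ and $\lambda=(\frac3{2+mn},\frac1{2+mn},\dots,\frac1{2+mn})$, the claimed equality case.

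Part (ii) is the degenerate case $c=\lambda_{mn}=0$ of the same inequality: it collapses to $\lambda_1\le\lambda_{mn-1}$, while the ordering forces $\lambda_1\ge\lambda_{mn-1}$, so $\lambda_1=\cdots=\lambda_{mn-1}$ and hence $\lambda=(\frac1{mn-1},\dots,\frac1{mn-1},0)$ after normalization; conversely this spectrum visibly has deficient rank. For such $\rho=\frac1{mn-1}(I_{mn}-\proj\psi)$ one has $\tr(\rho^2)=\frac1{mn-1}$, so Lemma \ref{le:I+X=AS} (the maximal ball, at its boundary) already gives $\rho\in\as_{m,n}$. Extremality in $\app_{m,n}$ (hence in $\as_{m,n}$) follows from the kernel: if $\rho=t\rho_1+(1-t)\rho_2$ with $\rho_i\in\app_{m,n}$ and $t\in(0,1)$, then $\bra\psi\rho\ket\psi=0$ forces $\psi\in\ker\rho_1\cap\ker\rho_2$, so each $\rho_i$ is deficient-rank with the same spectrum and one-dimensional kernel $\mathrm{span}\{\psi\}$, whence $\rho_1=\rho_2=\rho$.

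The genuinely delicate point is $\rho\in\as_{m,n}$ in the equality case of (i): here $\tr(\rho^2)=\frac{mn+8}{(mn+2)^2}>\frac1{mn-1}$ for $mn>4$, so the maximal ball does \emph{not} reach $\rho$, and a separate argument is needed. Since absolute separability is spectral, I must show that $\frac1{mn+2}(I_{mn}+2\proj\phi)$ is separable for every unit $\phi$. Writing it as proportional to $\proj\phi+\frac{mn}2\cdot\frac{I_{mn}}{mn}$, this is the statement that $\phi$ tolerates unnormalized white noise of weight $\frac{mn}2$ before becoming entangled; because tolerance to white noise is maximized over pure states by the maximally entangled $\ket\Phi$ of Schmidt rank $m$, it suffices to treat $\phi=\Phi$. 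For $\Phi$, after splitting off a separable identity piece on the complementary block, the state is supported on an $m\times m$ subsystem where it becomes an isotropic state of fidelity $\frac3{m^2+2}\le\frac1m$ (the inequality being $(m-1)(m-2)\ge0$), hence separable; threading this back yields $\frac{mn}2\ge n\ge R_r(\phi)$ for all $\phi$ (using $m\ge2$). This reduction to the maximally entangled/isotropic case, together with the monotonicity of noise tolerance it rests on, is the main obstacle; everything else is elementary. Extremality in (i) then mirrors (ii) but via convexity of the largest-eigenvalue functional: if $\rho=t\rho_1+(1-t)\rho_2$ then $\frac3{2+mn}=\lambda_1(\rho)\le t\lambda_1(\rho_1)+(1-t)\lambda_1(\rho_2)\le\frac3{2+mn}$ forces $\lambda_1(\rho_i)=\frac3{2+mn}$, so each $\rho_i$ carries the unique maximizing spectrum and shares the (simple) top eigenvector $\psi$ of $\rho$, giving $\rho_1=\rho_2=\rho$.
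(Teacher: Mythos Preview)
Your derivation of $\lambda_1\le\frac{3}{2+mn}$ and its equality case from the Hildebrand inequality is a clean elementary argument; the paper simply cites \cite{2015Positive} for this. Your treatment of part (ii) and of both extremality claims also matches the paper's reasoning (convexity of the top or bottom eigenvalue, followed by pinning down the distinguished eigenvector).

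The genuine gap is in your proof that the equality-case state of (i) lies in $\as_{m,n}$. Your reduction hinges on the claim that the random robustness $R_r(\phi)$ is maximized over pure states by the full-Schmidt-rank maximally entangled vector $\Phi$, with $R_r(\Phi)=n$ and hence $R_r(\phi)\le n$ for all $\phi$. This is false for $m\ge 3$: the Vidal--Tarrach formula for pure states is $R_r(\phi)=mn\,a_1a_2$, where $a_1\ge a_2$ are the two largest Schmidt coefficients, so the Schmidt-rank-two vector $\frac{1}{\sqrt2}(\ket{00}+\ket{11})$ has $R_r=\frac{mn}{2}>n$, and your chain $\frac{mn}{2}\ge n\ge R_r(\phi)$ breaks at the second inequality. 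What you actually need is only the outer inequality $R_r(\phi)\le\frac{mn}{2}$, i.e.\ that $\proj{\phi}+\frac12 I_{mn}$ is separable for every unit $\phi$; this is precisely the Vidal--Tarrach random-robustness bound recalled in Section~\ref{ase}, and it is saturated by the Schmidt-rank-two state, not by $\Phi$. The paper itself does not give a self-contained argument here either---it cites \cite[Proposition~8.2]{2015Positive}---so the simplest repair is to invoke that result (or Vidal--Tarrach) directly rather than route through the isotropic state.
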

\begin{proof}
(i) The inequality $\l_1\le \lambda_{mn-1}+2\sqrt{\lambda_{mn-2}\lambda_{mn}}$ follows from \cite{hildebrand2007positive}. The inequality $\l_1\le \frac{3}{2+mn}$ and subsequent claim are obtained from \cite[Proposition 8.2]{2015Positive}. To demonstrate that the state $\r:=\diag(\frac{3}{2+mn},\frac{1}{2+mn},\cdots,\frac{1}{2+mn})$ is an extreme point of $\app_{m,n}$, we assume the contrary, i.e., $\rho=p\a+(1-p)\b$, where $p\in (0,1)$ and $\a,\b\in \app_{m,n}$ are linearly independent. We have $\frac{3}{2+mn}\le p\l_1(\a)+(1-p)\l_1(\b)$. It follows from (\ref{eq:e1}) that $\l_1(\a)=\l_1(\b)=\frac{3}{2+mn}$ and thus $\l(\a)=\l(\b)=(\frac{3}{2+mn},\frac{1}{2+mn},\cdots,\frac{1}{2+mn})$. Consequently, by Schur Theorem, the first diagonal entries of both $\a,\b$ are $\frac{3}{2+mn}$, with the remaining entries in the first row and column 
being zero. Since the last $mn-1$ eigenvalues of $\a,\b$ are identical, 
the last order-$(mn-1)$ principal submatrices of $\a,\b$  are proportional to the identity
matrix. This leads to that $\a=\b=\r$, which contradicts the initial assumption. Hence $\r$ is an extreme point of $\app_{m,n}$, and consequently, an extreme point of $\as_{m,n}$.

(ii) The first claim follows from \cite[Proposition 1]{arunachalam2014absolute}. The claim that $\rho\in \as_{m,n}$ follows from Lemma \ref{le:I+X=AS}. Suppose $\rho:=\diag(\frac{1}{mn-1},\cdots,\frac{1}{mn-1},0)$ is a non-extreme point of $\app_{m,n}$, that is, $\rho=p\g+(1-p)\eta$, where $p\in (0,1)$ and $\a,\b\in \app_{m,n}$ are linearly independent. We have $0=\lambda_{mn}(\rho)\ge p\lambda_{mn}(\g)+(1-p)\lambda_{mn}(\eta)$. This implies that $\lambda_{mn}(\g)=\lambda_{mn}(\eta)=0$, leading to $\l(\g)=\l(\eta)=(\frac{1}{mn-1},\cdots,\frac{1}{mn-1},0)$. By employing a similar approach as in (i), one can verify that $\g=\eta=\r$. This is a contradiction. So $\r$ is an extreme point of $\app_{m,n}$, and of $\as_{m,n}$.
  \end{proof}

\begin{theorem}
	\label{maj}
	Suppose the two states $\sigma,\r$ satisfy $\sigma\in \as_{m,n}$ (resp. $\app_{m,n}$) and $\sigma\succ \r$. Then $\r\in \as_{m,n}$ (resp. $\r\in\app_{m,n}$). Further, if $\l(\r)\neq \l(\sigma)$, then $\r$ is a non-extreme point of $\as_{m,n}$ (resp. $\app_{m,n}$). 
\end{theorem}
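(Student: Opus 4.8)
The plan is to reduce both assertions to statements about the eigenvalue vectors and to exploit the fact that the set of vectors majorized by a fixed vector is a permutohedron. As recorded in Section~\ref{sec:pre}, absolute separability (resp.\ PPT) is a spectral property and the property of being a (non-)extreme point is invariant under unitary conjugation, so I may assume without loss of generality that $\sigma=\diag(\l(\sigma))$ and $\r=\diag(\l(\r))$ are already diagonal; the hypothesis $\sigma\succ\r$ then reads $\l(\sigma)\succ\l(\r)$.

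For the first assertion I would invoke the Hardy--Littlewood--P\'olya characterization of majorization: $\l(\sigma)\succ\l(\r)$ means that $\l(\r)$ lies in the convex hull of the permutations of $\l(\sigma)$, i.e.\ $\l(\r)=\sum_k p_k P_k\l(\sigma)$ with $p_k\ge 0$, $\sum_k p_k=1$, and $P_k$ permutation matrices. Each $\diag(P_k\l(\sigma))$ shares its spectrum with $\sigma$, hence lies in $\as_{m,n}$ (resp.\ $\app_{m,n}$) because the property is spectral. Since $\diag(\l(\r))=\sum_k p_k\diag(P_k\l(\sigma))$ is a convex combination of these states and $\as_{m,n}$ (resp.\ $\app_{m,n}$) is convex, we conclude $\r\in\as_{m,n}$ (resp.\ $\app_{m,n}$).

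For the second assertion I would observe that the set $\{z:z\prec\l(\sigma)\}$ is a compact convex polytope, the permutohedron of $\l(\sigma)$, whose extreme points are exactly the permutations of $\l(\sigma)$. Because $\l(\r)$ is arranged in non-increasing order, the hypothesis $\l(\r)\neq\l(\sigma)$ forces $\l(\r)$ to differ from every permutation of $\l(\sigma)$, so $\l(\r)$ is not an extreme point of this polytope. Hence there exist distinct vectors $\mu\neq\nu$ with $\mu,\nu\prec\l(\sigma)$ and a scalar $t\in(0,1)$ such that $\l(\r)=t\mu+(1-t)\nu$. Being convex combinations of permutations of the nonnegative vector $\l(\sigma)$, both $\mu$ and $\nu$ have nonnegative entries and the same component-sum as $\l(\sigma)$, so $\diag(\mu)$ and $\diag(\nu)$ are genuine states; by the first part they belong to $\as_{m,n}$ (resp.\ $\app_{m,n}$). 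Then $\r=t\diag(\mu)+(1-t)\diag(\nu)$ with $\diag(\mu)\neq\diag(\nu)$, and since these two states carry equal trace they are linearly independent, witnessing that $\r$ is a non-extreme point.

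The main obstacle is the convex-geometric input in the second part, namely that the only extreme points of $\{z:z\prec\l(\sigma)\}$ are the permutations of $\l(\sigma)$ (a theorem of Rado). If a self-contained argument is preferred, one can instead produce the decomposition by hand: since $\l(\r)\neq\l(\sigma)$ while $\l(\r)\prec\l(\sigma)$, at least one of the partial-sum inequalities defining majorization must be strict, and this slack can be used to exhibit a nonzero vector $\Delta$ and an $\e>0$ with $\l(\r)\pm\e\Delta\prec\l(\sigma)$, after which the same splitting applies. Care is only needed to verify that the two perturbed vectors remain majorized by $\l(\sigma)$ and nonnegative, which reduces to checking the partial-sum inequalities and can be handled with Lemma~\ref{hjk}.
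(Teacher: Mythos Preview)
Your proof is correct. For the first assertion you essentially reprove the cited result via Hardy--Littlewood--P\'olya, whereas the paper simply quotes \cite[Lemma 2.2]{2015Positive}. For the second assertion the paper takes a shorter and somewhat different route: instead of reducing to diagonal states and invoking Rado's characterization of the permutohedron's vertices, it applies Uhlmann's theorem directly to write $\rho=\sum_j p_j U_j\sigma U_j^\dagger$ with each $U_j\sigma U_j^\dagger\in\as_{m,n}$ (spectral property), and then observes that if $\rho$ were extreme, every summand would have to coincide with $\rho$, forcing $\l(\rho)=\l(\sigma)$. Your approach yields an explicit two-term decomposition of $\rho$ and avoids Uhlmann, at the price of citing Rado (or producing the perturbation $\Delta$ by hand as you sketch); the paper's approach sidesteps the permutohedron entirely but needs Uhlmann's theorem, which is in any case the matrix analogue of the Hardy--Littlewood--P\'olya step you already use for the first part. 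Both routes are standard; the paper's is marginally shorter because extremality immediately collapses the Uhlmann mixture.
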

\begin{proof}
	The first fact follows from \cite[Lemma 2.2]{2015Positive}.
	Since $\sigma\succ \r$, by Uhlmann’s theorem \cite{alberti1982stochasticity}, there exist unitary matrices $U_j$ and a probability distribution $\{p_j\}$ such that $\r=\sum_{j}p_jU_j\sigma U_j^\dagger$. Let $\rho$ and $\sigma$ have distinct eigenvalues. If $\r$ is an extreme point of $\as_{m,n}$, then it is linearly dependent with $U_j\sigma U_j^\dg$, and thus $\l(\r)=\l(\sigma)$. This is a contradiction. 
\end{proof}

\begin{theorem}
\label{le:mxn=AS}
If $\r\in \as_{m,n}$ (resp. $\app_{m,n}$), then the unnormalized state $\r_{\cK}\in \as_{p,q}$ (resp. $\app_{p,q}$) for any $\cK=\{i_1,i_2,...,i_{pq}\}\subseteq \{1,2,...,mn\}$, where $1<p\le m$ and $1<q\le n$.
\end{theorem}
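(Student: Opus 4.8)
The plan is to exploit two structural facts established earlier: membership in $\as_{p,q}$ and $\app_{p,q}$ is a spectral property (hence invariant under all unitaries, in particular under permutation matrices), and the unitary orbit defining absolute separability/PPT of the compressed state can be \emph{lifted} to a unitary orbit of $\r$ itself. First I would reduce to a product index set. Since replacing $\r$ by $\Pi\r\Pi^\dg$ and $\cK$ by $\Pi(\cK)$ for a permutation $\Pi$ of $\{1,\dots,mn\}$ leaves the eigenvalues of $\r_\cK$ unchanged and preserves $\as_{m,n}$ (resp. $\app_{m,n}$), I may choose $\Pi$ sending $\cK$ onto $\cK_0:=\{(i,j): 1\le i\le p,\ 1\le j\le q\}$. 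Thus it suffices to treat $\cK=\cK_0$, for which the compression projector factorizes as $P_0:=P_p\ox P_q$, with $P_p$ (resp. $P_q$) the coordinate projection $\bbC^m\to\bbC^p$ (resp. $\bbC^n\to\bbC^q$).

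Next, for an arbitrary unitary $V$ on $\bbC^p\ox\bbC^q$, I would form the block-diagonal extension $\tilde V:=V\op I$, acting as $V$ on $\mathrm{Ran}(P_0)$ and as the identity on its complement. Because $\tilde V$ commutes with $P_0$, one checks the key lifting identity $V\r_{\cK_0}V^\dg=P_0(\tilde V\r\tilde V^\dg)P_0$. Now $\r\in\as_{m,n}$ (resp. $\app_{m,n}$) forces $\tilde V\r\tilde V^\dg$ to be separable (resp. PPT), so the whole problem collapses to showing that compression by the \emph{product} projector $P_0=P_p\ox P_q$ preserves separability (resp. PPT). For separability this is immediate: a product term $\proj{a}\ox\proj{b}$ maps to $\proj{P_p a}\ox\proj{P_q b}$, again a product of positive operators. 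For PPT I would invoke the intertwining identity $(\text{id}\ox T)(P_0 X P_0)=P_0\,(\text{id}\ox T)(X)\,P_0$, valid since $P_p,P_q$ are real diagonal and hence transpose-invariant, so positivity of the partial transpose is inherited by the compression. As $V$ was arbitrary, this gives $\r_{\cK_0}\in\as_{p,q}$ (resp. $\app_{p,q}$), whence the theorem.

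The step demanding the most care is exactly the reduction to a product index set. For a general $\cK$ the subspace $\mathrm{span}\{\ket{i}\ox\ket{j}:(i,j)\in\cK\}$ carries no canonical tensor structure, and its compression projector need not factor as $P_p\ox P_q$; without that factorization neither the separability nor the PPT compression argument goes through, since compressing onto a non-product subspace can create entanglement. It is precisely the interplay between permutation invariance and the spectral nature of $\as_{m,n}$ and $\app_{m,n}$ that lets me move any $\cK$ to $\cK_0$ while leaving $\lambda(\r_\cK)$ fixed, thereby legitimizing the product factorization $P_0=P_p\ox P_q$. Pinning down this reduction, together with a clean verification of the lifted identity $V\r_{\cK_0}V^\dg=P_0(\tilde V\r\tilde V^\dg)P_0$, is where the argument must be made rigorous; everything else is then routine.
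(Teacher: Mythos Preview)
Your proposal is correct and follows essentially the same route as the paper: reduce by a permutation of $\{1,\dots,mn\}$ to the product index set $\cK_0$, lift an arbitrary unitary $V$ on $\bbC^p\ox\bbC^q$ to the block-diagonal $\tilde V=V\op I$ on $\bbC^{mn}$, and then observe that the product compression $P_p\ox P_q$ sends separable (resp.\ PPT) states to separable (resp.\ PPT) states. The paper carries this out with the specific choice $\cK'=\cup_{k=0}^{p-1}\{kn+1,\dots,kn+q\}$ and the explicit identity $V\r'_{\cK'}V^\dg=\sum_j p_j (A_j)_{\{1,\dots,p\}}\ox (B_j)_{\{1,\dots,q\}}$, which is exactly your lifted-identity-plus-compression step written out in coordinates.
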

\begin{proof}
We prove the claim for $\as_{m,n}$, the claim for $\app_{m,n}$ can be proved similarly.  Let 
$\cK'=\cup_{k=0}^{p-1}\{kn+1,\cdots,kn+q\}$. Since $\r\in \as_{m,n}$, by permuting the rows and columns of $\r$, we obtain another state $\r'\in\as_{m,n}$, which satisfies that $\r'_{\cK'}=\r_{\cK}$. We next prove that $\r'_{\cK'}\in \as_{p,q}$.

Given any order-$pq$ unitary matrix $V$, let the corresponding order-$mn$ unitary matrix $U$ such that $U_{\cK'}=V$ and $U_{(\cK')^c}=I_{mn-pq}$. 
We have $U\r' U^\dg$ is separable, i.e.,  it can be written as $\sum_j p_jA_j\otimes B_j$, where $p_j$ is a probability distribution, $A_j\ge \mathcal{O}$ has order $m$ and 
$B_j\ge \mathcal{O}$ has order $n$. Let $\cK_1'=\{1,\cdots,p\}$ and $\cK_2'=\{1,\cdots,q\}$. Through direct matrix computation, one can verify that $V\r'_{\cK'}V^\dg=\sum_jp_j ({A_j})_{\cK_1'}\otimes ({B_j})_{\cK_2'}$, where 
$({A_j})_{\cK_1'}\ge \mathcal{O}$ has order $p$ and $({B_j})_{\cK_2'}\ge \mathcal{O}$ has order $q$. This implies that $V\r'_{\cK'}V^\dg$ is separable. Thus $\r'_{\cK'}\in\as_{p,q}$. This completes the proof.
\end{proof}

{\bf Remark.} The converse of the above claim may not hold. An example is $\r=\frac{1}{84}\diag(15,14,9,9,9,9,9,9,1)$. One can verify that $\r_{\cK}\in\app_{2,2}$ for any $\cK=\{i_1,i_2,i_3,i_4\}\subseteq\{1,\cdots,9\}$, but $\r\notin \app_{3,3}$ according to the criterion in \cite{hildebrand2007positive}.

In the final part of this section, we present a property of non-extreme points in $\as_{m,n}$ and $\app_{m,n}$. The proof of the following theorem will be given in Appendix \ref{pt1012}.

\begin{theorem}
	\label{thm:main}	
	Suppose the diagonal state $\r\in \as_{m,n}$ (reps. $\app_{m,n}$) is a non-extreme point. Then 
	
	(i) there exist two linearly independent diagonal states $\a,\b\in \as_{m,n}$ (reps. $\app_{m,n}$) such that $\r=t\a+(1-t)\b$ for $t\in (0,1)$. 
	
	(ii) for any $\epsilon>0$, there exist two linearly independent diagonal states $\a',\b'\in \as_{m,n}$ (reps. $\app_{m,n}$) such that $\r=\frac{1}{2}(\a'+\b')$, where $||\r-\a'||_2=||\r-\b'||_2<\epsilon$.
\end{theorem}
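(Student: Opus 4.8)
The plan is to treat part (i) as the substance and to derive part (ii) from it. For part (i), begin with the definition of a non-extreme point: there are linearly independent $\a,\b\in\as_{m,n}$ (resp.\ $\app_{m,n}$) and $t\in(0,1)$ with $\r=t\a+(1-t)\b$; taking the states normalized, all have trace $1$. Since $\r$ is diagonal, taking diagonal parts gives $\r=\diag(\r)=t\,\diag(\a)+(1-t)\,\diag(\b)$. The central observation is that the diagonal parts stay in the set: by the Schur theorem $\a\succ\diag(\a)$ and $\b\succ\diag(\b)$, so Theorem \ref{maj}, applied to $\a,\b\in\as_{m,n}$, yields $\diag(\a),\diag(\b)\in\as_{m,n}$. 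If $\diag(\a)\neq\diag(\b)$, then, having equal trace, they are linearly independent diagonal states in $\as_{m,n}$ and we are finished.

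The one obstruction, which I expect to be the main difficulty, is the degenerate case where taking diagonal parts collapses the decomposition, i.e.\ $\diag(\a)=\diag(\b)=\r$. Then $\a,\b$ share the diagonal of $\r$, and since $\a\neq\b$ at least one of them, say $\a$, is non-diagonal. I would first show $\l(\a)\neq\l(\r)$: from $\tr(\a^2)=\sum_i\l_i(\a)^2=\sum_{i,j}|\a_{ij}|^2$ together with $\diag(\a)=\r$, the equality $\l(\a)=\l(\r)$ would force $\sum_{i\neq j}|\a_{ij}|^2=0$, contradicting non-diagonality. Combined with the Schur bound $\l(\a)\succ\l(\r)$ this gives strict majorization. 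Putting $\sigma:=\diag(\l(\a))$, the fact that absolute separability is a spectral property gives $\sigma\in\as_{m,n}$, with $\sigma\succ\r$ and $\l(\sigma)\neq\l(\r)$. After conjugating by a permutation matrix (which preserves both diagonality and membership) we may assume $\r=\diag(\l(\r))$. Since $\l(\r)$ is majorized by, but is not a permutation of, $\l(\sigma)$, the Hardy--Littlewood--P\'olya theorem writes $\l(\r)$ as a nontrivial convex combination of permutations of $\l(\sigma)$; lifting to diagonal matrices yields $\r=\sum_k p_k\,\Pi_k\sigma\Pi_k^{\top}$, where each $\Pi_k\sigma\Pi_k^{\top}$ is a diagonal state of spectrum $\l(\sigma)$, hence lies in $\as_{m,n}$ and differs from $\r$. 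Collecting one summand as the first state and the normalized remainder as the second produces the two required linearly independent diagonal states.

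For part (ii), I would perturb the decomposition from part (i) symmetrically. Given diagonal, linearly independent $\a,\b\in\as_{m,n}$ with $\r=t\a+(1-t)\b$, set $\D:=\a-\b\neq\mathcal{O}$, a diagonal matrix; convexity of $\as_{m,n}$ makes the whole segment $\r+u\D$, $u\in[-t,1-t]$ (whose endpoints are $\b$ and $\a$), lie in $\as_{m,n}$. Choosing $s>0$ with $s\le\min\{t,1-t\}$ and $s\,||\D||_2<\e$, the diagonal states $\a':=\r+s\D$ and $\b':=\r-s\D$ lie in $\as_{m,n}$, satisfy $\r=\tfrac12(\a'+\b')$ and $||\r-\a'||_2=||\r-\b'||_2=s\,||\D||_2<\e$, and are linearly independent because $\D\neq\mathcal{O}$. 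Finally, every tool invoked---convexity, the spectral (unitary-invariant) nature of the sets, the Schur theorem, Theorem \ref{maj}, and Hardy--Littlewood--P\'olya---applies verbatim to $\app_{m,n}$, so the same argument settles the parenthetical PPT claims.
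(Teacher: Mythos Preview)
Your proof is correct and follows essentially the same route as the paper: in (i) you pass to diagonal parts via Schur's theorem and Theorem~\ref{maj}, and in the degenerate case you exploit strict majorization $\l(\a)\succ\l(\r)$ together with Hardy--Littlewood--P\'olya to rewrite $\r$ as a nontrivial convex combination of permutations of $\diag(\l(\a))$; the paper does the same, phrasing the last step as ``Uhlmann's theorem'' with permutation matrices. Your part (ii) is a slightly cleaner symmetric shrinking of the segment than the paper's explicit $\delta$-construction, but the idea is identical.
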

\section{Extreme Points of $\as_{2,n}$ and $\app_{2,n}$}
\label{ex2n}
In this section, we provide a characterization of extreme points for both $\as_{2,n}$ and $\app_{2,n}$, as these two sets are identical. Recall that $\r \in \as_{2,n}$ if and only if 
\begin{eqnarray}
	\label{eq:2xnAS}	
	\lambda_1\le \lambda_{2n-1}+2\sqrt{\lambda_{2n-2}\lambda_{2n}},
\end{eqnarray}
or equivalently,
\begin{eqnarray}
	\label{drt}
	\bma 2\l_{2n}&&\l_{2n-1}-\l_1\\\l_{2n-1}-\l_1&&2\l_{2n-2}\ema\ge \mathcal{O}.
	\end{eqnarray}

A characterization of $\as_{2,n}$ asserts that
 $\r\in \as_{2,n}$ is a boundary point if the inequality (\ref{eq:2xnAS}) is saturated, or equivalently, the matrix in (\ref{drt}) has rank one,
otherwise $\r$ is an interior point \cite{halder2021characterizing}. In the same article, it has also been shown that there exist boundary points of $\as_{2,n}$ that are non-extreme. Here, we first present another property about boundary points of $\as_{2,2}$. 
\begin{theorem}
	\label{plpl}
	Let $\r$ be a boundary point of $\as_{2,2}$. If the state $\sigma\succ \r$ with $\l(\sigma)\neq \l(\r)$, then $\sigma \notin \as_{2,2}$.
\end{theorem}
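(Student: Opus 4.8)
The plan is to read off the slack in the absolute-separability inequality as a single scalar function of the spectrum, show that this function is strictly Schur-concave, and then apply Lemma \ref{scco}. Since membership in $\as_{2,2}$ depends only on the eigenvalues, I would first use unitary invariance to assume $\r$ and $\sigma$ are diagonal with eigenvalues listed in non-increasing order; write $\l=\l(\r)$ and $\mu=\l(\sigma)$, both elements of $\cZ_4$. Define
\[
f(x)=x_3+2\sqrt{x_2x_4}-x_1,\qquad x\in\cZ_4 .
\]
By the form \eqref{eq:2xnAS} with $n=2$, a state with spectrum $x$ lies in $\as_{2,2}$ if and only if $f(x)\ge 0$, and, by the boundary characterization recalled just above the theorem, it is a boundary point exactly when the inequality is saturated, i.e. $f(x)=0$. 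Thus the hypothesis that $\r$ is a boundary point translates to $f(\l)=0$, and the desired conclusion $\sigma\notin\as_{2,2}$ is precisely $f(\mu)<0$.

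The central step is to prove that $f$ is strictly Schur-concave on $\cZ_4$. The function is continuous on $\cZ_4$ and continuously differentiable on the interior $\cZ_4^\circ$, where a direct computation gives
\[
\frac{\partial f}{\partial x_1}=-1,\quad
\frac{\partial f}{\partial x_2}=\sqrt{x_4/x_2},\quad
\frac{\partial f}{\partial x_3}=1,\quad
\frac{\partial f}{\partial x_4}=\sqrt{x_2/x_4}.
\]
On $\cZ_4^\circ$ one has $x_2>x_4>0$, so $0<\sqrt{x_4/x_2}<1<\sqrt{x_2/x_4}$, and hence $\tfrac{\partial f}{\partial x_1}<\tfrac{\partial f}{\partial x_2}<\tfrac{\partial f}{\partial x_3}<\tfrac{\partial f}{\partial x_4}$ throughout $\cZ_4^\circ$. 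Lemma \ref{scco} then upgrades this pointwise derivative inequality on the open cone to strict Schur-concavity of $f$ on the whole closed cone $\cZ_4$.

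With strict Schur-concavity in hand the conclusion follows quickly: the assumption $\sigma\succ\r$ means $\mu\succ\l$, and $\l(\sigma)\neq\l(\r)$ forces $\mu\neq\l$; since both vectors are already sorted, $\mu$ cannot be a permutation of $\l$. Strict Schur-concavity then yields $f(\mu)<f(\l)=0$, i.e. $\mu_1>\mu_3+2\sqrt{\mu_2\mu_4}$, which is exactly the violation of \eqref{eq:2xnAS}, so $\sigma\notin\as_{2,2}$.

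The only delicate point I anticipate is that $\l$ or $\mu$ may sit on the boundary of $\cZ_4$ — for instance with a vanishing smallest eigenvalue (the deficient-rank case $\l=(\tfrac13,\tfrac13,\tfrac13,0)$, where still $f(\l)=0$) or with repeated eigenvalues — precisely where $f$ fails to be differentiable and the derivative test does not apply directly. This is handled automatically, since Lemma \ref{scco} concludes strict Schur-concavity on the closed cone $\cZ_4$ from the derivative inequality verified only on $\cZ_4^\circ$; thus the strict comparison $f(\mu)<f(\l)$ is valid even at such boundary spectra. The remaining content is the elementary ordering of the four partial derivatives, which is immediate from $x_2>x_4$.
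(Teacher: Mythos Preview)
Your proof is correct and follows essentially the same route as the paper: define the slack function $f(x)=x_3+2\sqrt{x_2x_4}-x_1$, compute the four partial derivatives on $\cZ_4^\circ$, verify their strict ordering, invoke Lemma~\ref{scco} to obtain strict Schur-concavity on $\cZ_4$, and conclude $f(\mu)<f(\l)=0$. Your additional paragraph on boundary spectra makes explicit a point the paper leaves implicit in its appeal to Lemma~\ref{scco}, but the argument is otherwise identical.
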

\begin{proof}
	Define the real-valued function $f$ on $\cZ_4$ by $f(x):=x_{3}+2\sqrt{x_{2} x_{4}}-x_{1}$. Since $\r$ is a boundary point, we have $f(\l(\r))=0$. Notice that $f$ is continuous and continuously differentiable on $\cZ_4^\circ$. Further, for any $z\in \cZ_4^\circ$, we have $\frac{\partial f(z)}{\partial z_1}=-1, \frac{\partial f(z)}{\partial z_2}=\sqrt{\frac{z_{4}}{z_2}},\frac{\partial f(z)}{\partial z_3}=1$, and 
		$\frac{\partial f(z)}{\partial z_4}=\sqrt{\frac{z_{2}}{z_4}},
	$
	which implies $\frac{\partial f(z)}{\partial z_1}<\cdots<\frac{\partial f(z)}{\partial z_4}$. By using Lemma \ref{scco} on $\cZ_4$, we obtain that $f$ is strictly Schur-concave. 
Since $\l(\sigma)\succ \l(\r)$, where the two vectors are distinct, we have $f(\l(\sigma))<f(\l(\r))=0$. This implies that $\sigma\notin\as_{2,2}$ as it violates (\ref{eq:2xnAS}).
\end{proof}

{\bf Remark.} The above claim may not hold for the boundary point $\r\in\as_{2,n}$. For instance, consider $\sigma=\frac{1}{10}\diag(3,3,1,1,1,1)\in \as_{2,3}$ and $\r=\frac{1}{10}\diag(3,2,2,1,1,1)\in \as_{2,3}$.

We now characterize the extreme points of $\as_{2,2}$ in Theorem \ref{ef22}, and then generalize the results to $\as_{2,n}$ in Theorem \ref{mmain}. The detailed proof of the following two theorems are contained in Appendix \ref{mtp}.

\begin{theorem}
	\label{ef22}
	The state $\rho\in \as_{2,2}$ ($\app_{2,2}$) is an extreme point if and only if the following two conditions hold:
	
	(I) $\r$ is a boundary point, i.e., $\lambda_1=\lambda_{3}+2\sqrt{\lambda_{2}\lambda_{4}}$, 
	
	(II) at least two of $\l_1,\l_2,\l_3,\l_4$ are equal.
\end{theorem}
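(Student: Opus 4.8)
The plan is to reduce everything to the diagonal eigenvalue vector and to analyze the single function $f(x):=x_3+2\sqrt{x_2x_4}-x_1$ on $\cZ_4$, which governs membership in $\as_{2,2}$ via the inequality (\ref{eq:2xnAS}). Since extremality depends only on the spectrum, I would write $\r=\diag(\l)$ with $\l\in\cZ_4$ and $\sum_i\l_i=1$. The necessity of (I) is then immediate: any extreme point is a boundary point, and by the characterization of boundary points of $\as_{2,2}$ the inequality (\ref{eq:2xnAS}) is saturated there, which is exactly (I). The substance is to prove that, among boundary points, extremality is equivalent to (II).

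First I would record two structural facts about $f$: it is concave on the positive orthant (its only nonlinear term $2\sqrt{x_2x_4}$ is concave), and, by the gradient computation in the proof of Theorem \ref{plpl} together with Lemma \ref{scco}, it is strictly Schur-concave on $\cZ_4$. Using Theorem \ref{thm:main}(i) I may assume that any nontrivial decomposition $\r=t\a+(1-t)\b$ uses diagonal $\a,\b\in\as_{2,2}$. Combining the majorization relation $\l(\a+\b)\prec\l(\a)+\l(\b)$ recalled in the preliminaries with the concavity and strict Schur-concavity of $f$, and with $f(\l(\r))=0$ on the boundary, I would show that every such decomposition forces $f(\l(\a))=f(\l(\b))=0$ together with $\l(\r)=t\,\l(\a)+(1-t)\,\l(\b)$ as sorted vectors. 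Replacing $\a,\b$ by $\diag(\l(\a))$ and $\diag(\l(\b))$ then yields the clean reformulation: a boundary point $\r=\diag(\l)$ is non-extreme if and only if $\l$ lies in the relative interior of a line segment contained in the boundary surface $\{x\in\cZ_4:\ \sum_i x_i=1,\ f(x)=0\}$.

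Next I would pin down the admissible directions of such a segment. A short concavity argument (a concave function that agrees with its chord at an interior point is affine, and here zero, on the segment) shows $f\equiv 0$ along the whole segment, so its direction $v$ must make $2\sqrt{x_2x_4}$ affine; this forces $v_2/\l_2=v_4/\l_4$, and together with the vanishing of the linear part of $f$ and with $\sum_iv_i=0$ it determines $v$ up to scale as
\begin{eqnarray}
\notag
&&w=\left(-\tfrac{1}{2}\left(\sqrt{\l_2}-\sqrt{\l_4}\right)^2,\ \l_2,\ -\tfrac{1}{2}\left(\sqrt{\l_2}+\sqrt{\l_4}\right)^2,\ \l_4\right).
\end{eqnarray}
I would then check that for every boundary point the consecutive differences $w_1-w_2$, $w_2-w_3$, $w_3-w_4$ are all nonzero: each is a sum of terms of one sign, vanishing only if $\l_2=\l_4=0$, i.e. $\l=(1,0,0,0)$, which violates $f(\l)=0$. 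The segment $\l+sw$ stays sorted near $s=0$ on both sides precisely when the chain $\l_1\ge\l_2\ge\l_3\ge\l_4\ge0$ is strict: if $\l_j=\l_{j+1}$ then, since $w_j\neq w_{j+1}$, the constraint $s(w_j-w_{j+1})\ge0$ permits only one sign of $s$, so $\l$ is an endpoint and $\r$ is extreme; conversely, if all $\l_i$ are distinct (a boundary point then cannot have $\l_4=0$, since $\l_4=0$ forces $\l_1=\l_3$) the segment extends both ways and $\r$ is non-extreme. This establishes that, on the boundary, extremality is equivalent to some two eigenvalues coinciding, i.e. to (II), completing both directions when combined with the necessity of (I).

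The main obstacle I anticipate is the bookkeeping around the ordering constraint: the decomposition must be phrased in terms of sorted eigenvalue vectors, and one must argue both that the segment direction is unique (equal to $w$ up to scale) and that the sign of each $w_j-w_{j+1}$ obstructs a two-sided extension exactly when an equality $\l_j=\l_{j+1}$ occurs. A minor separate point is the rank-deficient case $\l_4=0$: here (I) forces $\l=(\tfrac13,\tfrac13,\tfrac13,0)$, which already satisfies (II) and is extreme by Lemma \ref{le:rankrhoA=m}(ii), so it is consistent with and subsumed by the general argument.
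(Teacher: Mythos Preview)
Your approach is correct and takes a genuinely different route from the paper. The paper argues the ``if'' part by case analysis: two-distinct-eigenvalue boundary points are handled individually (one via Lemma~\ref{le:rankrhoA=m}(i), the other via an ad hoc majorization argument), and each of the three three-distinct-eigenvalue sub-cases is treated by invoking Theorem~\ref{thm:main}(ii), averaging over a repeated-eigenvalue block via Lemma~\ref{hjk}(ii), and then applying the rank-one Lemma~\ref{foc} to the $2\times2$ matrix in (\ref{drt}) to force linear dependence. Your argument instead identifies the unique trace-zero tangent direction $w$ to the boundary surface $\{f=0\}$, checks once that the consecutive differences $w_j-w_{j+1}$ never vanish, and concludes that any equality $\l_j=\l_{j+1}$ obstructs a two-sided segment through $\l$; this replaces all of the paper's sub-cases by a single computation. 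Incidentally, the paper's explicit perturbation $\diag(\l_1+x,\l_2,\l_3+x,\l_4)$ used for the ``only if'' part becomes, after renormalizing to the trace-one simplex, exactly your direction $w$ (one checks $1-2\l_1=(\sqrt{\l_2}-\sqrt{\l_4})^2$ and $1-2\l_3=(\sqrt{\l_2}+\sqrt{\l_4})^2$ on the boundary), so the two constructions agree.

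There is one small gap you should close explicitly. After deducing $\l(\r)=t\,\l(\a)+(1-t)\,\l(\b)$ and $f(\l(\a))=f(\l(\b))=0$, your ``clean reformulation'' tacitly assumes $\l(\a)\neq\l(\b)$ so that you actually obtain a nondegenerate segment; but a priori the diagonal states $\a,\b$ could be linearly independent while sharing the same spectrum. This is easily dispatched: if $\l(\a)=\l(\b)=\l$, then the diagonal of $\a$ is a permutation of $\l$, and from $\l_1=t\a_1+(1-t)\b_1$ with $\a_1,\b_1\le\l_1$ one gets $\a_1=\b_1=\l_1$; iterating down the positions forces $\a=\b=\r$, contradicting independence. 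With that sentence added, your argument is complete.
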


\begin{theorem}
	\label{mmain}
The state $\rho\in \as_{2,n}$ ($\app_{2,n}$) is an extreme point if and only if the following three conditions hold:

(I) $\rho$ is a boundary point, i.e, $\lambda_1=\lambda_{2n-1}+2\sqrt{\lambda_{2n-2}\lambda_{2n}}$,

(II) $\lambda_i\in \{\lambda_1,\lambda_{2n-2}\}$
for any $2\le i\le 2n-3$,

(III) at least two of $\lambda_1,\lambda_{2n-2},\lambda_{2n-1},\lambda_{2n}$ are equal.  
\end{theorem}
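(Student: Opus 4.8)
The plan is to reduce the characterization of extreme points of $\as_{2,n}$ to the already-established characterization of extreme points of $\as_{2,2}$ in Theorem \ref{ef22}, using the submatrix result (Theorem \ref{le:mxn=AS}) and the majorization machinery (Theorem \ref{maj}, Theorem \ref{plpl}) to control the eigenvalues. Throughout, I would work with the diagonal form $\r=\diag(\l_1,\dots,\l_{2n})$, which is justified by the unitary invariance of extremality noted in the preliminaries. The three conditions split naturally: (I) is just the boundary condition from the known description of $\as_{2,n}$, while (II) and (III) pin down how the eigenvalue multiplicities must be arranged.

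First I would prove necessity. Assume $\r$ is an extreme point. Since any extreme point is a boundary point, (I) follows immediately from the boundary characterization of $\as_{2,n}$ recalled in (\ref{eq:2xnAS})--(\ref{drt}). For (II) and (III), I would argue by contradiction: if some intermediate eigenvalue $\l_i$ (with $2\le i\le 2n-3$) lies strictly between $\l_{2n-2}$ and $\l_1$, or if no two of $\l_1,\l_{2n-2},\l_{2n-1},\l_{2n}$ coincide, then I would construct two linearly independent diagonal states $\a,\b\in\as_{2,n}$ with $\r=\frac12(\a+\b)$, contradicting extremality. The key device is to perturb only a well-chosen quadruple of eigenvalues while holding the rest fixed. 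Concretely, I expect that perturbing within the four ``corner'' indices $\{1,2n-2,2n-1,2n\}$ keeps the boundary matrix (\ref{drt}) positive semidefinite to first order (this is exactly the content of the $\as_{2,2}$ extreme-point analysis), and that any intermediate eigenvalue strictly between $\l_1$ and $\l_{2n-2}$ gives room to move it both up and down without violating (\ref{eq:2xnAS}), since that inequality only constrains $\l_1,\l_{2n-2},\l_{2n-1},\l_{2n}$. I would invoke Theorem \ref{thm:main} to guarantee that such perturbations can be taken among diagonal states.

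Next I would prove sufficiency. Suppose (I)--(III) hold and write $\r=t\a+(1-t)\b$ with $\a,\b\in\as_{2,n}$ linearly independent, aiming for a contradiction; by Theorem \ref{thm:main}(i) I may take $\a,\b$ diagonal. Condition (II) forces the intermediate eigenvalues to be ``frozen'' at the extreme values $\l_1$ or $\l_{2n-2}$, so the only freedom is in the corner quadruple $\diag(\l_1,\l_{2n-2},\l_{2n-1},\l_{2n})$; here is where I would extract from $\a,\b$ their induced states on the appropriate index set via Theorem \ref{le:mxn=AS}, reducing the convex decomposition of $\r$ to a convex decomposition of an $\as_{2,2}$ extreme point guaranteed by Theorem \ref{ef22} together with conditions (I) and (III). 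The majorization bound $\l(\a)+\l(\b)\succ\l(\r)$ combined with the strict Schur-concavity established in Theorem \ref{plpl} would then force $\l(\a)=\l(\b)=\l(\r)$, after which the Schur-Theorem argument used in the proof of Lemma \ref{le:rankrhoA=m} pins the off-diagonal structure and yields $\a=\b=\r$.

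The main obstacle I anticipate is the bookkeeping in sufficiency: showing that conditions (II) and (III) genuinely prevent any nontrivial decomposition, rather than merely the ``corner'' ones. The subtlety is that a decomposition $\r=t\a+(1-t)\b$ need not respect the block structure I have in mind, so I must rule out decompositions that trade eigenvalue mass between an intermediate index and the corner quadruple. The remark following Theorem \ref{plpl} shows that strict Schur-concavity fails for $n\ge 3$ boundary points in general, so I cannot simply quote Theorem \ref{plpl} for the full $2n$-dimensional vector; instead I must localize the Schur-concavity argument to the four corner eigenvalues while using (II) to immobilize the rest. Correctly isolating this four-dimensional ``active'' subspace, and verifying that the frozen intermediate eigenvalues cannot be disturbed without leaving $\as_{2,n}$, is the delicate step that ties the whole proof together.
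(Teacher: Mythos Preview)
Your necessity argument is essentially the paper's: perturb a strictly intermediate eigenvalue for (II), and perturb along a direction that keeps the boundary equation $\lambda_1=\lambda_{2n-1}+2\sqrt{\lambda_{2n-2}\lambda_{2n}}$ satisfied for (III). One small caveat: in (III) you cannot perturb only the single index $1$, because condition (II) forces $\lambda_1=\cdots=\lambda_k$ for some $k$, and moving $\lambda_1$ alone would break the ordering. The paper perturbs the whole block $\lambda_1,\dots,\lambda_k$ together with $\lambda_{2n-1}$ by the same amount; this is a minor point but worth noting.

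For sufficiency, your plan diverges from the paper and is more convoluted than necessary. After writing $\rho=\tfrac12(M+N)$ with $M,N$ diagonal (via Theorem~\ref{thm:main}), the paper does \emph{not} use majorization or Schur-concavity at all. Instead it applies Theorem~\ref{le:mxn=AS} to \emph{every} index set $\cK=\{a,b,2n-1,2n\}$ with $a\in\{1,\dots,k\}$ and $b\in\{k+1,\dots,2n-2\}$. For each such $\cK$, conditions (I) and (III) make $\rho_\cK$ an extreme point of $\as_{2,2}$ by Theorem~\ref{ef22}, so $M_\cK$ and $N_\cK$ are scalar multiples of $\rho_\cK$. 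Because the indices $2n-1,2n$ are common to all these sets and $\lambda_{2n}>0$ in the full-rank case, the scalar is the same for every $\cK$; letting $a,b$ range forces $M,N$ to be globally proportional to $\rho$. This completely sidesteps the obstacle you flag: you never need to ``immobilize'' the intermediate eigenvalues via a localized Schur-concavity argument, because varying $\cK$ constrains them directly.

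Your proposed route is not clearly wrong, but the step ``condition (II) forces the intermediate eigenvalues to be frozen'' is exactly what needs proof, and you correctly observe that Theorem~\ref{plpl} does not extend to $n\ge3$. Also note that since $M,N$ are already diagonal, there is no ``off-diagonal structure'' to pin via the Schur Theorem as in Lemma~\ref{le:rankrhoA=m}; that final step in your sketch is superfluous. The varying-$\cK$ trick is both simpler and complete.
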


{\bf Remark.} Conditions (I)-(III) imply that every extreme point of $\as_{2,n}$ has at most three distinct eigenvalues. The above theorem can also be expressed as follows: the state $\r\in\as_{2,n}$ is an extreme point if and only if $\lambda_i\in \{\lambda_1,\lambda_{2n-2}\}$
for any $2\le i\le 2n-3$, and the (unnormalized) state $\diag(\l_1,\l_{2n-2},\l_{2n-1},\l_{2n})$ is an extreme point of $\as_{2,2}$.

Recalling from the definition of maximal ball in Lemma \ref{le:I+X=AS},  we say that the state $\r$ resides on (or outside) the maximal ball if $\tr(\r^2)\le\frac{1}{mn-1}$(or $\tr(\r^2)>\frac{1}{mn-1}$).
The following result characterizes the relationship between the maximal ball and extreme points of $\as_{2,n}$. The proof is also given in Appendix \ref{mtp}.

\begin{corollary}
	\label{mbn}
	(i) Let $\r$ be an extreme point of $\as_{2,2}$. Then $\r$ resides on the maximal ball if and only if $\l(\r)=(\frac{1}{3},\frac{1}{3},\frac{1}{3},0)$ or $(\frac{1}{2},\frac{1}{6},\frac{1}{6},\frac{1}{6})$.
	
	(ii) Let $\r$ be an extreme point of $\as_{2,n}$ ($n>2$). Then $\r$ resides on the maximal ball if and only if  $\l(\r)=(\frac{1}{2n-1},\cdots,\frac{1}{2n-1},0)$.
\end{corollary}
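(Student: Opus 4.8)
My first move is a clean reduction that handles both parts at once: I claim that an extreme point resides on the maximal ball if and only if it lies exactly on the bounding sphere $\tr(\r^2)=\frac{1}{mn-1}$. For this I would show that any state with $\tr(\r^2)<\frac{1}{mn-1}$ is an interior point of $\as_{m,n}$ and hence not extreme. Such a state is necessarily full rank, since a rank-deficient state has at most $mn-1$ nonzero eigenvalues summing to $1$ and so satisfies $\tr(\r^2)\ge\frac{1}{mn-1}$ by Cauchy--Schwarz. Being full rank, the family $\sigma_\epsilon=\frac{1}{1-\epsilon}(\r-\frac{\epsilon}{mn}I_{mn})$ consists of genuine states for small $\epsilon>0$, and $\tr(\sigma_\epsilon^2)\to\tr(\r^2)<\frac{1}{mn-1}$ as $\epsilon\to 0^+$, so $\sigma_\epsilon\in\as_{m,n}$ by Lemma~\ref{le:I+X=AS}, witnessing interiority. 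Consequently every extreme point obeys $\tr(\r^2)\ge\frac{1}{mn-1}$, and ``residing on the ball'' (i.e.\ $\le$) forces equality. Both (i) and (ii) thus reduce to identifying the extreme points on the sphere $\tr(\r^2)=\frac{1}{mn-1}$.

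Next I would verify the candidate states and set up completeness. The two spectra in (i) and the one in (ii) are precisely the extreme points supplied by Lemma~\ref{le:rankrhoA=m}, specialized to $mn=4$ and $mn=2n$ respectively; a direct trace computation confirms each lies on the sphere, and it also shows that for $mn=2n>4$ the state $(\frac{3}{2+mn},\frac{1}{2+mn},\dots)$ no longer does, which is exactly why it drops out of (ii). For completeness I would invoke the explicit descriptions in Theorem~\ref{ef22} and Theorem~\ref{mmain}: the spectrum is pinned down by the boundary equality, by condition (II), and by the coincidence of at least two of the four ``core'' eigenvalues $\lambda_1,\lambda_{2n-2},\lambda_{2n-1},\lambda_{2n}$. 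I would then split into the coincidence cases $\lambda_1=\lambda_{2n-2}$, $\lambda_{2n-2}=\lambda_{2n-1}$, and $\lambda_{2n-1}=\lambda_{2n}$.

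In each case the strategy is the same: normalize via $\sum_i\lambda_i=1$ to fix the overall scale, eliminate the square root in the boundary equality (write $d=\frac{(a-c)^2}{4a}$ type relations), and reduce the sphere equation to a single polynomial in one ratio parameter $t$ (for instance $t=\lambda_1/\lambda_{2n-2}$ or $\lambda_{2n-1}/\lambda_1$). For $\as_{2,2}$ this polynomial factors as $[(t-1)(t-3)]^2=0$, yielding exactly the two claimed states. For the flat-bulk case of $\as_{2,n}$ (where $\lambda_1=\lambda_{2n-2}$) it factors as $(t-1)^2\big(Nt^2-(2N+8)t+9N\big)=0$ with $N=2n-2$, whose quadratic factor has discriminant $-32(N-2)(N+1)$, negative precisely when $n>2$. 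Hence only $t=1$ survives (recovering $\r_0$) for $n>2$, whereas at $n=2$ the extra root $t=3$ reproduces $(\frac12,\frac16,\frac16,\frac16)$; the single factorization therefore explains the dichotomy between (i) and (ii). In the remaining coincidence cases (with the bulk no longer flat) the reduced polynomial is of the form (a repeated factor vanishing at the degenerate flat configuration) times (a factor that stays strictly positive on the admissible interval $[t_{\min},t_{\max}]$), so $\tr(\r^2)>\frac{1}{mn-1}$ there and no new sphere points arise.

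The technical heart, and the step I expect to be the main obstacle, is exactly this completeness computation: cleanly removing the square root, tracking eigenvalue multiplicities when lifting from the $\as_{2,2}$ core to $\as_{2,n}$, and proving in each case that the governing polynomial has no admissible root beyond the one giving $\r_0$. What makes it go through is the recurring algebraic phenomenon that the polynomial carries a double root at the flat configuration (reflecting that $\r_0$ is the unique extreme point minimizing $\tr(\r^2)$, a tangency with the ball) together with a complementary factor whose discriminant is negative exactly for $n>2$; verifying this positivity on the admissible parameter range is where the real work lies.
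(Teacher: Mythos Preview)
Your plan is correct and will work, and its backbone (the case split via Theorems~\ref{ef22} and~\ref{mmain}, then a one-parameter computation of $\tr(\r^2)$) is the same as the paper's. There are two genuine differences worth noting. First, your opening reduction --- that any state with $\tr(\r^2)<\frac{1}{mn-1}$ is an interior point, hence not extreme --- is a clean conceptual step that the paper does not isolate; the paper instead recovers this inequality case by case as a by-product of its explicit formulas. Second, and more importantly, the paper's case computations are far lighter than your square-root-elimination and discriminant scheme: it simply sets $k=\l_{2}/\l_{4}$ (resp.\ $k=\l_{2n-2}/\l_{2n}$), keeps $\sqrt{k}$ as a variable, and obtains closed forms such as $\tr(\r^2)-\tfrac13=\frac{2(\sqrt{k}-1)^2 k}{3(k+2\sqrt{k}+3)^2}$, which are manifestly nonnegative with equality only at $k=1$; one line then finishes each case. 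So the paper trades your polynomial factorizations for a direct ``complete the square in $\sqrt{k}$'' that makes the sign obvious. Your route buys a uniform explanation of why $n=2$ is special (the discriminant $-32(N-2)(N+1)$), while the paper's buys brevity and avoids the bookkeeping of tracking admissible parameter ranges after squaring. A small caution: your narrative that ``$t=1$ recovers $\r_0$'' is not quite right --- in the full-rank parametrization $t=1$ is a spurious boundary root where an eigenvalue degenerates to zero; the rank-deficient state is handled separately via Lemma~\ref{le:rankrhoA=m}(ii), as you note elsewhere.
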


\section{Extreme points of $\app_{3,n}$}
\label{sec:3x3}
In this section, we investigate the extreme points of $\app_{3,n}$. Our focus will primarily be on the full-rank extreme points, as the deficient-rank case has already been addressed in Lemma \ref{le:rankrhoA=m} (ii).  Similar to the structure of Section \ref{ex2n}, we begin with the two-qutrit system, and then generalize the results to the qutrit-qudit system.

Let
$
\cZ_9^{+}:=\{x\in \bbR^9:x_1\ge\cdots\ge x_9>0\}.
$
Define the linear maps $L_1, L_2:\cZ_9^{+}\rightarrow \cS_3$ and functions $l_1, l_2:\cZ_9^{+}\rightarrow \bbR$ as 
\begin{eqnarray}
	\notag
	\label{3e1}
	&&L_1(x):=\bma 2x_9&&x_8-x_1&&x_6-x_2\\x_8-x_1&&2x_7&&x_5-x_3\\x_6-x_2&&x_5-x_3&&2x_4\ema, \quad l_1(x):=\det L_1(x),\\
\label{3e2}
	&&L_2(x):=\bma 2x_9&&x_8-x_1&&x_7-x_2\\x_8-x_1&&2x_6&&x_5-x_3\\x_7-x_2&&x_5-x_3&&2x_4\ema, \quad l_2(x):=\det L_2(x).
	\end{eqnarray}
	It is known that 
the full-rank state $\rho$ belongs to $\app_{3,3}$ if and only if $L_1(\l)\ge \mathcal{O}$ and
$L_2(\l)\ge \mathcal{O}$ \cite{hildebrand2007positive}. We first propose some properties of boundary points in $\app_{3,3}$. 

\begin{lemma}
	\label{le:L1+L21}
If $\r$ is a boundary point of $\app_{3,3}$, then at least one of $l_1(\l),l_2(\l)$ equals to zero.
\end{lemma}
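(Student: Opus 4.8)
The plan is to relate being a boundary point to the failure of positive definiteness of $L_1(\l)$ and $L_2(\l)$, exploiting that $L_1,L_2$ are linear maps satisfying $L_1(\mathbf{1})=L_2(\mathbf{1})=2I_3$, where $\mathbf{1}:=(1,\dots,1)\in\bbR^9$ (indeed, setting all $x_i=1$ kills every off-diagonal difference $x_i-x_j$ and leaves the diagonal equal to $2$). First I would dispose of the deficient-rank case: by Lemma \ref{le:rankrhoA=m}(ii) a deficient-rank $\r\in\app_{3,3}$ has $\l=(\frac18,\dots,\frac18,0)$, for which the first row of $L_1(\l)$ vanishes (since $x_9=0$, $x_8-x_1=0$, $x_6-x_2=0$), so $l_1(\l)=\det L_1(\l)=0$ and there is nothing more to prove. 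Thus I may assume $\r$ is full rank and invoke the criterion that $\r\in\app_{3,3}$ iff $L_1(\l)\ge\mathcal{O}$ and $L_2(\l)\ge\mathcal{O}$ \cite{hildebrand2007positive}.

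Next I would track how $L_1,L_2$ behave under the contraction toward the maximally mixed state that defines interior versus boundary points. Writing $\r_\e:=\frac{1}{1-\e}(\r-\e\frac19 I_9)$, its eigenvalue vector is $\l(\r_\e)=\frac{1}{1-\e}(\l-\frac\e9\mathbf{1})$, whose entries stay positive, non-increasingly ordered, and of unit trace for all small $\e>0$, so $\r_\e$ is again a full-rank state. Using linearity of $L_k$ together with $L_k(\mathbf{1})=2I_3$, I obtain the clean identity
\begin{eqnarray}
	\notag
	L_k(\l(\r_\e))=\frac{1}{1-\e}\Big(L_k(\l)-\frac{2\e}{9}I_3\Big),\qquad k=1,2,
\end{eqnarray}
so that $\r_\e\in\app_{3,3}$ if and only if $L_1(\l)\ge\frac{2\e}{9}I_3$ and $L_2(\l)\ge\frac{2\e}{9}I_3$.

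Then the conclusion follows by contraposition. By the displayed identity, $\r$ is an interior point exactly when some $\e>0$ satisfies $L_k(\l)\ge\frac{2\e}{9}I_3$ for $k=1,2$, which happens exactly when $\l_3(L_1(\l))>0$ and $\l_3(L_2(\l))>0$, i.e.\ when both $L_1(\l)$ and $L_2(\l)$ are positive definite. Hence if $\r$ is a boundary point, at least one of $L_1(\l),L_2(\l)$ is not positive definite; being also positive semidefinite (because $\r\in\app_{3,3}$), that matrix is singular, and therefore $l_1(\l)=\det L_1(\l)=0$ or $l_2(\l)=\det L_2(\l)=0$, which is the assertion.

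The main obstacle is making the equivalence ``boundary point $\Leftrightarrow$ some $L_k(\l)$ is singular'' fully rigorous, which rests on two quantitative checks rather than on any deep idea. First, I must confirm that $\r_\e$ is an admissible full-rank state for all small $\e>0$ (positivity needs $\e<9\l_9$, ordering and unit trace are automatic), so that the full-rank criterion of \cite{hildebrand2007positive} genuinely applies to $\r_\e$. Second, when both $L_k(\l)$ are positive definite I must exhibit an admissible witness, namely any $\e$ below $\min\{9\l_9,\ \frac92\l_3(L_1(\l)),\ \frac92\l_3(L_2(\l))\}$, in order to certify interiority. Once these thresholds are pinned down, the linearity identity carries the entire argument.
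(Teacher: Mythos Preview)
Your proposal is correct and follows essentially the same approach as the paper: both argue by contraposition, assuming $l_1(\l),l_2(\l)>0$, then use the linearity of $L_k$ together with $L_k(\mathbf 1)=2I_3$ to show that $\sigma=\frac{1}{1-\e}(\r-\frac{\e}{9}I_9)$ still satisfies $L_k(\l(\sigma))>\mathcal O$, certifying that $\r$ is an interior point. Your version is simply more explicit---you separately treat the deficient-rank case (which the paper leaves implicit) and spell out the admissible range for $\e$, but the underlying argument is identical.
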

 \begin{proof}
	We know that $l_1(\l),l_2(\l)\ge 0$. Assume that 
	$l_1(\l), l_2(\l)>0$, and consequently, $L_1(\l),L_2(\l)>\mathcal{O}$. There exists a small enough $\epsilon>0$ such that $L_1(\l)-\frac{2\epsilon}{9} I_3>\mathcal{O}$ and $L_2(\l)-\frac{2\epsilon}{9} I_3>\mathcal{O}$. Let $\sigma=\frac{1}{1-\epsilon}(\r-\epsilon \frac{1}{9}I_9)$. It follows that $L_1(\sigma),L_2(\sigma)>\mathcal{O}$, thus $\sigma\in \app_{3,3}$. This implies that $\r$ is an interior point, which is a contradiction. 
\end{proof}

The proof of the following two theorems will be given in Appendix \ref{profL1+L2}.

\begin{theorem}
	\label{3bb}
Let $\r$ be a boundary point of $\app_{3,3}$. If the state 
		$\sigma \succ \rho$ with $\l(\sigma)\neq \l(\r)$,
		then $\sigma\notin \app_{3,3}$.
\end{theorem}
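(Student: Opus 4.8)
The plan is to mimic the strategy used for the analogous two-qubit statement in Theorem \ref{plpl}, adapting it to the more intricate spectral conditions defining $\app_{3,3}$. The key idea is that the functions $l_1$ and $l_2$ of \eqref{3e2}, viewed as functions of the eigenvalue vector $x$ on $\cZ_9^+$, should be \emph{strictly Schur-concave} on the relevant region. Indeed, if $\r$ is a boundary point then by Lemma \ref{le:L1+L21} at least one of $l_1(\l),l_2(\l)$ vanishes; say $l_1(\l)=0$. If I can show that the relevant determinant function is strictly Schur-concave, then $\l(\sigma)\succ\l(\r)$ with $\l(\sigma)\ne\l(\r)$ forces the corresponding determinant to strictly decrease below zero, so $L_1(\l(\sigma))\not\ge\mathcal{O}$, whence $\sigma\notin\app_{3,3}$.

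First I would set up the determinant functions explicitly on $\cZ_9^+$ and attempt to verify the hypotheses of Lemma \ref{scco}, namely that the partial derivatives satisfy $\partial f/\partial z_1 < \cdots < \partial f/\partial z_9$ strictly on the interior. Here $f$ is $l_1$ (or $l_2$). Expanding $\det L_1(x)$ as a cubic polynomial in the $x_i$ and differentiating, one checks that $\partial l_1/\partial x_j$ is, up to sign, a cofactor-type expression; the monotonicity of these partials in the index $j$ is what must be established. A subtlety is that $l_1$ depends on all nine variables, but each entry of $L_1$ involves only a few of them, so the chain rule yields derivatives built from the $2\times 2$ cofactors of $L_1(x)$. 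I expect the ordering of these partials to follow from positivity of those cofactors (which holds near the boundary where $L_1\ge\mathcal{O}$ but $\det L_1=0$, so $L_1$ has rank two), combined with the specific pattern of $\pm 1$ coefficients attached to $x_1,\dots,x_9$ in the matrix entries.

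The main obstacle I anticipate is that strict Schur-concavity of $l_1$ (or $l_2$) may fail globally on $\cZ_9^+$, or the partial derivatives may not be strictly ordered everywhere, so the clean argument of Theorem \ref{plpl} might not transfer verbatim. In that case I would argue locally: since $\sigma\succ\r$, by Uhlmann's theorem (as used in Theorem \ref{maj}) we have $\r=\sum_j p_j U_j\sigma U_j^\dg$, and more usefully $\l(\sigma)$ and $\l(\r)$ can be connected by a finite chain of \emph{T-transforms}, each acting on only two coordinates. It then suffices to prove the determinant strictly decreases under a single T-transform that moves $\l(\sigma)$ strictly toward $\l(\r)$, which reduces the problem to a one-parameter concavity check in two variables rather than a global nine-variable statement. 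This localization is the standard way to circumvent failures of the Schur-Ostrowski criterion on a non-convex or boundary region.

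Finally, I would need to handle both $L_1$ and $L_2$ symmetrically: whichever of $l_1(\l),l_2(\l)$ is zero at the boundary point $\r$ is the one I track, and I show its value becomes strictly negative at $\l(\sigma)$, so the corresponding positive-semidefiniteness condition characterizing $\app_{3,3}$ is violated. One care point is that majorization $\l(\sigma)\succ\l(\r)$ only gives inequalities on partial sums, so I must confirm that the strict Schur-concavity (or the per-T-transform decrease) genuinely uses $\l(\sigma)\ne\l(\r)$ to obtain a \emph{strict} inequality rather than merely $l_i(\l(\sigma))\le 0$; this strictness is exactly what rules out $\sigma$ lying on the boundary as well and delivers $\sigma\notin\app_{3,3}$.
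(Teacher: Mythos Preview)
Your general strategy is right in spirit, and you correctly anticipate that the Schur--Ostrowski route will not transfer verbatim. The genuine gap is in the T-transform fallback. The per-step monotonicity of $l_1$ (or $l_2$) under a single spreading move at coordinates $(i,j)$ is \emph{not} a global fact on $\cZ_9^+$: several of the 35 coordinate pairs require auxiliary inequalities such as \eqref{det45}--\eqref{zllt}, and those are established in Lemma \ref{pf3e}(iii),(iv) \emph{only under the hypothesis $l_1(x)=0$} (resp.\ $l_2(x)=0$). So once the first step of your chain leaves the boundary, the argument for the next step no longer applies and the chain cannot be continued. There is also a directional slip: T-transforms move $\l(\sigma)$ toward $\l(\r)$, so Schur-concavity would make $l_1$ \emph{increase} along the chain toward $l_1(\l(\r))=0$, yielding only $l_1(\l(\sigma))\le 0$ rather than the strict inequality you need.

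The paper bypasses the chain entirely by a single-step-plus-majorization trick. Given $\l(\sigma)\succ\l(\r)$, one locates the first index $a$ where a partial-sum gap opens and the first index $b>a$ where it closes, and performs \emph{one} small spreading move at positions $(a,b)$ starting from $\l(\r)$ to obtain $\eta\in\cZ_9^+$ with $l_1(\eta)<0$; this is Lemma \ref{ppp}, a case-by-case computation that relies on Lemma \ref{pf3e} and hence on being at the boundary. The specific choice of $(a,b)$ guarantees $\l(\sigma)\succ\eta$, so if $\sigma\in\app_{3,3}$ then Theorem \ref{maj} would force $\diag(\eta)\in\app_{3,3}$, contradicting $l_1(\eta)<0$. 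Only one step is ever taken, precisely at the boundary point where the needed inequalities hold. You should also be aware of the exceptional pair $(a,b)=(6,7)$: since $x_6,x_7$ swap roles between $L_1$ and $L_2$, this pair is excluded from Lemma \ref{ppp}(i),(iii) and must be handled separately (Lemma \ref{ppp}(ii),(iv) when $\l_6>\l_7$, and a direct $l_1+l_2$ computation when $\l_6=\l_7$). Without this bifurcation the argument is incomplete.
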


\begin{theorem}
	\label{nonbou}
	Let $\r=\diag(\l_1,\cdots,\l_9)$ be a boundary point of $\app_{3,3}$. Then $\r$ is a non-extreme point if and only if $\r=\frac{1}{2}(\a+\b)$, where $\a,\b\in \app_{3,3}
$ are two linearly independent diagonal states whose diagonal entries are both in non-increasing order.
\end{theorem}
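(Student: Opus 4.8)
The plan is to handle the two directions separately, with essentially all of the work in the forward implication. The reverse direction is immediate: if $\r=\frac12(\a+\b)$ with linearly independent $\a,\b\in\app_{3,3}$, then by definition $\r$ is a non-extreme point, and the diagonality and ordering of $\a,\b$ are not needed here. For the forward direction I would start from the symmetric decomposition already produced by Theorem \ref{thm:main}(ii): since $\r$ is a non-extreme diagonal point, for every $\epsilon>0$ there exist linearly independent diagonal states $\a',\b'\in\app_{3,3}$ with $\r=\frac12(\a'+\b')$ and $\norm{\r-\a'}_2=\norm{\r-\b'}_2<\epsilon$. The single property these summands might lack is that their diagonal entries are non-increasing, and supplying this is the entire content of the theorem.

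Writing $\a'=\diag(a_1,\dots,a_9)$ and $\b'=\diag(b_1,\dots,b_9)$, so $a_i+b_i=2\l_i$, I would group $\{1,\dots,9\}$ into the maximal blocks on which $\l$ is constant and define $\mu,\nu$ to be the diagonal states obtained from $\a',\b'$ by replacing the entries in each block by their average over that block. Because membership in $\app_{3,3}$ depends only on the spectrum, every within-block permutation conjugate $P_\pi\a'P_\pi^\dagger$ lies in $\app_{3,3}$, and $\mu=\frac{1}{|G|}\sum_{\pi\in G}P_\pi\a'P_\pi^\dagger$ (with $G$ the group of within-block permutations) is a convex combination of such conjugates; convexity of $\app_{3,3}$ gives $\mu,\nu\in\app_{3,3}$, with $\mu+\nu=2\r$ and both constant on blocks. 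Since block-averaging is a contraction toward $\r$, choosing $\epsilon$ below half the minimal gap between consecutive distinct values of $\l$ forces the block values of $\mu$ and of $\nu$ to stay strictly ordered across blocks, so $\mu$ and $\nu$ have non-increasing diagonal entries.

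Everything then reduces to a dichotomy, and the hard part will be excluding the degenerate alternative. Either $\mu\neq\nu$, in which case $(\a,\b)=(\mu,\nu)$ is exactly the required sorted, linearly independent decomposition with $\r=\frac12(\mu+\nu)$; or $\mu=\nu$, which together with $\mu+\nu=2\r$ forces $\mu=\nu=\r$. To rule the latter out I would use the boundary hypothesis through Theorem \ref{3bb}: if $\mu=\r$, then $\r=\frac{1}{|G|}\sum_{\pi\in G}P_\pi\a'P_\pi^\dagger$ is a mixture of unitary conjugates of $\a'$, whence $\a'\succ\r$ by the standard majorization fact. Moreover $\a'\neq\r$ (since $\a',\b'$ are linearly independent with $\a'+\b'=2\r$), and for small $\epsilon$ the close state $\a'$ cannot be a nontrivial permutation of $\l$, so $\l(\a')\neq\l(\r)$. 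As $\r$ is a boundary point, Theorem \ref{3bb} then yields $\a'\notin\app_{3,3}$, contradicting $\a'\in\app_{3,3}$. Hence $\mu\neq\nu$, and the proof is complete.

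The technical care lies entirely in the repeated-eigenvalue situation, which I expect to be the main obstacle: sortedness of a $\frac12$-decomposition forces both summands to be constant on each block of $\r$, so a generic decomposition must first be symmetrized over within-block permutations, and this symmetrization is harmless precisely because $\app_{3,3}$ is a permutation-invariant (spectral) convex set. The role of Theorem \ref{3bb} is then to guarantee that, for a non-extreme boundary point, the symmetrization cannot collapse the two summands into $\r$.
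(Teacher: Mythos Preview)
Your proposal is correct and follows essentially the same route as the paper: start from the close symmetric diagonal decomposition of Theorem~\ref{thm:main}(ii), average within each constant-eigenvalue block of $\l$ to obtain sorted summands, and use Theorem~\ref{3bb} to rule out the collapse $\mu=\nu=\r$. The only cosmetic difference is the justification for block-averaging: you argue via convexity and permutation invariance of $\app_{3,3}$ (writing $\mu$ as $\frac{1}{|G|}\sum_{\pi}P_\pi\a'P_\pi^\dagger$), whereas the paper invokes Lemma~\ref{hjk}(ii) together with Theorem~\ref{maj} to pass from $\a'\succ\mu$ to $\mu\in\app_{3,3}$; these are equivalent, and your formulation handles several repeated-eigenvalue blocks at once a bit more cleanly.
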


We are prepared to provide a necessary and sufficient condition for identifying whether a boundary point of $\app_{3,3}$ is an extreme point.
Subsequently, we investigate more properties of extreme points in $\app_{3,3}$. The proof of the following theorem and corollary will also be detailed in Appendix \ref{profL1+L2}.

\begin{theorem}
	\label{judge}
	Let $\r$ be a full-rank boundary point of $\app_{3,3}$. 
	Let $U:=\bma U_1&U_2&U_3\ema$ and $V:=\bma V_1&V_2&V_3\ema$ (written as column vectors) be the order-three real orthogonal matrices 
	such that
\begin{eqnarray}
	\notag
	\label{D1}
	&&U^T L_1(\l) U= D_1,\\
	\label{D2}
	&&V^T L_2(\l) V= D_2,
\end{eqnarray}
	where $D_1,D_2\ge \mathcal{O}$ are diagonal matrices with diagonal entries in non-increasing order.
	
	(i) Suppose $l_1(\l)=0, l_2(\l)>0$. 
Then $\r$ is an extreme point if and only if the following linear equations in terms of $t_1,\cdots,t_9$ have only the trivial solution (i.e., $t_i$ are all equal to zero):
	\begin{eqnarray}
		\label{fn1}
  &&\sum_{i=1}^9 t_i=0,\\
   \label{fn3}
  &&t_k=t_{k+1} \text{ whenever }  \l_k=\l_{k+1},\\
  \label{fn2}
  &&\bma 2t_9&&t_8-t_1&&t_6-t_2\\t_8-t_1&&2t_7&&t_5-t_3\\t_6-t_2&&t_5-t_3&&2t_4\ema \cdot U_3=\bma 0\\0\\0\ema.
\end{eqnarray}

(ii) Suppose $l_1(\l)>0, l_2(\l)=0$. 
Then $\r$ is an extreme point if and only if Eqs.(\ref{fn1}), (\ref{fn3}), and the following linear equation in terms of $t_1,\cdots,t_9$ have only the trivial solution:
\begin{eqnarray}
\label{kn2}
	&&\bma 2t_9&&t_8-t_1&&t_7-t_2\\t_8-t_1&&2t_6&&t_5-t_3\\t_7-t_2&&t_5-t_3&&2t_4\ema \cdot V_3=\bma 0\\0\\0\ema.
\end{eqnarray}

(iii) Suppose $l_1(\l)=0, l_2(\l)=0$.
Then $\r$ is an extreme point if and only if Eqs.(\ref{fn1})-(\ref{kn2}) in terms of $t_1,\cdots,t_9$ have only the trivial solution.
\end{theorem}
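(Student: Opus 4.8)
The plan is to recast extremality as a homogeneous linear-algebra condition by means of Theorem \ref{nonbou}, and then read off the three cases from a perturbation analysis of the constraints $L_1(\l)\ge\mathcal{O}$, $L_2(\l)\ge\mathcal{O}$. By Theorem \ref{nonbou}, the boundary point $\r$ is non-extreme if and only if $\r=\frac{1}{2}(\a+\b)$ for two linearly independent diagonal states $\a,\b\in\app_{3,3}$ whose diagonals are both non-increasing. Writing $t=(t_1,\dots,t_9):=\a-\r=\r-\b$, this is equivalent to the existence of a nonzero $t$ for which the diagonal states $\r\pm t$ both lie in $\app_{3,3}$ with non-increasing diagonals. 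The non-increasing requirement is essential: it guarantees that the eigenvalue vector of $\r\pm t$ equals its diagonal $\l\pm t$, so that by linearity of $L_1,L_2$ the membership conditions become exactly $L_1(\l)\pm L_1(t)\ge\mathcal{O}$ and $L_2(\l)\pm L_2(t)\ge\mathcal{O}$.

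Next I would translate these requirements into the stated equations. Equality of the traces of $\a,\b,\r$ gives $\sum_i t_i=0$, which is (\ref{fn1}). The non-increasing order of both $\r+t$ and $\r-t$ forces $t_k=t_{k+1}$ precisely at the indices where $\l_k=\l_{k+1}$, while strict gaps $\l_k>\l_{k+1}$ impose no equality (they are preserved under a sufficiently small rescaling $t\mapsto st$); this is (\ref{fn3}). It then remains to handle the four semidefiniteness constraints $L_j(\l)\pm L_j(t)\ge\mathcal{O}$, and here the three cases split according to which of $l_1(\l),l_2(\l)$ vanish. When $l_j(\l)>0$ we have $L_j(\l)>\mathcal{O}$, so $L_j(\l)\pm sL_j(t)>\mathcal{O}$ for all small $s$; after rescaling $t\mapsto st$ (which preserves the homogeneous conditions (\ref{fn1}) and (\ref{fn3})) this constraint is never an obstruction and may be dropped. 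When $l_j(\l)=0$ the matrix $L_j(\l)$ is singular, and I would show the pair of constraints $L_j(\l)\pm L_j(t)\ge\mathcal{O}$ is equivalent to the single linear equation $L_j(t)\,w=0$ for $w$ spanning $\ker L_j(\l)$, namely $w=U_3$ for $L_1$ and $w=V_3$ for $L_2$; these are exactly (\ref{fn2}) and (\ref{kn2}). Assembling the possibilities $(l_1=0,l_2>0)$, $(l_1>0,l_2=0)$, $(l_1=0,l_2=0)$ yields parts (i), (ii), (iii).

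The core of the argument, and the step I expect to be the main obstacle, is this perturbation lemma at a singular $L_j(\l)$. For necessity I would argue exactly: if $L_j(\l)\pm L_j(t)\ge\mathcal{O}$, then evaluating the quadratic form at the kernel vector $U_3$ gives $\pm U_3^{T}L_j(t)U_3\ge 0$, hence $U_3^{T}L_j(t)U_3=0$; since the positive semidefinite matrix $L_j(\l)+L_j(t)$ annihilates its quadratic form at $U_3$, the vector $U_3$ lies in its kernel, and as $L_j(\l)U_3=0$ this forces $L_j(t)U_3=0$. For sufficiency I would diagonalize by $U$: the condition $L_1(t)U_3=0$ means $U^{T}L_1(t)U$ has vanishing third row and column, so $D_1\pm s\,U^{T}L_1(t)U$ is block-diagonal with the nonsingular block $\diag(d_1,d_2)$ merely perturbed, which stays positive for small $s>0$; thus $L_1(\l)\pm sL_1(t)\ge\mathcal{O}$ and the rescaled $\r\pm st$ provide the desired decomposition. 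The delicate point is that this clean equivalence relies on the zero eigenvalue of $L_j(\l)$ being simple, so that $\ker L_j(\l)=\lin\{U_3\}$ and the single equation (\ref{fn2}) captures the full constraint; I would therefore verify that a full-rank boundary point has $\operatorname{corank} L_j(\l)=1$ whenever $l_j(\l)=0$ (equivalently that $D_1=\diag(d_1,d_2,0)$ with $d_2>0$), treating any higher-corank degeneracy separately. Combining the exact necessity direction with the rescaled sufficiency direction, $\r$ fails to be extreme exactly when the listed homogeneous linear system admits a nonzero solution $t$, which is the contrapositive of the claimed criterion.
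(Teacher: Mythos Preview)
Your proposal is correct and follows essentially the same route as the paper's proof: both reduce extremality via Theorem \ref{nonbou} to the existence of a nonzero perturbation $t$, derive (\ref{fn1}) from the trace, (\ref{fn3}) from the two-sided monotonicity, drop the constraint at any $L_j(\l)>\mathcal{O}$ by a rescaling $t\mapsto st$, and at a singular $L_j(\l)$ pass to the $U$-basis to turn $L_j(\l)\pm L_j(t)\ge\mathcal{O}$ into the block condition $U^TL_j(t)U=T'\oplus 0$, i.e.\ $L_j(t)U_3=0$. The ``delicate point'' you flag is handled in the paper exactly as you anticipate: one checks that $L_1(\l)$ cannot have rank one (otherwise $\l_1=\l_8+2\sqrt{\l_7\l_9}$, which by Lemma \ref{pf3e}(i) forces $\l_2=\cdots=\l_6$ and then $L_1(\l)=\bma 2\l_9&\l_8-\l_1\\\l_8-\l_1&2\l_7\ema\oplus 2\l_4$, a contradiction), so the kernel is one-dimensional and the single equation (\ref{fn2}) suffices.
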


{\bf Remark.} Eqs.(\ref{D1}) and (\ref{D2}) remain valid when substituting the column vectors $U_3,V_3$ with $-U_3,-V_3$, respectively. But Eqs.(\ref{fn2}) and (\ref{kn2}) guarantee that selecting either one is adequate.
\begin{corollary}
	\label{ccll}
(i) Suppose the state $\r\in \app_{3,3}$ has exactly two distinct eigenvalues. Then $\r$ is an extreme point if and only if it is unitarily equivalent to one of the following eight states:
	\begin{eqnarray}
	\label{twodi}
	\notag
	&&\zeta_1=\frac{1}{11}\diag(3,1,1,1,1,1,1,1,1),\\
	\notag
	&&\zeta_2=\frac{1}{9+2\sqrt{2}}\diag(\sqrt{2}+1,\sqrt{2}+1,1,1,1,1,1,1,1),\\
	\notag
	&&\zeta_3=\frac{1}{12}\diag(2,2,2,1,1,1,1,1,1),\\
	\notag
	&&\zeta_4=\frac{1}{10+\sqrt{17}}\diag(\frac{5+\sqrt{17}}{4},\frac{5+\sqrt{17}}{4},\frac{5+\sqrt{17}}{4},\frac{5+\sqrt{17}}{4},1,1,1,1,1),\\
	\notag
	&&\zeta_5=\frac{1}{5x+4}\diag(x,x,x,x,x,1,1,1,1),\\
	\notag
	&&\zeta_6=\frac{1}{21}\diag(3,3,3,3,3,3,1,1,1),\\
	\notag
	&&\zeta_7=\frac{1}{23+14\sqrt{2}}\diag(3+2\sqrt{2},\cdots,3+2\sqrt{2},1,1),\\
	&&\zeta_8=\frac{1}{8}\diag(1,1,1,1,1,1,1,1,0),
\end{eqnarray}
where $x\approx 2.70928$ is a root of the equation $x^3-x^2-5x+1=0$.

(ii) The boundary point $\r\in \app_{3,3}$ with  $\l(\r)=(a,b,c,c,c,c,c,c,c)$ ($a>b>c$) is an extreme point.

(iii) Every extreme point of $\app_{3,3}$ has at most seven distinct eigenvalues. 
\end{corollary}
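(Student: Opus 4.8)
The plan is to reduce each of the three parts to the linear-algebraic extremality test of Theorem \ref{judge}, together with the deficient-rank classification in Lemma \ref{le:rankrhoA=m}(ii); since an extreme point must be a boundary point, a full-rank extreme point is precisely a full-rank boundary point for which the system (\ref{fn1})--(\ref{kn2}) in $t_1,\dots,t_9$ has only the trivial solution. The key observation that organizes everything is the dimension count for that system: constraint (\ref{fn3}) forces $t$ to be constant on each block of equal eigenvalues, so if $\r$ has $d$ distinct eigenvalues then (\ref{fn3}) alone has a $d$-dimensional solution space, and (\ref{fn1}) trims it to dimension $d-1$.

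I would prove part (iii) first, as it is immediate from this count. In case (iii) of Theorem \ref{judge} the remaining conditions (\ref{fn2}) and (\ref{kn2}) impose at most $3+3=6$ further scalar linear equations, so the total solution space has dimension at least $(d-1)-6=d-7$ (and at least $d-4$ in cases (i)/(ii), which carry only three extra equations). Hence whenever $d\ge 8$ there is a nonzero solution and $\r$ is not extreme. Combined with the deficient-rank extreme points of Lemma \ref{le:rankrhoA=m}(ii), which have $d=2$, this shows every extreme point of $\app_{3,3}$ has at most seven distinct eigenvalues.

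For part (ii) I would apply Theorem \ref{judge} directly to $\l=(a,b,c,\dots,c)$ with $a>b>c>0$. A direct computation gives $L_1(\l)=L_2(\l)=\left[\begin{smallmatrix}2c&c-a&c-b\\ c-a&2c&0\\ c-b&0&2c\end{smallmatrix}\right]$, so the point lies in case (iii) with boundary relation $4c^2=(a-c)^2+(b-c)^2$ and common kernel vector $U_3=V_3\propto(2c,\,a-c,\,b-c)^T$. Constraint (\ref{fn3}) leaves $t=(t_1,t_2,r,\dots,r)$, and the second and third coordinates of (\ref{fn2}) (which coincides with (\ref{kn2}) here) give $t_1=ra/c$ and $t_2=rb/c$, the first coordinate being automatically satisfied by the boundary relation. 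Feeding these into (\ref{fn1}) yields $r(a+b+7c)/c=0$, and since $a+b+7c>0$ we get $r=0$ and hence the trivial solution; thus $\r$ is extreme.

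Part (i) is the main obstacle, being a case analysis on the number $k$ of copies of the larger eigenvalue. Writing $\l=(x,\dots,x,1,\dots,1)$ with $k$ leading $x$'s ($1\le k\le 8$), I would for each $k$ compute $L_1(\l),L_2(\l)$, note that $L_1=L_2$ for every $k\neq 6$ (only $k=6$ splits the $\l_6,\l_7$ entries), solve the boundary equation $l_1(\l)=0$ or $l_2(\l)=0$ for the unique root $x>1$ that keeps both $L_i\ge\mathcal{O}$, and then apply Theorem \ref{judge} to the one-parameter family $t=(s,\dots,s,r,\dots,r)$ with $ks+(9-k)r=0$. These boundary equations reproduce exactly the ratios in $\zeta_1,\dots,\zeta_7$ (for instance $x=3$ at $k=1$, the quadratic root $1+\sqrt2$ at $k=2$, the cubic root of $x^3-x^2-5x+1$ at $k=5$, and $x=3$ at $k=6$ once the spurious $l_2(\l)=0$ root is discarded for violating $L_1\ge\mathcal{O}$), while $k=8$ gives the deficient-rank state $\zeta_8$ through Lemma \ref{le:rankrhoA=m}(ii). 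The delicate points here are selecting, for each $k$, the correct root that keeps $L_1,L_2$ positive semidefinite, reading off the associated kernel vector, and verifying that the induced linear system in $(s,r)$ is nonsingular even when $x$ is an irrational algebraic number.
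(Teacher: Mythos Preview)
Your proposal is correct and follows essentially the same approach as the paper: all three parts are reduced to the linear extremality criterion of Theorem \ref{judge} via dimension counting, with the deficient-rank case handled by Lemma \ref{le:rankrhoA=m}(ii). The only cosmetic difference is in part (ii), where the paper, instead of solving rows two and three of (\ref{fn2}) explicitly for $t_1,t_2$ as you do, observes abstractly that the $3\times 3$ coefficient matrix in $(t_1,t_2,t_9)$ has rank at least two and that $(a,b,c)$ is visibly a solution (since it reproduces $L_1(\l)$ itself), hence spans the solution set; both arguments yield the same conclusion $r(a+b+7c)=0$.
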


From Lemma \ref{le:I+X=AS}, we know that $\z_1,\z_8\in \as_{3,3}$. We also have $\z_3\in \as_{3,3}$, as it resides on the maximal ball. But whether the other five states belong to $\as_{3,3}$ remains uncertain. We can further derive another property of the states in $\app_{3,3}$, specifically regarding the upper bound for the sum of $k$ largest eigenvalues. Define $\Lambda_k(\app_{3,3}):=\sup_{\r\in\app_{3,3}} \sum_{i=1}^{k}\l_i(\r)$.
\begin{corollary}
For any $1\le k\le 8$, $\Lambda_k(\app_{3,3})$ is attained at the state that is unitarily equivalent to $\z_k$ in (\ref{twodi}).
\end{corollary}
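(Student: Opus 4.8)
The plan is to view $\Lambda_k(\app_{3,3})$ as the maximum of the \emph{linear} functional $\ell_k(\l):=\sum_{i=1}^k\l_i$ over the eigenvalue set $\cE$ of $\app_{3,3}$. Because membership in $\app_{3,3}$ is spectral and, on full-rank states, is cut out by the linear matrix inequalities $L_1(\l)\ge\mathcal{O}$, $L_2(\l)\ge\mathcal{O}$ inside the ordered simplex, $\cE$ is convex and compact; since $\ell_k$ is continuous the supremum is attained. My strategy is to first show any maximizer can be replaced by one of two-block form with the jump at position $k$, and then to identify that optimal two-block vector with $\z_k$ without ever solving for the boundary value explicitly.

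First I would take a maximizer $\l^\ast$ and define $\mu$ by replacing $\l^\ast_1,\dots,\l^\ast_k$ by their common average $a$ and $\l^\ast_{k+1},\dots,\l^\ast_9$ by their common average $b$. Then $\mu$ has the block form $a=\cdots=a\ge b=\cdots=b$, with $\sum_{i=1}^k\mu_i=ka=\sum_{i=1}^k\l^\ast_i$ and $\sum_{i=1}^9\mu_i=1$, so Lemma \ref{hjk}(iii) gives $\l^\ast\succ\mu$ directly. By Theorem \ref{maj} the diagonal state with spectrum $\mu$ lies in $\app_{3,3}$, i.e.\ $\mu\in\cE$, and block-averaging leaves $\ell_k$ unchanged, so $\mu$ is again a maximizer. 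Hence it suffices to maximize $\ell_k$ over the one-parameter family $\l(a):=(a,\dots,a,b,\dots,b)$ (with $a$ occupying the first $k$ slots), where $b=(1-ka)/(9-k)$ and $a\ge b\ge 0$; note $a\ge b\iff a\ge \tfrac19$, and $\ell_k(\l(a))=ka$, so I only need the largest feasible $a$.

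The key point is that this largest $a$ is realized by $\z_k$. Since $\l(a)$ is affine in $a$ and $\app_{3,3}$ is convex, the feasible values of $a$ form a closed interval $[\tfrac19,a_{\max}]$. By Corollary \ref{ccll}(i) (and by Lemma \ref{le:rankrhoA=m}(ii) for the deficient-rank case $k=8$) each $\z_k$ is an extreme point of $\app_{3,3}$ of two-block form with the jump exactly at $k$ and with $a>b$, so $a(\z_k)>\tfrac19$. An interior value $a_0\in(\tfrac19,a_{\max})$ can never be extreme: for small $\delta$, affineness gives $\diag(\l(a_0))=\tfrac12\bigl(\diag(\l(a_0-\delta))+\diag(\l(a_0+\delta))\bigr)$, a convex combination of two distinct trace-one states of $\app_{3,3}$, which are automatically linearly independent. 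Therefore $a(\z_k)=a_{\max}$, the family-maximum of $\ell_k$ equals $k\,a_{\max}=\sum_{i=1}^k\l_i(\z_k)$, and combined with the reduction this equals $\Lambda_k(\app_{3,3})$, attained at $\z_k$.

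I expect the main obstacle to be organizational rather than computational. The two genuine steps are verifying the majorization $\l^\ast\succ\mu$ (clean via Lemma \ref{hjk}(iii)) and, more importantly, resisting the temptation to compute $a_{\max}$ directly: the determinants $\det L_1(\l(a))$ and $\det L_2(\l(a))$ are cubic in $a$ (for $k=5$ they force precisely $x^3-x^2-5x+1=0$), so the efficient route is to \emph{locate} the optimal endpoint by extremality via Corollary \ref{ccll} rather than by solving these equations. The boundary cases $k=1$ and $k=8$ can, if desired, be read off instead from Lemma \ref{le:rankrhoA=m}(i) and (ii).
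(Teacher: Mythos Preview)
Your argument is correct, but it takes a genuinely different route from the paper. The paper's proof is a three-line contradiction: if some $\sigma\in\app_{3,3}$ had $\sum_{i=1}^k\l_i(\sigma)>\sum_{i=1}^k\l_i(\z_k)$, then Lemma \ref{hjk}(iii) (applied to the two-block vector $\l(\z_k)$) gives $\sigma\succ\z_k$ with $\l(\sigma)\neq\l(\z_k)$, and since $\z_k$ is a boundary point, Theorem \ref{3bb} forces $\sigma\notin\app_{3,3}$. The whole weight is carried by Theorem \ref{3bb}, the strict anti-majorization property of boundary points of $\app_{3,3}$.

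Your approach avoids Theorem \ref{3bb} entirely. You first average a maximizer down to a two-block vector using $\l^\ast\succ\mu$ and Theorem \ref{maj} (the easy direction of majorization), reducing to a one-parameter affine family $\l(a)$; then you locate the endpoint $a_{\max}$ by observing that the extreme point $\z_k$ from Corollary \ref{ccll}(i) cannot sit in the interior of the feasible segment. This trades one nontrivial ingredient (Theorem \ref{3bb}) for another (the extremality of $\z_k$), and is arguably more transparent about \emph{why} the optimum has two-block shape. The paper's proof is shorter precisely because Theorem \ref{3bb} has already absorbed the hard work; your proof would stand even without that theorem, relying only on Theorem \ref{maj} and Corollary \ref{ccll}(i). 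Both are valid; your last paragraph correctly identifies that computing $a_{\max}$ explicitly (solving the cubic for $k=5$) is unnecessary in either approach.
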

\begin{proof}
For each $1\le k\le 8$, assume that there exists a state $\sigma\in \app_{3,3}$ such that 
$\sum_{i=1}^k \l_i(\sigma)>\sum_{i=1}^k \l_i(\z_k)$. Using Lemma \ref{hjk} (iii),  we obtain that $\sigma\succ \z_k$,  where $\l(\sigma)\neq \l(\zeta_k)$. Noting that $\z_k$ is a boundary point of $\app_{3,3}$, from Theorem \ref{3bb}, we have $\sigma\notin\app_{3,3}$. This contradicts with the assumption. Thus the claim is proved. 
\end{proof}

Finally, we consider the extreme points of $\app_{3,n}$. It is known that the state $\r$ belongs to $\app_{3,n}$ if and only if 
\begin{eqnarray}
	\notag
	\label{wym}
	&&L_1((\l_1,\l_2,\l_3,\l_{3n-5},\cdots,\l_{3n}))\ge \mathcal{O},\\
	\label{wyym}
	&&L_2((\l_1,\l_2,\l_3,\l_{3n-5},\cdots,\l_{3n}))\ge \mathcal{O}.
\end{eqnarray}
The results in $\app_{3,3}$ can be extended to provide a necessary and sufficient condition for an extreme point in $\app_{3,n}$, as outlined in the following theorem. The proof is also contained in Appendix \ref{profL1+L2}.
\begin{theorem}
	\label{3ndpd}
	The state $\r\in\app_{3,n}$ ($n>3$) is an extreme point if and only if it satisfies the following two conditions:
	
	(I) the unnormalized state $\diag(\l_1,\l_2,\l_3,\l_{3n-5},\cdots,\l_{3n})$ is an extreme point 
	of $\app_{3,3}$, 
	
	(II) $\l_i\in \{\l_3,\l_{3n-5}\}$ for any $i=4,\cdots,3n-6$.
\end{theorem}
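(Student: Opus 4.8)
The plan is to exploit the structural fact, recorded in (\ref{wyym}), that membership in $\app_{3,n}$ sees only the reduced nine-vector $\hat\l:=(\l_1,\l_2,\l_3,\l_{3n-5},\dots,\l_{3n})$: writing $\hat\r:=\diag(\hat\l)$, one has $\r\in\app_{3,n}$ iff the unnormalized state $\hat\r\in\app_{3,3}$, while the middle eigenvalues $\l_4\ge\cdots\ge\l_{3n-6}$, sandwiched in $[\l_{3n-5},\l_3]$, never enter the defining inequalities. I will use throughout the $\app_{3,n}$ analogue of Theorem \ref{nonbou}: a boundary point $\r$ of $\app_{3,n}$ is non-extreme iff $\r=\tfrac12(\a+\b)$ for linearly independent diagonal $\a,\b\in\app_{3,n}$ with non-increasing diagonals; this follows by the argument of Theorem \ref{nonbou} since, by (\ref{wyym}), the feasibility of an order-preserving perturbation is governed entirely by $\hat\l$. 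I also note that $\r$ is a boundary point of $\app_{3,n}$ iff $\hat\r$ is a boundary point of $\app_{3,3}$, since $L_1(\hat\l),L_2(\hat\l)>\mathcal{O}$ propagates between the two levels.

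For sufficiency, assume (I) and (II) and suppose $\r$ is non-extreme. The ordered-decomposition lemma gives $\r=\tfrac12(\a+\b)$ with $\a,\b$ ordered diagonal, linearly independent, in $\app_{3,n}$. Because all three matrices are diagonal with non-increasing entries, averaging is entrywise, so $\hat\r=\tfrac12(\hat\a+\hat\b)$ with $\hat\a,\hat\b\in\app_{3,3}$ by (\ref{wyym}). As $\hat\r$ is an extreme point of $\app_{3,3}$ by (I), this forces $\hat\a=\hat\b=\hat\r$; in particular $\a$ and $\b$ agree with $\r$ on the top three and bottom six coordinates, so their third diagonal entries equal $\l_3$ and their $(3n-5)$-th equal $\l_{3n-5}$. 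For each middle index $i$, orderedness gives $\l_{3n-5}\le\a_i,\b_i\le\l_3$, while (II) gives $\l_i\in\{\l_3,\l_{3n-5}\}$; since $\tfrac12(\a_i+\b_i)=\l_i$ is an endpoint of $[\l_{3n-5},\l_3]$, both $\a_i$ and $\b_i$ must equal that endpoint. Hence $\a=\b$, a contradiction, and $\r$ is extreme.

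For necessity, suppose $\r$ is extreme; it is then a boundary point, so $\hat\r$ is a boundary point of $\app_{3,3}$. To get (II), suppose a middle eigenvalue equals $v\in(\l_{3n-5},\l_3)$: if two middle coordinates are strictly interior, a small trace-zero exchange between them fixes $\hat\l$ and keeps both $\r\pm\epsilon\Delta\in\app_{3,n}$; if there is a single simple interior value at position $i_0$, the tie constraints pin every other middle increment to that of $\l_3$ or $\l_{3n-5}$ and leave only $t_{i_0}$ free, so I take a nonzero direction $\hat\Delta$ along the boundary hypersurface of $\app_{3,3}$ at $\hat\r$ (available since $L_1(\hat\l)$ or $L_2(\hat\l)$ has a simple kernel at a full-rank boundary point) and let $t_{i_0}$ absorb the trace, again making $\r$ non-extreme. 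Hence (II) holds. For (I), if $\hat\r$ were not extreme in $\app_{3,3}$, Theorem \ref{nonbou} gives $\hat\r=\tfrac12(\hat\a+\hat\b)$ with ordered, linearly independent $\hat\a=(a_1,\dots,a_9),\hat\b=(b_1,\dots,b_9)\in\app_{3,3}$ satisfying $a_3+b_3=2\l_3$, $a_4+b_4=2\l_{3n-5}$, $a_3\ge a_4$ and $b_3\ge b_4$; lifting each by placing $a_1,a_2,a_3$ (resp.\ the $b$'s) on the top three and $a_4,\dots,a_9$ on the bottom six coordinates, and assigning each middle coordinate the value $a_3$ (resp.\ $b_3$) if $\l_i=\l_3$ and $a_4$ (resp.\ $b_4$) if $\l_i=\l_{3n-5}$, yields by $a_3\ge a_4,\,b_3\ge b_4$ two ordered states in $\app_{3,n}$ that average to $\r$ and are distinct — a contradiction. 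Thus (I) holds.

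The hardest point is the single-interior-value case of (II): no $\hat\l$-preserving perturbation exists there, so one must instead move along the boundary of $\app_{3,3}$ and verify two-sided feasibility for a genuinely finite step, which requires a second-order analysis of the $L_1,L_2$ constraints and is most delicate exactly where $L_1$ and $L_2$ are simultaneously singular. The lift used for (I) is the other subtle step: naively copying $\r$'s own middle eigenvalues into both states can break the ordering (one may have $a_3<\l_3<b_3$), pushing the lifted reduced vector off $\hat\a$ and out of $\app_{3,n}$; assigning the endpoint values $a_3,a_4$ dictated by (II) is precisely what keeps both lifts non-increasing with the correct reduced vectors.
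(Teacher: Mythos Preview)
Your sufficiency argument rests on an $\app_{3,n}$ analogue of Theorem~\ref{nonbou} that you assert ``follows by the argument of Theorem~\ref{nonbou},'' but that argument's last step invokes Theorem~\ref{3bb}, whose direct analogue for $\app_{3,n}$ is \emph{false}: take any boundary point $\rho\in\app_{3,n}$ with two equal middle eigenvalues strictly between $\l_3$ and $\l_{3n-5}$ and spread them apart by $\pm t$; the resulting state strictly majorizes $\rho$ yet stays in $\app_{3,n}$ because its $\hat\l$ is unchanged. Under your hypothesis (II) that particular obstruction does not arise, but you still have not supplied a replacement for the step ruling out $\alpha'=\beta'=\rho$, and it is not clear one exists without substantial extra work. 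A second, smaller slip: extremality of $\hat\rho$ only forces $\hat\alpha,\hat\beta$ to be \emph{proportional} to $\hat\rho$, since their traces may differ; your endpoint argument then does yield the constant $1$ once the middle coordinates are pinned, but this needs to be argued. The paper avoids the ordered-decomposition issue entirely: starting from the merely diagonal decomposition of Theorem~\ref{thm:main}(ii), it picks index sets $\cK=\{1,2,a,b,3n-4,\dots,3n\}$ with $a,b$ ranging over the two blocks in (II), passes to the principal submatrix $\rho_{\cK}=\hat\rho$ via Theorem~\ref{le:mxn=AS}, applies (I) there to get $\alpha_{\cK},\beta_{\cK}$ proportional to $\rho_{\cK}$, and patches the proportionalities together as $a,b$ vary.

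Your necessity argument for (II) is both overcomplicated and, in the single-simple-interior case, genuinely incomplete: moving $\hat\l$ along the boundary hypersurface of $\app_{3,3}$ and absorbing the trace in $t_{i_0}$ is only a first-order construction, and the two-sided feasibility for a finite step that you flag as the ``hardest point'' is never established. The paper's argument here is two lines and needs no boundary geometry at all: if $\l_3>\l_k>\l_{3n-5}$ for some middle $k$, set $\alpha=\tfrac{1}{1+\epsilon}\diag(\l_1,\dots,\l_k+\epsilon,\dots,\l_{3n})$ and $\beta=\tfrac{1}{1-\epsilon}\diag(\l_1,\dots,\l_k-\epsilon,\dots,\l_{3n})$. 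Their top-three and bottom-six eigenvalues are just scalar multiples of $\hat\l$, so $L_1,L_2$ (being linear) remain positive semidefinite, and $\rho=\tfrac{1+\epsilon}{2}\alpha+\tfrac{1-\epsilon}{2}\beta$ is a nontrivial decomposition. Your lift construction for (I) is essentially the paper's.
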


{\bf Remark.} Combining with Corollary \ref{ccll} (iii), we obtain that each extreme point in $\app_{3,n}$ also has at most seven distinct eigenvalues.

At the end of this section, we point out that the above criterion of extreme points in qutrit-qudit system may be extended to higher-dimensional systems. 
However, one challenge is that the number of linear maps needed to determine whether a point belongs to $\app_{m,n}$ increases exponentially with the dimensions of the system (see \cite{johnston2018inverse}), which raises the complexity of the computations.

\section{Robustness of nonabsolute separability}
\label{ase}
In this section, we introduce the concept of robustness of nonabsolute separability. We first review the robustness of entanglement, which 
was initially considered by Vidal and Tarrach to quantify entanglement \cite{Vidal1999Robustness}. Given a quantum state $\r$, the robustness of entanglement for $\r$ is the minimum value $t$ such that $\frac{1}{1+t}(\r+t\sigma)$ is separable, where $\sigma$ denotes an arbitrary separable state. The authors also constrained the state $\sigma$  to be white noise (the identity operator), denoting this as random robustness, and proved that for any $m\times n$ state $\r$, the unnormalized state $\r+\frac{1}{2}I_{mn}$ is always separable. It is worth noting that 
$\r+\frac{1}{2}I_{mn}$ is also AS, since it remains separable under any global unitary operations. Later, the concept was generalized where $\sigma$ can denote any quantum state \cite{Steiner2003Generalized}, termed as generalized robustness of entanglement, denoted as $\cR(\r)$. For pure states $\r$, the robustness of entanglement and $\cR(\r)$ are exactly characterized, which are both equal to $(\sum_{i=1} a_i)^2-1$,
where $a_i$ is the Schmidt coefficients of $\ket{\phi}$ with $\r=\proj{\phi}$ \cite{Steiner2003Generalized,Harrow2003RobustnessOQ}.
But the question of whether these two values are equal for mixed states remains unresolved. Generalized robustness of entanglement offers a geometrical approach to quantify entanglement by measuring the distance to the set of separable states \cite{2006Connecting}.

On the other hand, a resource theory of NAS was developed, where mixtures of global unitary operations are considered free operations, while AS states are free states \cite{2023Resource}. In this resource theory,
a ''good'' NAS measure should satisfy criteria including positivity, invariance with local unitary operations, monotonicity under
free operations, and convexity. In \cite{2023Resource}, several good measures have been introduced, utilizing distance measures like relative entropy, Bures distance, and Hilbert-Schmidt distance (for detailed information, see \cite{bengtsson2017geometry}) quantifying the distance of a non-AS state from the set of AS states. In the following, we introduce a new measure of NAS, analogy to the generalized robustness of entanglement, which offers an additional geometric insight into non-AS states.
\begin{definition}
	Given an $m\times n$ state $\r$, we define the robustness of nonabsolute separability of $\r$ as the minimal value $t\ge 0$ such that $\frac{1}{1+t}(\r+t\sigma)\in \as_{m,n}$, where $\sigma$ can be chosen from any $m\times n$ state. We denote this value as $\ar(\r)$, with 
	$\sigma$ termed as an optimal state.
\end{definition}

By definition, we immediately have $\cR(\r)\le \ar(\r)\le \frac{mn}{2}$ for any $m\times n$ state $\r$, where the first inequality follows from that an AS state is necessarily separable, and the second inequality is the known result of random robustness. We next investigate more properties about $\ar(\r)$.

\begin{theorem}
	\label{pofar}

(i)  The state $\r\in \as_{m,n}$ if and only if $\ar(\r)=0$.

	(ii) $\ar(\r)=\ar(U\r U^\dg)$ for any state $\r$ and global unitary matrix $U$.
	
	(iii) $\ar(\r)$ is convex.
	
	(iv) $\ar(\r)\ge \ar(\sigma)$ for any two states $\r,\sigma$ such that $\r\succ \sigma$.

	(v) Let $\r$ be an $m\times n$ state with $\ar(\r)=t>0$ and $\sigma$ be an optimal state. Then $\frac{1}{1+t}(\r+t\sigma)$ is necessarily a boundary point of $\as_{m,n}$.
	\end{theorem}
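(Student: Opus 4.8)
The plan is to read off all five properties directly from the defining minimization, leaning on the three structural facts about $\as_{m,n}$ that are already available: it is convex, it is compact (hence closed), and it is invariant under global unitaries, together with the random-robustness bound $\ar(\r)\le\frac{mn}{2}$ recorded just before the statement. A useful preliminary is that the infimum defining $\ar(\r)$ is actually attained: restricting to $0\le t\le\frac{mn}{2}$, the admissible pairs $(t,\sigma)$ with $\sigma$ a state and $\frac{1}{1+t}(\r+t\sigma)\in\as_{m,n}$ form a compact set (the state space is compact and $\as_{m,n}$ is closed), on which the continuous function $t$ attains its minimum. This guarantees that an optimal state exists, as presupposed in (v).

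For (i), if $\r\in\as_{m,n}$ then $t=0$ is admissible, so $\ar(\r)=0$; conversely, if $\ar(\r)=0$, choose admissible $(t_k,\sigma_k)$ with $t_k\to 0$, note $\frac{1}{1+t_k}(\r+t_k\sigma_k)\to\r$, and conclude $\r\in\as_{m,n}$ by closedness. For (ii), if $(t,\sigma)$ is admissible for $\r$ then $(t,U\sigma U^\dg)$ is admissible for $U\r U^\dg$, since $\frac{1}{1+t}(U\r U^\dg+tU\sigma U^\dg)=U\big(\tfrac{1}{1+t}(\r+t\sigma)\big)U^\dg\in\as_{m,n}$ by unitary invariance; this gives $\ar(U\r U^\dg)\le\ar(\r)$, and applying the same step to $U^\dg$ yields equality.

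For (iii), take states $\r_1,\r_2$ with optimal data $(t_i,\sigma_i)$ and $\tau_i:=\frac{1}{1+t_i}(\r_i+t_i\sigma_i)\in\as_{m,n}$, and set $\r=p\r_1+(1-p)\r_2$, $t=pt_1+(1-p)t_2$. The convex combination $\frac{p(1+t_1)\tau_1+(1-p)(1+t_2)\tau_2}{1+t}$ lies in $\as_{m,n}$ (its two weights sum to one), and a direct computation rewrites it as $\frac{1}{1+t}(\r+t\sigma)$ with $\sigma:=\frac{pt_1\sigma_1+(1-p)t_2\sigma_2}{t}$ a genuine state, so $\ar(\r)\le t$, with the case $t=0$ reducing to (i). For (iv), $\r\succ\sigma$ gives by Uhlmann's theorem $\sigma=\sum_j p_jU_j\r U_j^\dg$; the mixture-of-unitaries channel $\Phi(X)=\sum_j p_jU_jXU_j^\dg$ preserves $\as_{m,n}$ by unitary invariance and convexity, so applying $\Phi$ to $\frac{1}{1+t}(\r+t\omega)\in\as_{m,n}$ (with $t=\ar(\r)$ and $\omega$ optimal) gives $\frac{1}{1+t}(\sigma+t\Phi(\omega))\in\as_{m,n}$, whence $\ar(\sigma)\le t=\ar(\r)$.

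For (v), write $s^*=\frac{t}{1+t}\in(0,1)$ (using $t>0$) so that $\tau:=\frac{1}{1+t}(\r+t\sigma)=(1-s^*)\r+s^*\sigma$, and observe that minimality of $t$ forces $g(s):=(1-s)\r+s\sigma\notin\as_{m,n}$ for all $s<s^*$, since otherwise the same $\sigma$ would realize $\ar(\r)\le\frac{s}{1-s}<t$. Suppose for contradiction that $\tau$ is an interior point; then $\tau=(1-\epsilon)\mu+\epsilon\frac{1}{mn}I_{mn}$ with $\mu\in\as_{m,n}$ and $\epsilon>0$, where $\frac{1}{mn}I_{mn}$ is a topological interior point of $\as_{m,n}$ (it is the center of the maximal ball of Lemma \ref{le:I+X=AS}). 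By the convex-geometry accessibility lemma, any proper convex combination of a point of a convex set with an interior point is interior, so $\tau$ is a topological interior point; hence a whole neighborhood of $\tau$ lies in $\as_{m,n}$, and since $g$ is continuous with $s^*>0$ there is some $s<s^*$ with $g(s)\in\as_{m,n}$, contradicting minimality. The main obstacle is exactly this last step: one must carefully bridge the paper's interior-point definition (displacement toward the maximally mixed state) and the topological interior of $\as_{m,n}$ inside the affine space of trace-one Hermitian matrices, and verify that $s^*>0$ so that a genuine perturbation toward $\r$ is available. The accessibility lemma together with the maximal ball make this rigorous, while the remaining parts are routine manipulations of the defining relation.
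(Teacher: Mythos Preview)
Your proof is correct; parts (i)--(iv) match the paper's argument essentially verbatim (the paper just says ``obvious'' for (i) and ``from (ii) and (iii)'' for (iv), where you spell out the Uhlmann decomposition and the action of the mixing channel explicitly).

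For (v) your route diverges from the paper's. You argue that the paper's notion of interior point forces $\tau$ to be a \emph{topological} interior point of $\as_{m,n}$ (via the accessibility lemma and the maximal ball), and then slide $s$ below $s^*$ along the segment $g(s)=(1-s)\r+s\sigma$ to contradict minimality. The paper instead stays with its own interior-point definition and performs an explicit algebraic decomposition: from $\r+t\sigma-\epsilon_1\frac{1}{mn}I_{mn}\in\as_{m,n}$ it uses Lemma~\ref{le:I+X=AS} to find $\epsilon_2>0$ with $\frac{1}{mn}I_{mn}-\frac{\epsilon_2}{\epsilon_1}\sigma\in\as_{m,n}$, and then adds the two to get $\r+(t-\epsilon_2)\sigma\in\as_{m,n}$, directly yielding $\ar(\r)\le t-\epsilon_2$. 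Your approach is cleaner conceptually and reusable for other resource theories, at the cost of importing the accessibility lemma and the affine--topological identification you flag; the paper's approach avoids that detour and extracts the smaller admissible $t$ in one line, but the trick of subtracting a small multiple of $\sigma$ via the maximal ball is more ad hoc.
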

\begin{proof}
    (i) This is obvious.

	(ii) 	Let $\ar(\r)=t$ and $\sigma$ be the optimal state such that $\frac{1}{1+t}(\r+t\sigma)\in
	\as_{m,n}$. For any order-$mn$ unitary matrix $U$, we have $\frac{1}{1+t}(U\r U^\dg+tU\sigma U^\dg)\in \as_{m,n}$. So $t\ge \ar(U\r U^\dg)$. Conversely, we have $\ar(U\r U^\dg)\ge t$. Thus the claim holds.

(iii) For any two states $\r_1, \r_2$ and $p\in (0,1)$, let $\ar(\r_1)=t_1,\ar(\r_2)=t_2$ and $\sigma_1,\sigma_2$ be the optimal states respectively. Thus $\r_1+t_1\sigma_1\in \as_{m,n}$ and $\r_2+t_2\sigma_2\in \as_{m,n}$. Consequently, $p\r_1+(1-p)\r_2+(pt_1+(1-p)t_2)\sigma\in \as_{m,n}$,
where $\sigma=\frac{1}{pt_1+(1-p)t_2}(pt_1\sigma_1+(1-p)t_2\sigma_2)$. This implies that $\ar(p\r_1+(1-p)\r_2)\le pt_1+(1-p)t_2$.
	
	(iv) This can be verified from (ii) and (iii).

	(v) Assume that $\frac{1}{1+t}(\r+t\sigma)$ is an interior point of $\as_{m,n}$. There exists $\epsilon_1>0$ such that $\r+t\sigma-\epsilon_1 \frac{1}{mn}I_{mn}\in \as_{m,n}$.  Further, there exists a small enough $\epsilon_2>0$ such that $||\frac{\epsilon_2}{\epsilon_1}\sigma||_2<\frac{1}{mn}$. Using Lemma \ref{le:I+X=AS}, we have  $\frac{1}{mn}I_{mn}-\frac{\epsilon_2}{\epsilon_1}\sigma\in \as_{m,n}$. Consequently, 
$\r+t\sigma-\epsilon_2\sigma
		=(\r+t\sigma-\epsilon_1 \frac{1}{mn}I_{mn})+\epsilon_1(\frac{1}{mn}I_{mn}-\frac{\epsilon_2}{\epsilon_1}\sigma)\in \as_{m,n}$.
This implies that $\ar(\r)\le t-\epsilon_2<t$, which is a contradiction. So the assumption does not hold and the claim is proved. 
	\end{proof}

{\bf Remark.} Parts (i)-(iv) of the above theorem imply that $\ar(\r)$ is a good NAS measure as proposed in \cite{2023Resource}. Moreover, (iv) coincides with Theorem 2 of \cite{2023Resource} that any pure state possesses the maximal amount of resources. 

We next give a compact form of $\ar(\r)$ for certain states. From (ii) of the above theorem, it suffices to consider all states in diagonal form. The proof of following theorem will be given in Appendix \ref{james}.

\begin{theorem}
	\label{battle}
(i)	For the  $m\times n$ state $\r=\diag(1,0,\cdots,0)$, $\ar(\r)=\frac{mn-1}{3}$, where the unique optimal state is $\diag(0,\frac{1}{mn-1},\cdots,\frac{1}{mn-1})$.

(ii) For the $2\times n$ state $\r=\diag(\frac{1}{k},\cdots,\frac{1}{k},0,\cdots,0)$ with $1\le k\le2n-3$, $\ar(\r)=\frac{2n-k}{3k}$, where the unique optimal state is $\diag(0,\cdots,0,\frac{1}{2n-k},\cdots,\frac{1}{2n-k})$.

(iii) For the $2\times n$ state $\r=\diag(\frac{1}{2n-2},\cdots,\frac{1}{2n-2},0,0)$, $\ar(\r)=\frac{3-2\sqrt{2}}{n-1}$, where the unique optimal state is 
$\diag(0,\cdots,0,\frac{1}{2},\frac{1}{2})$.

(iv) Let $\r=\diag(a,1-a,0,0)$ with $\frac{1}{2}\le a\le 1$. If $\frac{1}{2}\le a\le \frac{3}{4}$, then $\ar(\r)=4-4\sqrt{1-a}-2a$, where the optimal state can be chosen as $\diag(0,0,\frac{1}{2},\frac{1}{2})$. If $\frac{3}{4}< a\le 1$, then $\ar(\r)=2a-1$, where the optimal state can be chosen as $\diag(0,\frac{4a-3}{6a-3},\frac{a}{6a-3},\frac{a}{6a-3})$.
\end{theorem}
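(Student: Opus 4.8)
The plan is to convert the minimization defining $\ar(\r)$ into a finite-dimensional optimization over eigenvalues. By Theorem \ref{pofar}(ii) the quantity $\ar$ is unitarily invariant, and membership in $\as_{m,n}$ depends only on the spectrum, so I may take $\r$ diagonal. Writing $\mu := \r + t\sigma$, the pairs $(t,\sigma)$ with $\sigma\ge \mathcal{O}$, $\tr\sigma=1$ correspond exactly to matrices $\mu$ with $\mu\ge \r$ in the Loewner order and $\tr\mu = 1+t$, via $t=\tr\mu-1$ and $\sigma=(\mu-\r)/t$. Since the condition $\frac{\mu}{\tr\mu}\in\as_{m,n}$ is scale-invariant and spectral, and since a Hermitian $\mu\ge\r$ with prescribed non-increasing spectrum $\nu$ exists if and only if $\nu_i\ge\l_i(\r)$ for all $i$ (necessity is monotonicity of eigenvalues under the Loewner order; sufficiency follows by taking $\mu$ and $\r$ simultaneously diagonal with matched orderings), I obtain
\[
\ar(\r)=\Big(\min\sum_i \nu_i\Big)-1,
\]
the minimum running over non-increasing $\nu\ge \mathcal{O}$ with $\nu_i\ge\l_i(\r)$ and $\nu$ satisfying the relevant spectral $\as$ criterion. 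Theorem \ref{pofar}(v) guarantees the optimal $\nu$ saturates that criterion, so I may impose it as an equality.

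For (i), with $\l(\r)=(1,0,\dots,0)$, I would argue directly rather than through the full criterion: testing $\r+t\sigma$ on the top eigenvector $\ket v$ of $\r$ gives $\l_1(\r+t\sigma)\ge 1$, hence the normalized mixture has $\l_1\ge\frac{1}{1+t}$; Lemma \ref{le:rankrhoA=m}(i) forces $\l_1\le\frac{3}{2+mn}$, so $t\ge\frac{mn-1}{3}$. Equality is realized by the stated $\sigma$, for which the mixture is exactly the extreme point $\diag(\frac{3}{2+mn},\frac{1}{2+mn},\dots)$. Uniqueness follows because the equality case of Lemma \ref{le:rankrhoA=m}(i) pins down the spectrum of the mixture, and the sandwiching $1\le\l_1(\r+t\sigma)=1$ forces $\ket v$ into the kernel of $\sigma$, leaving $\sigma=\frac{1}{mn-1}(I_{mn}-\proj v)$.

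For (ii)--(iv) the systems are $2\times n$ or $2\times 2$, so I substitute the explicit criterion $\nu_1\le\nu_{2n-1}+2\sqrt{\nu_{2n-2}\nu_{2n}}$ (resp. $\nu_1\le\nu_3+2\sqrt{\nu_2\nu_4}$) into the reduced optimization, set the top eigenvalues to their lower bounds $\l_i(\r)$, and minimize the tail against the saturated criterion. For (ii) a short calculus argument shows the entire tail collapses to the constant $\frac{1}{3k}$, giving $t=\frac{2n-k}{3k}$ (the boundary case $k=2n-3$ reducing to a clean AM-GM forcing $\nu_{2n-2}=\nu_{2n-1}=\nu_{2n}$). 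For (iii), where only two zero eigenvalues are available, $\nu_{2n-2}=\frac{1}{2n-2}$ is fixed and a one-variable minimization of $\nu_{2n-1}+\nu_{2n}$ along the saturated criterion yields a quadratic with root $3-2\sqrt2$, giving $t=\frac{3-2\sqrt2}{n-1}$. For (iv), the minimization in $\nu_2,\nu_3,\nu_4$ has the domination constraint $\nu_2\ge 1-a$ active precisely when $a\le\frac34$ (producing $4-4\sqrt{1-a}-2a$) and inactive when $a>\frac34$, where the optimal spectrum becomes $(a,\frac a3,\frac a3,\frac a3)$, a rescaling of the extreme point with $\l=(\frac12,\frac16,\frac16,\frac16)$, giving $2a-1$. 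The uniqueness of $\sigma$ in (i)--(iii) is recovered from the uniqueness of the optimal spectrum by the same sandwiching: the support of $\r$ must coincide with the top eigenspace of $\mu$, which forces $\mu$ and hence $\sigma$.

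The main obstacle is the optimization step, in two respects. First I must establish and correctly invoke the existence lemma for $\mu\ge\r$ with prescribed spectrum, so that the reduction is exact. Second, and more delicate, is certifying global optimality of each candidate against all admissible orderings of $\nu$: the constraints $\nu_1\ge\cdots\ge\nu_N$ interact nontrivially with the nonlinear $\as$ inequality, and I must verify that driving eigenvalues to the boundaries of these constraints cannot lower the trace further. In particular the degenerate coefficient at $k=2n-3$ in (ii) and the regime transition at $a=\frac34$ in (iv) require separate checks that the active set is correctly identified and that the two formulas agree at the transition.
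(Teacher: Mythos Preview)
Your approach is correct and takes a genuinely different route from the paper. You convert the problem into a clean spectral optimization via the Loewner-order reduction (the existence lemma you invoke is exactly Weyl monotonicity for necessity plus simultaneous diagonalization for sufficiency), whereas the paper argues each lower bound by contradiction: it rewrites the bound as $\frac{1}{1+s}(\sigma+s\r)\notin\as$ for all $\sigma$ and all $s$ above the reciprocal threshold, then applies successive majorization reductions (Schur's theorem and Lemma~\ref{hjk}) to pass to a canonical diagonal state, and finally checks that this state violates the $\as$ criterion by hand. For uniqueness, the paper shows that a second optimal $\sigma$ would strictly majorize the target extreme point and then appeals to Theorem~\ref{maj} together with the extreme-point classifications (Lemma~\ref{le:rankrhoA=m} and Theorem~\ref{mmain}); you instead read uniqueness off the equality conditions in your optimization and the sandwiching $\l_i(\mu)\ge\langle v_i|\mu|v_i\rangle\ge\l_i(\r)$ on the support of $\r$, which pins down the top eigenspace of $\mu$.

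Your framework is more uniform and makes the structure transparent; in particular your treatment of (i) via $\l_1(\r+t\sigma)\ge\bra{v}\r\ket{v}=1$ and Lemma~\ref{le:rankrhoA=m}(i) is shorter than the paper's majorization chain. The paper's case-by-case reductions, on the other hand, sidestep the need to certify a global optimum of a nonlinear constrained program: every step is a concrete majorization followed by a direct inequality check, so there is no risk of overlooking a competing ordering of $\nu$. As you correctly flag, that certification is where the real labor in your approach lies, especially at the degenerate coefficient $k=2n-3$ in (ii) and at the active-set switch $a=\tfrac34$ in (iv), and you should expect to spend a paragraph on each to make the argument airtight.
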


{\bf Remark.} Part (i) of above theorem implies that for any pure state $\r$, 
the optimal state $\sigma$ is orthogonal to $\r$, i.e., $\tr(\r \sigma)=0$. 
Moreover, both $\sigma$ and the resulting state $\g=\frac{1}{1+t}(\r+t\sigma)$ are extreme points of $\as_{m,n}$, according to Lemma \ref{le:rankrhoA=m} (see Fig. \ref{fig:robustness} for a schematic diagram in two-qubit system). However, for mixed state, (ii) shows that the optimal state is not necessarily AS, and (iv) shows that it is also not necessarily orthogonal to this state.
Additionally, (iv) can be used to provide an upper bound of the generalized robustness of entanglement for any rank-two two-qubit state, since the resulting state is necessarily separable.

\begin{figure}[ht]
\includegraphics[width=0.50\textwidth]{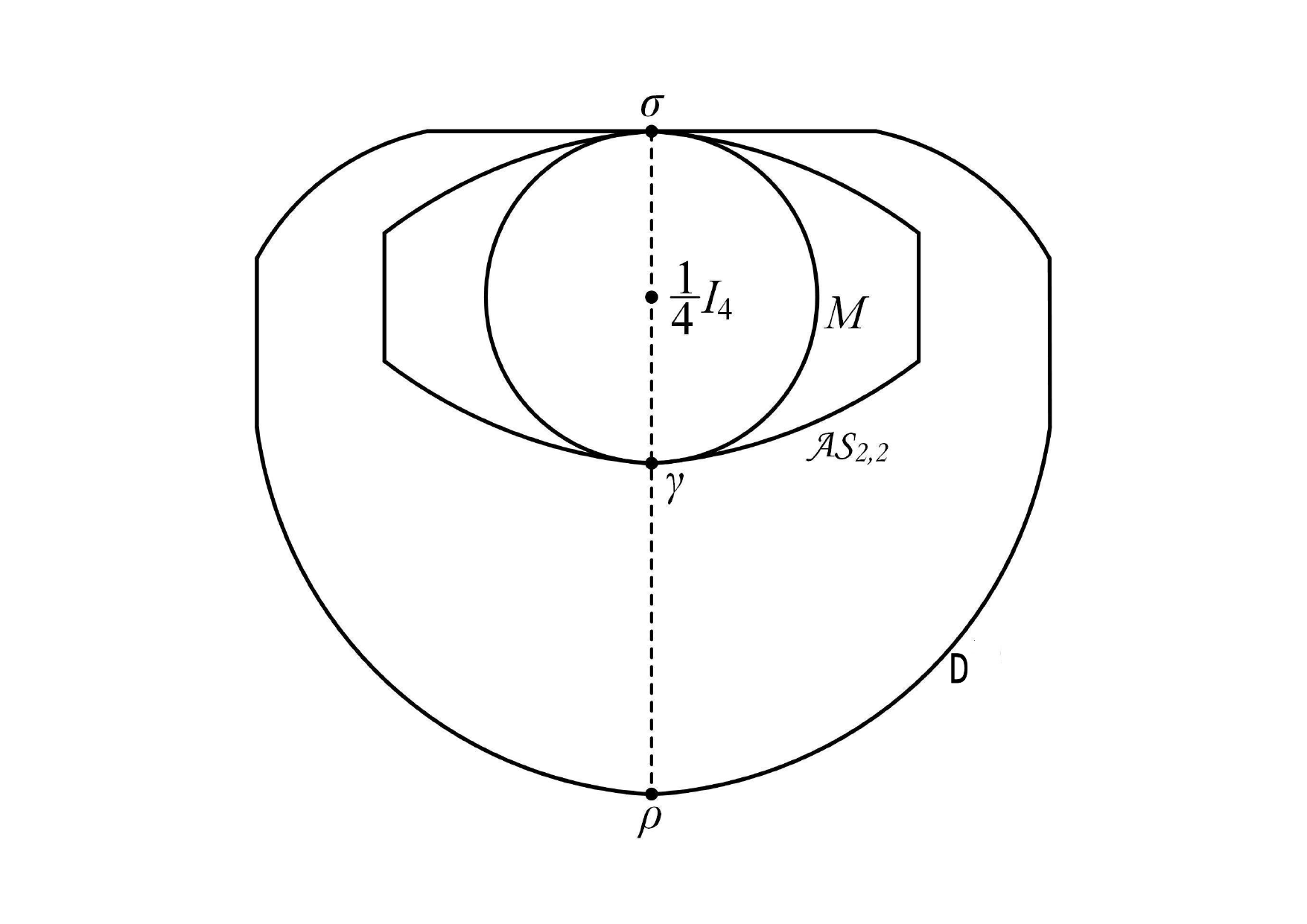}
\captionsetup{justification=justified,singlelinecheck=false}
\caption{A schematic diagram of robustness of nonabsolute separability for the pure state $\r$ in two-qubit system. Here, $M$ represents the maximal ball, and $\cD$ denotes the entire set of states. The line segments at the boundary of $\as_{2,2}$ represent those non-extreme points. The optimal state $\sigma$ is the deficient-rank extreme point of $\as_{2,2}$, residing on the boundary but not being an extreme point of $\cD$. The resulting state $\gamma$ with eigenvalues $(\frac{1}{2},\frac{1}{6},\frac{1}{6},\frac{1}{6})$ lies at the middle of $\r$ and $\sigma$. Moreover, $\sigma$ and $\gamma$ represent all the extreme points of $\as_{2,2}$ that reside on the maximal ball, as stated in Corollary \ref{mbn} (i).}
\label{fig:robustness}
\end{figure}
\vspace{-1em}

\section{Conclusion}
\label{sec:concl}
In this work, we have investigated the extreme points of sets of absolutely separable and PPT states. We have proposed some properties about the boundary of these sets. We have provided an exact form of the extreme points in $\as_{2,n}$ and $\app_{2,n}$, and established a necessary and sufficient condition for identifying extreme points in $\app_{3,n}$ for $n\ge 3$. In particular, we have found that every extreme point of $\as_{2,n}$ has no more than three distinct eigenvalues, and every extreme point of $\app_{3,n}$ has no more than seven distinct eigenvalues. We have introduced the robustness of nonabsolute separability and explored its basic properties. Specifically, we have shown that it is a good geometrical measure of a quantum state, within the resource theory of nonabsolute separability. We have provided the exact formula of pure states and some special mixed states including rank-two two-qubit states, under this measure.  

There are several open problems remaining in our research. 
The first is to present a compact form of all the extreme points of $\app_{3,n}$, similar to the results in qubit-qudit system. 
The next issue is to determine whether every extreme point of $\app_{3,n}$ belongs to $\as_{3,n}$. In particular, we conjecture this is true for all eight states in (\ref{twodi}). The challenge is that we currently lack knowledge of the criterion for a state to be classified as $\as_{3,3}$.
The third task is to compute the robustness of nonabsolute separability for general mixed states. This presents a greater challenge because of the increasing number of variables involved. Besides, one of the related problem is whether the optimal state is unique for the mixed states outside the set of AS states.

\section*{Acknowledgments}
\label{sec:ack}	
Authors were supported by the  NNSF of China (Grant No. 12471427).
	
\newpage
\appendix
\section{Proof of Theorem \ref{thm:main}}
\label{pt1012}
We prove the claim for $\as_{m,n}$, the claim for $\app_{m,n}$ can be proved similarly. 

(i) Since $\r$ is a non-extreme point of $\as_{m,n}$, there exist two linearly independent states $\sigma_1,\sigma_2\in \as_{m,n}$ such that $\r=p\sigma_1+(1-p)\sigma_2$ for $p\in (0,1)$. So $\r=p\diag(\sigma_1)+(1-p)\diag(\sigma_2)$. By using Schur Theorem and Theorem \ref{maj}, we have $\diag(\sigma_1),\diag(\sigma_2)\in \as_{m,n}$.
Hence the claim holds if $\diag(\sigma_1)$ and $\diag(\sigma_2)$ are linearly independent.

Suppose $\diag(\sigma_1)$ and $\diag(\sigma_2)$ are linearly dependent. 
This implies that $\sigma_1,\sigma_2$ are both non-diagonal with $\diag(\sigma_1)=\diag(\sigma_2)=\r$. Let $\diag(\l(\sigma_1)):=\diag(\l_1(\sigma_1),\cdots,\l_{mn}(\sigma_1))$.
We have $\diag(\l(\sigma))\in \as_{m,n}$ and by Schur Theorem, $\diag(\l(\sigma_1))\succ \r$, where the two states do not have the same eigenvalues. According to Uhlmann's Theorem, there exist permutation matrices $P_j$ and probabilities $p_j$ such that
$
\r=\sum_j p_j P_j\diag(\l(\sigma_1))P_j^\dg,
$
where at least two $p_j\in (0,1)$. Without loss of generality, let $p_1\in (0,1)$. Let $\a=P_1\diag(\l(\sigma_1))P_1^\dg$ and $\b=\frac{1}{1-p_1}(\sum_{j=2}p_j P_j\diag(\l(\sigma_1))P_j^\dg)$. 
We have $\r=t\a+(1-t)\b$, where $\a,\b\in \as_{m,n}$ are both diagonal. Further, $\a,\b$ are linearly independent, otherwise $\diag(\l(\sigma_1))$ and $\r$ would have the same eigenvalues. This completes the proof.

(ii) From (i), we can write $\r=t\a+(1-t)\b$ for $t\in (0,1)$, where $\a,\b\in \as_{m,n}$ are linearly independent diagonal states. Without loss of generality, let $t\in (0,\frac{1}{2}]$. 
For any $\epsilon>0$, there exist $\delta>0$ such that $(1-\frac{1}{1+\delta})<\frac{\epsilon}{||\r-\b||_2}$. Let 
$\a'=\frac{\delta2t\a+\delta(1-2t)\b+\r}{1+\delta}$ and $\b'=\frac{\delta\b+\r}{1+\delta}$.
We have $\a',\b'\in \as_{m,n}$ that are both diagonal. Further, $\r=\frac{1}{2}(\a'+\b')$, where $||\r-\a'||_2=||\r-\b'||_2=(1-\frac{1}{1+\delta})||\r-\b||_2<\epsilon$.
Moreover, the two states $\a',\b'$ are linearly independent, otherwise it would imply that $\r=\b=\b'$ and lead to a contradiction. This completes the proof.
$\hfill\square$

\section{Proof of Theorems \ref{ef22}, \ref{mmain} and Corollary \ref{mbn}}
\label{mtp}
We first provide a necessary lemma.
\begin{lemma}
	\label{foc}
	Suppose the order-$n$ matrix $A\ge \mathcal{O}$ has rank one, the order-$n$ matrix $B$ satisfies that $A+B\ge \mathcal{O}$ and $A-B\ge \mathcal{O}$. Then $B$ is linearly dependent with $A$.
\end{lemma}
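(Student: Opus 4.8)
The plan is to turn the two-sided positivity into a statement about ranges. Since $A+B$ and $A-B$ are positive semidefinite they are Hermitian, and because $A$ is Hermitian so is $B=\tfrac12[(A+B)-(A-B)]$. The crucial observation is the operator inequality $\mathcal{O}\le A+B\le 2A$: the left-hand inequality is the hypothesis $A+B\ge\mathcal{O}$, while the right-hand one follows from $2A-(A+B)=A-B\ge\mathcal{O}$.

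Next I would invoke the standard fact that domination of positive semidefinite matrices forces inclusion of ranges: if $\mathcal{O}\le M\le N$ then $\mathop{\rm range}(M)\subseteq\mathop{\rm range}(N)$. Indeed, $N\ket{w}=0$ gives $\bra{w}N\ket{w}=0$, hence $0\le\bra{w}M\ket{w}\le\bra{w}N\ket{w}=0$, and a positive semidefinite matrix annihilates every vector on which its quadratic form vanishes; thus $\ker N\subseteq\ker M$, and dually $\mathop{\rm range}(M)\subseteq\mathop{\rm range}(N)$. Applying this with $M=A+B$ and $N=2A$ yields $\mathop{\rm range}(A+B)\subseteq\mathop{\rm range}(A)$. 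Since $A\ge\mathcal{O}$ has rank one, write $A=\lambda\proj{v}$ with $\lambda>0$ and $\ket{v}$ a unit vector, so $\mathop{\rm range}(A)=\mathrm{span}\{\ket{v}\}$ is one-dimensional. A Hermitian positive semidefinite matrix whose range lies in this line must equal $\mu\proj{v}$ for some $\mu\ge0$; hence $A+B=\mu\proj{v}$, and subtracting $A$ gives $B=(\mu-\lambda)\proj{v}=\tfrac{\mu-\lambda}{\lambda}A$, so $B$ is linearly dependent with $A$.

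I do not anticipate a serious obstacle; the one point deserving care is the range-inclusion fact, which is precisely the step that converts the symmetric positivity hypothesis into the rank-one conclusion. A more computational alternative would fix an orthonormal basis whose first vector is $\ket{v}$: both inequalities force $\bra{w}B\ket{w}=0$ for every $\ket{w}\in\ker A$, so by polarization the lower-right block of $B$ vanishes; then the vanishing diagonal of the positive semidefinite matrix $A+B$ on $\ker A$ forces the remaining off-diagonal block to vanish as well, again recovering $B\propto A$. I would nonetheless prefer the range-domination argument for its brevity.
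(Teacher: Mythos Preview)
Your argument is correct. Interestingly, the ``computational alternative'' you sketch at the end is exactly the paper's proof: it diagonalizes $A$ to $\diag(a,0,\ldots,0)$, reads off $b_{ii}=0$ for $i\neq 1$ from the pair of inequalities $\pm b_{ii}\ge 0$, and then uses the vanishing diagonal of the positive semidefinite matrix $A+B$ to kill the remaining off-diagonal entries. Your preferred route via the range-inclusion lemma $\mathcal{O}\le M\le N\Rightarrow\mathop{\rm range}(M)\subseteq\mathop{\rm range}(N)$ is a coordinate-free repackaging of the same mechanism: both hinge on the fact that a positive semidefinite matrix vanishing in quadratic form on a subspace must annihilate that subspace. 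The range-domination phrasing is cleaner and avoids naming a basis, while the paper's entry-by-entry version is more concrete; neither gains real generality over the other here since the rank-one hypothesis makes the structure trivial either way.
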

\begin{proof}
	Let $U$ be the unitary matrix that diagonalizes $A$, i.e., $UAU^\dg=\diag(a,0,\cdots,0)$ for $a>0$. 
	We have 
	\begin{eqnarray}
		\label{nbv}
		\notag
		&&\diag(a,0,\cdots,0)+UBU^\dg\ge \mathcal{O}, \\
		&&\diag(a,0,\cdots,0)-UBU^\dg\ge \mathcal{O}.
	\end{eqnarray}
	Write $UBU^\dg:=[b_{ij}]$. From (\ref{nbv}), we obtain that $b_{ii}=0$ for $i\neq 1$, consequently, $b_{ij}=0$ for any $(i,j)\neq (1,1)$. So $UBU^\dg=\diag(b_{11},0,\cdots,0)$, which is linearly dependent with $UAU^\dg$. This implies that $B$ is also linearly dependent with $A$.
	\end{proof}

\vspace{0.2cm}
{\bf Proof of Theorem \ref{ef22}}

Without loss of generality, we can write $\r=\diag(\l_1,\l_2,\l_3,\l_4)$.
The case of deficient-rank has been proved by Lemma \ref{le:rankrhoA=m} (ii). In the following, we assume that $\r$ has full rank. 

We begin with the only if part.  Suppose $\r$ is an extreme point. Firstly, we have already known that $\r$ is a boundary point,  and hence satisfies condition (I). Next, assume $\l_1>\l_1>\l_3>\l_4$. Let 
\begin{eqnarray}
	\notag
	&&M:=\frac{1}{1+2x}\diag(\l_1+x,\l_2,\l_3+x,\l_4),\\
	&&N:=\frac{1}{1-2x}\diag(\l_1-x,\l_2,\l_3-x,\l_4),
\end{eqnarray}
where $0<x<\min\{\l_1-\l_2,\l_2-\l_3,\l_3-\l_4\}$. It is straightforward to see that the diagonal entries (eigenvalues) of both $M,N$ are in decreasing order and satisfy (\ref{eq:2xnAS}). So $M,N\in\as_{2,2}$. Note that $M,N$ are linearly independent since $x\neq 0$. Hence $\r=\frac{1+2x}{2}M+\frac{1-2x}{2}N$ implies that $\r$ is a non-extreme point and leads to a contradiction. Thus $\r$ must satisfy condition (II). 

We next prove the if part. Suppose $\r$ is a boundary point and at least two 
of $\l_1,\l_2,\l_3,\l_4$ are equal. There are two cases at first, i.e., $\r$ has exactly two or three distinct eigenvalues. 

We begin with the first case that $\r$ has exactly two distinct eigenvalues. Combining with the boundary condition, $\r$ is either $\frac{1}{6}\diag(3,1,1,1)$ or $\frac{1}{8+4\sqrt{2}}\diag(3+2\sqrt{2},3+2\sqrt{2},1,1)$. The state $\frac{1}{6}\diag(3,1,1,1)$ has been proved to be an extreme point by Lemma \ref{le:rankrhoA=m} (i). Suppose the state $\r=\frac{1}{8+4\sqrt{2}}\diag(3+2\sqrt{2},3+2\sqrt{2},1,1)$ is a non-extreme point. Using Theorem \ref{thm:main} (ii), there exist two linearly independent states  $\g=\diag(\g_1,\g_2,\g_3,\g_4)\in \as_{2,2}$ and $\eta=\diag(\eta_1,\eta_2,\eta_3,\eta_4)\in \as_{2,2}$, such that $\r=\frac{1}{2}(\g+\eta)$ and $||\r-\g||_2=||\r-\eta||_2<\frac{1}{10}$.
Let $z_i=\g_i-\l_i=\l_i-\eta_i$ for $i=1,2,3,4$. We can rewrite
\begin{eqnarray}
	\notag
	&&\g=\frac{1}{8+4\sqrt{2}}\diag(3+2\sqrt{2}+z_1,3+2\sqrt{2}+z_2,1+z_3,1+z_4),\\
	&&\eta=\frac{1}{8+4\sqrt{2}}\diag(3+2\sqrt{2}-z_1,3+2\sqrt{2}-z_2,1-z_3,1-z_4),
\end{eqnarray}
where $z_i$ (not all zero) satisfy that $\sum_{i=1}^4 z_i=0$ and $\sqrt{\sum_{i=1}^{4}z_i^2}<\frac{1}{10}$. Further, for both $\g$ and $\eta$, the first two diagonal entries are larger than the last two diagonal entries. If $z_1+z_2\ge 0$, then it follows from Lemma \ref{hjk} (iii) that $\g\succ \frac{1}{8+4\sqrt{2}}\diag(3+2\sqrt{2},3+2\sqrt{2},1,1)$, where the two states have distinct eigenvalues. By Theorem \ref{plpl}, we know that $\g\notin \as_{2,2}$.  Similarly, if $z_1+z_2<0$, then we have $\eta\notin \as_{2,2}$. This contradicts with the assumption. Hence $\r$ is an extreme point.

We next consider the second case that $\r$ has three distinct eigenvalues. There are three subcases: $\l_1>\l_2>\l_3=\l_4$, $\l_1>\l_2=\l_3>\l_4$, and $\l_1=\l_2>\l_3>\l_4$.
Here we only prove the first subcase. The proof of other two cases are omitted since they are completely similar.

Let $\l_1>\l_2>\l_3=\l_4$ and assume that $\r$ is a non-extreme point. 
Firstly, let $\epsilon=\frac{1}{10}\min\{\l_1-\l_2,\l_2-\l_3,\l_3\}$. Using Theorem \ref{thm:main} (ii), there exist two linearly independent states $\a=\diag(\a_1,\a_2,\a_3,\a_4)$, $\b=\diag(\b_1,\b_2,\b_3,\b_4)\in\as_{2,2}$, such that $\r=\frac{1}{2}(\a+\b)$ and 
$||\r-\a||_2=||\r-\b||_2<\epsilon$. Let $x_i=\a_i-\l_i=\l_i-\b_i$ for $i=1,\cdots,4$. So $x_i$ are not all zero. We can rewrite 
\begin{eqnarray}
	\label{abc}
	\notag
	&&\a=\diag(\l_1+x_1,\l_2+x_2,\l_3+x_3,\l_3+x_4),\\
	&&\b=\diag(\l_1-x_1,\l_2-x_2,\l_3-x_3,\l_3-x_4),
\end{eqnarray}
where $\sum_{i=1}^4x_i=0$ and $\sqrt{\sum_{i=1}^{4}x_i^2}<\epsilon$. So $|x_1|,\cdots,|x_4|<\epsilon$. Next, let 
\begin{eqnarray}
	\label{yyy}
	\notag
	&&\a'=\diag(\l_1+x_1,\l_2+x_2,\l_3+\frac{x_3+x_4}{2},\l_3+\frac{x_3+x_4}{2}),\\
	&&\b'=\diag(\l_1-x_1,\l_2-x_2,\l_3-\frac{x_3+x_4}{2},\l_3-\frac{x_3+x_4}{2}).
\end{eqnarray}
From the expression of $\epsilon$, one can directly verify that the diagonal entries of both $\a',\b'$ are in non-increasing order. On the other hand,
using Lemma \ref{hjk} (ii), we have $\a\succ \a'$ and $\b\succ \b'$, respectively. So $\a',\b'\in \as_{2,2}$ by Theorem \ref{maj}. 
According to the criterion
(\ref{drt}), we obtain that
\begin{eqnarray}
	\label{iui}
	\notag
	\bma 2(\l_3+\frac{x_3+x_4}{2})&&\l_3+\frac{x_3+x_4}{2}-(\l_1+x_1)\\\l_3+\frac{x_3+x_4}{2}-(\l_1+x_1)&&2(\l_2+x_2)\ema \ge \mathcal{O},\\
	\bma 2(\l_3-\frac{x_3+x_4}{2})&&\l_3-\frac{x_3+x_4}{2}-(\l_1-x_1)\\\l_3-\frac{x_3+x_4}{2}-(\l_1-x_1)&&2(\l_2-x_2)\ema \ge \mathcal{O}.
\end{eqnarray}
Recalling that $\r$ is a boundary point and thus $\bma 2\l_4&&\l_3-\l_1\\\l_3-\l_1&&2\l_2\ema$ has rank one. From (\ref{iui}) and using Lemma \ref{foc}, the two matrices 
$\bma 2\l_3&&\l_3-\l_1\\\l_3-\l_1&&2\l_2\ema$ and $\bma 2\cdot \frac{x_3+x_4}{2}&&\frac{x_3+x_4}{2}-x_1\\\frac{x_3+x_4}{2}-x_1&&2x_2\ema$ are linearly dependent. Consequently, the two vectors $(\l_1,\l_2,\l_3,\l_3)$ and $(x_1,x_2,\frac{x_3+x_4}{2},\frac{x_3+x_4}{2})$ are linearly dependent.
Due to the fact that $\sum_{i=1}^4 x_i=0$, we have $x_1=x_2=\frac{x_3+x_4}{2}=0$. Taking back to (\ref{abc}), we have $\a=\diag(\l_1,\l_2,\l_3+x_3,\l_3-x_3)$ with $x_3\neq 0$. It follows that $\a\succ \r$ but does not have the same eigenvalues as $\r$. However, since $\r$ is a boundary point, using Theorem \ref{plpl}, $\a\notin \as_{2,2}$. This is a contradiction. Hence the assumption does not hold and $\r$ is an extreme point.
$\hfill\square$

\vspace{0.2cm}
{\bf Proof of Theorem \ref{mmain}}

Without loss of generality, let $\rho=\diag(\lambda_1,\cdots,\lambda_{2n})$. 
The case of deficient-rank has been proved by Lemma \ref{le:rankrhoA=m} (ii). In the following, we assume that $\r$ has full rank.

We begin with the only if part. Suppose $\r$ is an extreme point. So $\r$ is a boundary point, and condition (I) should be satisfied. 
	Next, assume that condition (II) is violated. This implies that $n>2$ and $\lambda_1>\l_k>\lambda_{2n-2}$ for at least one $2\le k\le 2n-3$.
	Let $t$ satisfies that $\lambda_{1}>\lambda_k\pm t>\lambda_{2n-2}$ and
	\begin{eqnarray}
		\notag
		&&\a=\frac{1}{1+t}\diag(\lambda_1,\cdots,\lambda_{k-1},\lambda_k+t,\lambda_{k+1},\cdots,\lambda_{2n}),\\
		&&\b=\frac{1}{1-t}\diag(\lambda_1,\cdots,\lambda_{k-1},\lambda_k-t,\lambda_{k+1},\cdots,\lambda_{2n}).
	\end{eqnarray}
	One can verify that $\a,\b\in \as_{2,n}$ are linearly independent. Hence, $\rho=\frac{1+t}{2}\a+\frac{1-t}{2}\b$ implies that it is non-extreme. This is a contradiction. Hence condition (II) must hold.
		Finally, suppose conditions (I) and (II) hold but (III) is violated.  It follows that 
$
\lambda_1=\cdots=\lambda_k>\lambda_{k+1}=\cdots=\lambda_{2n-2}>\lambda_{2n-1}>\lambda_{2n},
$
where $1\le k\le 2n-3$.	Let
$0<x<\min\{\lambda_{1}-\lambda_{2n-2},\lambda_{2n-2}-\lambda_{2n-1},\lambda_{2n-1}-\lambda_{2n},\frac{1}{k+1}\}$,
and
\begin{eqnarray}
	\notag
	&&\g=\frac{1}{1+kx+x}\diag(\l_1+x,\cdots,\l_k+x,\l_{k+1},\cdots,\l_{2n-2},\l_{2n-1}+x,\l_{2n}),\\
	&&\eta=\frac{1}{1-kx-x}\diag(\l_1-x,\cdots,\l_k-x,\l_{k+1},\cdots,\l_{2n-2},\l_{2n-1}-x,\l_{2n}).
	\end{eqnarray}
From the range of $x$, we have the largest eigenvalue of $\g$ is $\l_1+x$, while the three smallest eigenvalues are $\l_{2n-2}\ge\l_{2n-1}+x\ge\l_{2n}$. According to the criterion (\ref{eq:2xnAS}), we have $\g\in\as_{2,n}$. Similarly, we have $\eta\in\as_{2,n}$. Consequently, $\r=\frac{1+kx+x}{2}\g+\frac{1-kx-x}{2}\eta$ implies $\r$ is a non-extreme point. This is a contradiction. Hence condition (III) must hold. We have proved the only if part.
	
	We next prove the if part. Let $\rho$
	satisfy condition (I)-(III). Suppose $\rho$ is a non-extreme point. 
	It follows from condition (II) that  
	\begin{eqnarray}
		\label{lkj}
	\lambda_1=\cdots=\lambda_k\ge\lambda_{k+1}=\cdots=\lambda_{2n-2}\ge\lambda_{2n-1}\ge \lambda_{2n}
	\end{eqnarray}
	for $1\le k\le 2n-3$. Using Theorem \ref{thm:main} (ii), there exist two linearly independent diagonal states $M,N\in\as_{2,n}$ such that $\r=\frac{1}{2}(M+N)$. Let $\mathcal{J}_1=\{1,\cdots,k\}$, $\mathcal{J}_2=\{k+1,\cdots,2n-2\}$ and $
	\cK=\{a,b,2n-1,2n\}$,
	where $a\in J_1$ and $b\in J_2$. We have 
	$\r_{\cK}=\frac{1}{2}(M_{\cK}+N_{\cK})$.
Using Theorem \ref{le:mxn=AS}, we obtain that 
	$\r_{\cK}, M_{\cK}, N_{\cK}\in \as_{2,2}$. 
	Further, since $\r$ satisfies conditions (I), (III) and (\ref{lkj}), it follows from the definition of $\cK$ and Theorem \ref{ef22} that $\r_{\cK}$ is an extreme point of $\as_{2,2}$.
 This implies that $M_{\cK}$ and $N_{\cK}$ are linearly dependent with $\r_{\cK}$.
	Let $a$, $b$ go through $\mathcal{J}_1$ and 
	$\mathcal{J}_2$ respectively,  the two states $M_{\cK}$ and $N_{\cK}$ are always linearly dependent. We conclude that $M,N$ are linearly dependent, which leads to a contradiction. Hence $\r$ is an extreme point.
	This completes the proof.
$\hfill\square$

\vspace{0.2cm}
{\bf Proof of Corollary \ref{mbn}}

(i) The if part can be verified directly. We next prove the only if part. Suppose the extreme point $\r$ resides on the maximal ball. Using Theorem \ref{ef22}, we know that $\r$ is a boundary point of $\as_{2,2}$ and at least two of $\lambda_1,\lambda_2,\lambda_3,\lambda_4$ are equal. The case of deficient rank has been proved in Lemma \ref{le:rankrhoA=m} (ii). We only need to assume that $\r$ has full rank. Let $k_1:=\frac{\lambda_2}{\lambda_4}\ge 1$. If $\lambda_3=\lambda_4$, the eigenvalues vector of $\r$ can be written as  $\frac{1}{3+2\sqrt{k_1}+k_1}(1+2\sqrt{k_1},k_1,1,1)$. A calculation gives that $\tr(\r^2)-\frac{1}{3}=\frac{2 \left(\sqrt{k_1}-1\right)^2 k_1}{3\left(k_1+2 \sqrt{k_1}+3\right)^2}\ge 0$. Hence the inequality is satisfied only if $k_1=1$ and it follows that $\l_2=\l_3=\l_4$. Similarly, if $\l_2=\l_3$ then we also have $k_1=1$. If $\l_1=\l_2$, then 
	$\tr(\r^2)-\frac{1}{3}=\frac{2 \left(\sqrt{k_1}+1\right)^2 k_1}{3\left(k_1-2 \sqrt{k_1}+3\right)^2}>0$. We conclude that $\r$ resides on the maximal ball only if $\l(\r)=(\frac{1}{2},\frac{1}{6},\frac{1}{6},\frac{1}{6})$. 
	
	(ii) The if part can be verified directly. We next prove the only if part. It suffices to prove that any full-rank extreme point $\r$ resides outside the maximal ball. Let $k_2:=\frac{\lambda_{2n-2}}{\lambda_{2n}}\ge 1$. According to condition (III) in Theorem \ref{mmain}, there are three cases based on $\lambda_1,\lambda_{2n-2},\lambda_{2n-1},\lambda_{2n}$. Suppose $\lambda_{1}=\lambda_{2n-2}$. Then $\l(\r)=\frac{1}{(2n-1)k_2-2\sqrt{k_2}+1}(k_2,\cdots,k_2,k_2-2\sqrt{k_2},1)$. A computation in Mathematica yields
	\begin{eqnarray}
		\tr(\r^2)-\frac{1}{2n-1}=\frac{2 \left(k_2 (2 n-3)+2 \sqrt{k_2}+n-1\right)}{(2 n-1) \left(-2 k_2 n+k_2+2 \sqrt{k_2}-1\right)^2}>0,
	\end{eqnarray}
	which implies that $\r$ resides outside the maximal ball. For the other two cases where $\lambda_{2n-2}=\lambda_{2n-1}$ or $\lambda_{2n-1}=\lambda_{2n}$, we apply a similar way and it turns out that $\tr(\r^2)>\frac{1}{2n-1}$ holds. We conclude that the claim holds.
	$\hfill\square$

\section{Proof of results in Section \ref{sec:3x3}}
\label{profL1+L2}
Before proving the main results, we require the following two lemmas.
\begin{lemma}
		\label{pf3e}
		(i) Suppose $x\in \cZ_9^+$ such that $L_1(x)\ge \mathcal{O}$. Then 
		\begin{eqnarray}
			\label{331}
			x_1&&\le x_8+2\sqrt{x_7x_9}\\
			\label{3331}
			&&\le x_7+x_8+x_9.
		\end{eqnarray}
		In particular, if (\ref{331}) is saturated, then $x_2=\cdots=x_6$.
		
		(ii) Suppose $x\in \cZ_9^+$ such that $L_2(x)\ge \mathcal{O}$. Then 
		\begin{eqnarray}
			\label{r331}
			x_1&&\le x_8+2\sqrt{x_6x_9}\\
			\label{r3331}
			&&\le x_6+x_8+x_9.
		\end{eqnarray}
		In particular, if (\ref{r331}) is saturated, then $x_2=\cdots=x_7$.

		(iii) Suppose $x\in \cZ_9^+$ such that $L_1(x)\ge \mathcal{O}$ and $l_1(x)=0$. Then 
		\begin{eqnarray}
			\label{det89}
			&&(x_3-x_5)^2-4x_4x_7+(x_3-x_5)(x_2-x_6)+2x_4(x_1-x_8)\le 0,\\
			\label{det69}
			&&(x_3-x_5)^2-4x_4x_7+(x_3-x_5)(x_1-x_8)+2x_7(x_2-x_6)\le 0,\\
			\label{det45}
			&&(x_1-x_8)^2-4x_7x_9+(x_1-x_8)(x_2-x_6)+2x_9(x_3-x_5)\ge 0,\\
			\label{det444}
			&&(x_1-x_8)^2-4x_7x_9+(x_1-x_8)(x_3-x_5)+2x_7(x_2-x_6)\ge 0,\\
			\label{zlt}
			&&(x_2-x_6)^2-4x_4x_9+(x_2-x_6)(x_1-x_8)+2x_9(x_3-x_5)\le 0,\\
			\label{zllt}
			&&(x_2-x_6)^2-4x_4x_9+(x_2-x_6)(x_3-x_5)+2x_4(x_1-x_8)\ge 0.
		\end{eqnarray} 
		
		(iv) Suppose $x\in \cZ_9^+$ such that $L_2(x)\ge \mathcal{O}$ and $l_2(x)=0$. Then 
		\begin{eqnarray}
			\label{det289}
			&&(x_3-x_5)^2-4x_4x_6+(x_3-x_5)(x_2-x_7)+2x_4(x_1-x_8)\le 0,\\
			\label{det269}
			&&(x_3-x_5)^2-4x_4x_6+(x_3-x_5)(x_1-x_8)+2x_6(x_2-x_7)\le 0,\\
			\label{det245}
			&&(x_1-x_8)^2-4x_6x_9+(x_1-x_8)(x_2-x_7)+2x_9(x_3-x_5)\ge 0,\\
			\label{det2444}
			&&(x_1-x_8)^2-4x_6x_9+(x_1-x_8)(x_3-x_5)+2x_6(x_2-x_7)\ge 0,\\
			&&(x_2-x_7)^2-4x_4x_9+(x_2-x_7)(x_1-x_8)+2x_9(x_3-x_5)\le 0,\\
			&&(x_2-x_7)^2-4x_4x_9+(x_2-x_7)(x_3-x_5)+2x_4(x_1-x_8)\ge 0.
		\end{eqnarray} 
	\end{lemma}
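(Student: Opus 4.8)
The plan is to dispatch parts (i)--(ii) with a single $2\times2$ principal minor, and to reduce parts (iii)--(iv) to an ordering statement about the (rank-one) adjugate of $L_1$ (resp.\ $L_2$).

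For (i), I would read off the leading $2\times2$ principal minor of $L_1(x)\ge\mathcal O$, namely $4x_7x_9-(x_1-x_8)^2\ge0$; since $x_1\ge x_8$ this is exactly (\ref{331}), and (\ref{3331}) is then AM--GM, $2\sqrt{x_7x_9}\le x_7+x_9$. For the saturation claim, equality in (\ref{331}) makes that $2\times2$ block singular, so $w:=(x_1-x_8,\,2x_9,\,0)^{T}$ lies in its kernel and hence $w^{T}L_1(x)w=0$; positive semidefiniteness forces $L_1(x)w=0$. The third coordinate of this identity reads $(x_6-x_2)(x_1-x_8)+2x_9(x_5-x_3)=0$, a sum of two nonpositive terms, so each vanishes. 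As $x_1>x_8$ (equality would give $x_7x_9=0$, impossible since $x_7\ge x_8\ge x_9>0$), this yields $x_2=x_6$ and $x_3=x_5$, whence $x_2=\cdots=x_6$. Part (ii) is verbatim the same with $2x_6$ in the $(2,2)$ slot, giving $x_2=\cdots=x_7$.

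For (iii), write $A:=L_1(x)$ and set $p:=x_1-x_8$, $q:=x_2-x_6$, $r:=x_3-x_5$. Since $A\ge\mathcal O$ and $l_1(x)=\det A=0$, the adjugate $\mathrm{adj}(A)$ is positive semidefinite of rank at most one. If $\mathrm{adj}(A)=\mathcal O$ all $2\times2$ minors vanish and the six inequalities read $0\le0$; otherwise $\mathrm{adj}(A)=\tilde v\tilde v^{T}$. The principal $2\times2$ minors give $\tilde v_i^2=\mathrm{adj}_{ii}\ge0$, and a direct computation shows the off-diagonal cofactors $\mathrm{adj}_{12}=2x_4p+qr$, $\mathrm{adj}_{13}=pr+2x_7q$, $\mathrm{adj}_{23}=pq+2x_9r$ are all nonnegative, so the products $\tilde v_i\tilde v_j$ are nonnegative and $\tilde v$ may be taken with $\tilde v\ge0$. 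Substituting $\mathrm{adj}_{ij}=\tilde v_i\tilde v_j$ rewrites (\ref{det89})--(\ref{zllt}) as $\tilde v_2^2\le\tilde v_1\tilde v_2\le\tilde v_1^2$, $\tilde v_3^2\le\tilde v_1\tilde v_3\le\tilde v_1^2$ and $\tilde v_3^2\le\tilde v_2\tilde v_3\le\tilde v_2^2$, so that all six hold precisely when $\tilde v_1\ge\tilde v_2\ge\tilde v_3\ge0$.

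The crux is this ordering, which I expect to be the main obstacle, since the six inequalities are not individually sign-definite. From $\mathrm{adj}(A)\,A=\det(A)I=\mathcal O$ one gets $\tilde v\in\ker A$, i.e.\ $2x_9\tilde v_1=p\tilde v_2+q\tilde v_3$, $2x_7\tilde v_2=p\tilde v_1+r\tilde v_3$, $2x_4\tilde v_3=q\tilde v_1+r\tilde v_2$. The key structural inputs, both inherited from $x\in\cZ_9^+$, are $p\ge q\ge r\ge0$ and $x_4\ge x_7\ge x_9>0$. To get $\tilde v_1\ge\tilde v_2$ I assume $\tilde v_2>\tilde v_1$ and subtract the first two equations: $2x_9\tilde v_1-2x_7\tilde v_2=p(\tilde v_2-\tilde v_1)+(q-r)\tilde v_3\ge0$, so $x_9\tilde v_1\ge x_7\tilde v_2$, contradicting $x_7\tilde v_2\ge x_9\tilde v_2>x_9\tilde v_1$. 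The inequality $\tilde v_2\ge\tilde v_3$ follows identically from the last two equations using $p\ge q$ and $x_4\ge x_7$. Finally (iv) is proved by the same scheme: $L_2(x)$ has the identical sign pattern with $2x_6$ in the $(2,2)$ entry and $q':=x_2-x_7$ replacing $q$, and one checks $p\ge q'\ge r\ge0$ and $x_4\ge x_6\ge x_9>0$, so the ordering argument and the translation of the cofactor inequalities go through unchanged.
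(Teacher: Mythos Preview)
Your argument is correct and takes a genuinely different route from the paper's. For part (i), the paper appeals to an external reference for (\ref{331}) and then, assuming equality, expands $l_1(x)$ directly to a sum of three nonpositive terms, forcing $x_2=x_6$ and $x_3=x_5$; your kernel argument via $w^{T}L_1(x)w=0$ is self-contained and arguably cleaner. The real divergence is in (iii). The paper treats the six inequalities in pairs: for (\ref{det89}) and (\ref{det45}) it views $l_1(x)$ as a quadratic in $x_1-x_8$ (resp.\ $x_3-x_5$), factors it via the quadratic formula, identifies which root must vanish from the sign of the other, and then rewrites the left-hand side as a product with a controlled sign; (\ref{det69}) and (\ref{det444}) are then deduced by bounding their difference from the already-proved inequalities, and (\ref{zlt}), (\ref{zllt}) are declared similar. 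Your approach is more structural: recognising that each inequality is exactly $\mathrm{adj}_{jk}\lessgtr\mathrm{adj}_{ii}$, you reduce all six simultaneously to the single ordering $\tilde v_1\ge\tilde v_2\ge\tilde v_3$ on the rank-one factor of the adjugate, and then extract this ordering from the kernel relation $A\tilde v=0$ together with the monotonicity $p\ge q\ge r$ and $x_4\ge x_7\ge x_9$. This buys uniformity (one argument instead of three separate computations plus two difference estimates), makes transparent why the inequalities come in the pairs they do, and explains why (iv) follows verbatim once you check $p\ge q'\ge r$ and $x_4\ge x_6\ge x_9$. The paper's approach, by contrast, is more hands-on and makes the equality cases in each inequality individually visible, which can be convenient downstream.
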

	\begin{proof}
	(i) The inequality (\ref{331}) follows from Lemma \ref{le:rankrhoA=m} (i), and (\ref{3331}) follows directly from (\ref{331}). Suppose $x_1=x_8+2\sqrt{x_7x_9}$, a direct calculation gives
\begin{eqnarray}
	l_1(x)=2(x_8-x_1)(x_6-x_2)(x_5-x_3)
	-2(x_6-x_2)^2x_7-2(x_5-x_3)^2x_9\le 0.
\end{eqnarray}
Hence $L_1(x)\ge \mathcal{O}$ only if $l_1(x)=0$, which implies $x_2-x_6=0$. 

(ii) The proof is similar to (i).

(iii) We first prove (\ref{det89}) and (\ref{det45}), followed by proving (\ref{det69}) and (\ref{det444}). The proofs of (\ref{zlt}) and (\ref{zllt}) are similar to (\ref{det89}) and (\ref{det45}) respectively, hence we will skip them here.

Proof of (\ref{det89}): Since $L_1(x)\ge \mathcal{O}$, we know that $4x_4x_9-(x_2-x_6)^2\ge 0$. Suppose $4x_4x_9-(x_2-x_6)^2=0$, equivalently, $x_2=x_6+2\sqrt{x_4x_9}$. Combining with (\ref{331}), we have 
$
	x_2\le x_1\le x_8+2\sqrt{x_7x_9}\le x_6+2\sqrt{x_4x_9},
$
where all the inequalities are saturated. This implies that $x_1=x_2$ and $x_6=x_8$. According to (i), we obtain that $x_1=\cdots=x_8$ and (\ref{det89}) holds. On the other hand, suppose $4x_4x_9-(x_2-x_6)^2>0$. We also have $4x_4x_7-(x_3-x_5)^2>0$. By calculation,
$
	l_1(x)=-2x_4 \cdot r_1 \cdot r_2,
$
where
\begin{eqnarray}
	\notag
	&&r_1=x_8-x_1-\frac{(x_2-x_6)(x_3-x_5)+\sqrt{(4x_4x_7-(x_3-x_5)^2)(4x_4x_9-(x_2-x_6)^2)}}{2x_4},\\
	\notag
	&&r_2=x_8-x_1-\frac{(x_2-x_6)(x_3-x_5)-\sqrt{(4x_4x_7-(x_3-x_5)^2)(4x_4x_9-(x_2-x_6)^2)}}{2x_4}.
\end{eqnarray}
One can verify that $r_1<0$. Thus the hypothesis that $l_1(x)=0$ implies  $r_2=0$.
Considering the leftside of (\ref{det89}), we have
\begin{eqnarray}
	\notag
	&&(x_3-x_5)^2-4x_4x_7+(x_3-x_5)(x_2-x_6)+2x_4(x_1-x_8)\\
	\notag
	=&&-x_4(1-\sqrt{\frac{4x_4x_7-(x_3-x_5)^2}{4x_4x_9-(x_2-x_6)^2}})r_1
	-x_4(1+\sqrt{\frac{4x_4x_7-(x_3-x_5)^2}{4x_4x_9-(x_2-x_6)^2}})r_2\\
	=&&-x_4(1-\sqrt{\frac{4x_4x_7-(x_3-x_5)^2}{4x_4x_9-(x_2-x_6)^2}})r_1.
\end{eqnarray}
Thus (\ref{det89}) holds since $r_1<0$ and  $1-\sqrt{\frac{4x_4x_7-(x_3-x_5)^2}{4x_4x_9-(x_2-x_6)^2}}\le 0$.

Proof of (\ref{det45}): It is straightforward to see that the inequality holds if $(x_1-x_8)^2-4x_7x_9=0$. Suppose $(x_1-x_8)^2-4x_7x_9<0$, we also have
$(x_2-x_6)^2-4x_4x_9<0$. A calculation gives
$
	l_1(x)=-2 x_9\cdot r_3\cdot r_4,
$
where
\begin{eqnarray}
	\notag
	&&r_3=x_5-x_3-\frac{(x_2-x_6)(x_1-x_8)+\sqrt{(4x_7x_9-(x_1-x_8)^2)(4x_4x_9-(x_2-x_6)^2)}}{2x_9},\\
	\notag
	&&r_4=x_5-x_3-\frac{(x_2-x_6)(x_1-x_8)-\sqrt{(4x_7x_9-(x_1-x_8)^2)(4x_4x_9-(x_2-x_6)^2)}}{2x_9}.
\end{eqnarray}
One can verify that $r_3<0$. Thus the hypothesis that $l_1(x)=0$ implies  $r_4=0$.
Considering the leftside of (\ref{det45}), we have
\begin{eqnarray}
	\notag
	&&(x_1-x_8)^2-4x_7x_9+(x_1-x_8)(x_2-x_6)+2x_9(x_3-x_5)\\
	\notag
	=&&x_9(-1+\sqrt{\frac{4x_7x_9-(x_1-x_8)^2}{4x_4x_9-(x_2-x_6)^2}})r_3
	+x_9(-1-\sqrt{\frac{4x_7x_9-(x_1-x_8)^2}{4x_4x_9-(x_2-x_6)^2}})r_4\\
	=&&x_9(-1+\sqrt{\frac{4x_7x_9-(x_1-x_8)^2}{4x_4x_9-(x_2-x_6)^2}})r_3.
\end{eqnarray}
Thus (\ref{det45}) holds since $r_3<0$ and  $-1+\sqrt{\frac{4x_7x_9-(x_1-x_8)^2}{4x_4x_9-(x_2-x_6)^2}}\le 0$.

Proof of (\ref{det69}): By calculation, the difference between the leftsides of (\ref{det69}) and (\ref{det89}) is
\begin{eqnarray}
(x_3-x_5-2x_4)(x_1-x_8)+(2x_7-x_3+x_5)(x_2-x_6)\le 2(x_1-x_8)(x_7-x_4)\le 0,
\end{eqnarray}
where the first inequality follows from (\ref{3331}). Thus (\ref{det69}) holds according to (\ref{det89}).

Proof of (\ref{det444}): A direct calculation gives the difference between the leftsides of (\ref{det444}) and (\ref{det45}) is
\begin{eqnarray}
(x_1-x_8-2x_9)(x_3-x_5)+(2x_7-x_1+x_8)(x_2-x_6)\ge 2(x_3-x_5)(x_7-x_9)\ge 0,
\end{eqnarray}
where the first inequality follows from (\ref{3331}). Thus (\ref{det444}) holds according to (\ref{det45}).

(iv) The proof is similar to (iii).
	\end{proof}

\begin{lemma}
	\label{ppp}
	Define the sets
	\begin{eqnarray}
		\label{cD1}
		&&\cD_1:=\{x\in \cZ_9^+|L_1(x)\ge \mathcal{O}, L_2(x)\ge \mathcal{O}, l_1(x)=0\},\\
		&&\cD_2:=\{x\in \cZ_9^+|L_1(x)\ge \mathcal{O}, L_2(x)\ge \mathcal{O}, l_2(x)=0\}.
	\end{eqnarray}
	
	(i) Given a vector $x\in \cD_1$ and a pair of integers $(i,j)$ satisfies that $1\le i<j\le 9$ and $(i,j)\neq (6,7)$. Suppose there exists $t>0$ such that the vector $(x_1,\cdots,x_{i-1},x_i+t,x_{i+1}\cdots,x_{j-1},x_j-t,x_{j+1},\cdots, x_9)\in \cZ_9^+$. Then there exists $\delta\in (0,t)$ such that $l_1(y)<0$, where 
	$y:=(x_1,\cdots,x_{i-1},x_i+\delta,x_{i+1}\cdots,x_{j-1},x_j-\delta,x_{j+1},\cdots, x_9)\in \cZ_9^+$.
	
	(ii) Given a vector $x\in \cD_1$ with $x_6>x_7$. Suppose there exists $t>0$ such that $(x_1,\cdots,x_{5},x_6+t,x_{7}-t,x_{8},x_9)\in \cZ_9^+$. Then there exists 
	$\delta\in (0,t)$ such that 
$l_1((x_1,\cdots,x_{5},x_6+\delta,x_{7}-\delta,x_{8},x_9))<0$.

	(iii) Given a vector $x\in \cD_2$ and a pair of integers $(i,j)$ satisfies that $1\le i<j\le 9$ and $(i,j)\neq (6,7)$. Suppose there exists $t>0$ such that the vector $(x_1,\cdots,x_{i-1},x_i+t,x_{i+1}\cdots,x_{j-1},x_j-t,x_{j+1},\cdots, x_9)\in \cZ_9^+$. Then there exists $\delta\in (0,t)$ such that $l_2(y)<0$, where 
	$y:=(x_1,\cdots,x_{i-1},x_i+\delta,x_{i+1}\cdots,x_{j-1},x_j-\delta,x_{j+1},\cdots, x_9)\in \cZ_9^+$.

	(iv)  Given a vector $x\in \cD_2$ with $x_6>x_7$. Suppose there exists $t>0$ such that $(x_1,\cdots,x_{5},x_6+t,x_{7}-t,x_{8},x_9)\in \cZ_9^+$. Then there exists 
	$\delta\in (0,t)$ such that 
$l_2((x_1,\cdots,x_{5},x_6+\delta,x_{7}-\delta,x_{8},x_9))<0$.
\end{lemma}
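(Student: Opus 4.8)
The plan is to fix the admissible pair $(i,j)$ and study the single-variable polynomial $\phi(\delta):=l_1(y(\delta))$, where $y(\delta)=x+\delta(e_i-e_j)$ and $e_k$ are the standard basis vectors; by hypothesis $y(\delta)\in\cZ_9^+$ for $\delta\in[0,t]$, and $\phi(0)=l_1(x)=0$ because $x\in\cD_1$. Since $\phi$ is a cubic with $\phi(0)=0$, it suffices in the generic situation to show that the one-sided derivative $\phi'(0)$ is strictly negative, for then $\phi(\delta)<0$ for all small $\delta>0$. First I would record the structural features of $A:=L_1(x)$: it is positive semidefinite with strictly positive diagonal ($2x_9,2x_7,2x_4>0$) and nonpositive off-diagonal entries ($x_8-x_1,\,x_6-x_2,\,x_5-x_3\le 0$), and $\det A=l_1(x)=0$. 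The sign pattern forces $A$ to have rank exactly two: rank one would make the product of the three off-diagonal entries simultaneously $\ge 0$ and $\le 0$, hence zero, which is incompatible with the positivity of the diagonal. Therefore $\mathrm{adj}(A)=c\,vv^T$ for some $c>0$ and a null vector $v$ of $A$; moreover every cofactor of $A$ is nonnegative, so $v$ may be taken entrywise nonnegative.

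With this in hand the derivative is $\phi'(0)=\tr(\mathrm{adj}(A)\,\dot L_1)=c\,v^T\dot L_1\,v$, where $\dot L_1=\partial_{x_i}L_1-\partial_{x_j}L_1$ is a constant sparse matrix. I would compute the nine partials $\tfrac12\partial_{x_k}l_1$ and observe that each equals $\pm$ a cofactor of $A$: the variables $x_4,x_5,x_6,x_7,x_8,x_9$ yield $\mathrm{adj}_{33},\mathrm{adj}_{23},\mathrm{adj}_{13},\mathrm{adj}_{22},\mathrm{adj}_{12},\mathrm{adj}_{11}$ (all $\ge 0$), while $x_1,x_2,x_3$ yield $-\mathrm{adj}_{12},-\mathrm{adj}_{13},-\mathrm{adj}_{23}$ (all $\le 0$). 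For part (i), any pair with $i\le 3<j$ then gives $\phi'(0)\le 0$ for free. For the remaining same-side pairs I would invoke Lemma~\ref{pf3e}(iii): rewritten through the correspondence above, its six inequalities say precisely $\mathrm{adj}_{33}\le\mathrm{adj}_{23}\le\mathrm{adj}_{22}\le\mathrm{adj}_{12}\le\mathrm{adj}_{11}$ together with $\mathrm{adj}_{33}\le\mathrm{adj}_{13}\le\mathrm{adj}_{11}$, which forces the ordering $\mathrm{adj}_{11}\ge\mathrm{adj}_{22}\ge\mathrm{adj}_{33}$ of the diagonal cofactors, i.e. $v_1\ge v_2\ge v_3\ge 0$. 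A short check then shows $v^T\dot L_1 v\le 0$ for every admissible $(i,j)$ with $i<j$ and $(i,j)\neq(6,7)$, the only comparison left uncontrolled being $v_1v_3$ versus $v_2^2$, which is exactly the excluded pair $(6,7)$.

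For part (ii) I would treat that excluded pair using the extra hypothesis $x_6>x_7$. The key observation is the identity $\dot L_1=\frac{1}{x_6-x_7}\bigl(L_1(x)-L_2(x)\bigr)$, since $L_1$ and $L_2$ differ only in their $(2,2)$ and $(1,3)$ entries, by exactly this rank-structured matrix. Hence $\phi'(0)=\frac{c}{x_6-x_7}\,v^T\!\bigl(L_1(x)-L_2(x)\bigr)v=-\frac{c}{x_6-x_7}\,v^TL_2(x)\,v$, using $Av=0$ and $\tr(\mathrm{adj}(A)A)=3\det A=0$. Because $L_2(x)\ge\mathcal{O}$, this is $\le 0$; and it is strictly negative, for $v^TL_2(x)v=0$ would give $L_2(x)v=0$, whence $(L_1(x)-L_2(x))v=0$, which by the explicit form of $L_1-L_2$ and $x_6\neq x_7$ forces $v=0$, a contradiction. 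Parts (iii) and (iv) are the mirror images: $L_1$ and $L_2$ play symmetric roles (they interchange the indices $6$ and $7$), so the same two arguments apply verbatim with $l_1,L_1,\cD_1$ replaced by $l_2,L_2,\cD_2$ and Lemma~\ref{pf3e}(iii) replaced by Lemma~\ref{pf3e}(iv).

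The hard part will be the degenerate situation in part (i) where $\phi'(0)=0$, i.e. where the relevant $v$-components coincide (for instance $v_2=v_3$ for the pair $(4,5)$), since then the first-order test is inconclusive while the perturbation may still be admissible. There I would pass to the next nonvanishing coefficient of the cubic $\phi$, namely the quadratic term $\tr(\mathrm{adj}(\dot L_1)\,L_1(x))$, and show it is negative; alternatively, and perhaps more robustly, I would use the explicit factorizations $l_1=-2x_4\,r_1r_2$ and $l_1=-2x_9\,r_3r_4$ from the proof of Lemma~\ref{pf3e}(iii), in which the boundary condition $l_1(x)=0$ pins one factor to zero while the others keep a definite sign, so that the perturbation drives $l_1$ strictly negative. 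Controlling these ties uniformly over all admissible pairs is the delicate step of the argument.
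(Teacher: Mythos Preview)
Your approach is genuinely different from the paper's and considerably more conceptual. The paper splits the $35$ pairs $(i,j)$ into eight groups and for each group writes out the cubic $l_1(y(\delta))$ explicitly, then argues sign case by case. You instead use Jacobi's formula $\phi'(0)=\tr(\mathrm{adj}(L_1(x))\,\dot L_1)$ and the rank-two structure $\mathrm{adj}(L_1(x))=c\,vv^T$ with $v_1\ge v_2\ge v_3\ge 0$. Your key observation---that the six inequalities of Lemma~\ref{pf3e}(iii) are exactly the cofactor orderings $\mathrm{adj}_{33}\le\mathrm{adj}_{23}\le\mathrm{adj}_{22}\le\mathrm{adj}_{12}\le\mathrm{adj}_{11}$ and $\mathrm{adj}_{33}\le\mathrm{adj}_{13}\le\mathrm{adj}_{11}$---is correct and gives $\phi'(0)\le 0$ for every pair except $(6,7)$ in one stroke, with $(6,7)$ the only comparison $v_1v_3$ versus $v_2^2$ left uncontrolled. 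Your treatment of part~(ii) via the identity $(x_6-x_7)\dot L_1=L_1(x)-L_2(x)$ and $\tr(\mathrm{adj}(L_1)L_1)=3l_1(x)=0$ is cleaner than the paper's coordinate computation, which reaches the same conclusion through the scalar relation $l_2(x)-l_1(x)\ge 0$.

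The only soft spot is the one you flag: strictness when $\phi'(0)=0$. Here the paper's brute force actually tells you what to expect. In groups A1, A2, A4, A6, A7, A8 the paper's explicit quadratic coefficient is always a strictly negative multiple of a diagonal entry of $L_1(x)$ (e.g.\ $-2x_9$, $-8x_4$, $-2x_7$), so your proposed second-order test $\tr(\mathrm{adj}(\dot L_1)L_1(x))<0$ succeeds there directly. In groups A3 and A5 the paper instead shows the \emph{linear} coefficient is strictly negative, using the admissibility constraint (e.g.\ for $(1,2)$ one needs $x_2>x_3$). In your language: for A5-type pairs, $\phi'(0)=0$ would force $v$ to have two zero components, which contradicts $L_1(x)v=0$ with $2x_9>0$; for A3-type pairs, if both the linear and quadratic coefficients vanish one is pushed to equality in (\ref{3331}) and hence, via Lemma~\ref{pf3e}(i), to $x_2=\cdots=x_6$, which the admissibility hypothesis excludes. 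So your two-step plan (check $\phi'(0)<0$; if zero, check the quadratic term, invoking admissibility when needed) does complete the proof, but you will not escape a small amount of case-splitting at this last stage---the uniformity you hoped for does not quite materialize, though the bookkeeping is lighter than the paper's.
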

\begin{proof}
(i) There are 35 possible cases of $(i,j)$ to consider, which we divide into eight kinds of cases as follow:
\begin{eqnarray}
	\notag
	&&A1=\{(4,7),(4,9),(7,9)\},\\
	\notag
	&&A2=\{(1,8),(2,6),(3,5)\},\\
	\notag
	&&A3=\{(1,2),(1,3),(2,3),(5,6),(5,8),(6,8)\},\\
	\notag
	&&A4=\{(2,4),(3,4),(1,7),(3,7),(1,9),(2,9)\},\\
	\notag
	&&A5=\{(1,4),(2,9),(3,7)\},\\
	\notag
	&&A6=\{(1,5),(1,6),(2,5),(2,8),(3,6),(3,8)\},\\ \notag
	&&A7=\{(4,5),(4,6),(5,7),(7,8),(6,9),(8,9)\},\\
	\notag
	&&A8=\{(4,8),(5,9)\}.
\end{eqnarray}
We point out that the proofs for different cases within the same category are essentially identical, requiring no specific techniques but rather straightforward calculations. Therefore, we present a detailed proof of one case in each category, with the proofs for the remaining cases being easily derivable.

Proof of A1:
Suppose $(i,j)=(4,7)$ and there exists a $t>0$ such that 
$(x_1,x_2,x_3,x_4+t,x_5,x_6,x_7-t,x_8,x_9)\in \cZ_9^+$. Thus for any $\delta\in (0,t)$, $y=(x_1,x_2,x_3,x_4+\delta,x_5,x_6,x_7-\delta,x_8,x_9)\in \cZ_9^+$.
Consequently, a calculation gives 
\begin{eqnarray}
l_1(y)=l_1(y)-l_1(x)
	=-2\delta[(x_1-x_8)^2-(x_2-x_6)^2+4(x_4-x_7)x_9]-8x_9\delta^2<0.
\end{eqnarray}
Thus the claim holds. 

Proof of A2:
Suppose $(i,j)=(1,8)$. For any $\delta\in (0,t)$, a calculation gives
\begin{eqnarray}
	l_1(y)
	=-4\delta[(x_2-x_6)(x_3-x_5)+4x_4(x_1-x_8)]-8x_4\delta^2<0.
\end{eqnarray}

Proof of A3:
Suppose $(i,j)=(1,2)$. We have $x_2>x_3$. For any $\delta\in (0,t)$, a  calculation gives
\begin{eqnarray}
	\label{dd12}
	\notag
	l_1(y)
	=&&-2\delta[(x_6-x_2)(2x_7+x_5-x_3)+(x_1-x_8)(2x_4+x_5-x_3)]
	-2\delta^2(x_4+x_7+x_5-x_3)\\
	\le&&-2\delta(x_1-x_8)(2x_4-2x_7)-2\delta^2(x_4+x_7+x_5-x_3)\le 0, 
\end{eqnarray}
where the two inequalities follow from (\ref{3331}). Further, since 
$
	x_3<x_2\le x_1,
$
we obtain that $l_1(y)$ is strictly less than zero. 

Proof of A4:
Suppose $(i,j)=(2,4)$. For any $\delta\in (0,t)$, a calculation gives
\begin{eqnarray}
	\label{dd24}
	l_1(y)
	=-2\delta[4x_7x_9-(x_1-x_8)^2+(x_1-x_8)(x_3-x_5)+2x_7(x_2-x_6)]-2x_7\delta^2<0,
\end{eqnarray}
where the inequality follows from (\ref{331}). 

Proof of A5:
Suppose $(i,j)=(1,4)$. For any $\delta\in (0,t)$, a  calculation gives
	\begin{eqnarray}
		\label{dd14}
		l_1(y)
		=2\delta[(x_1-x_8)^2-4x_7x_9-(x_2-x_6)(x_3-x_5)
		-2x_4(x_1-x_8)+\delta^2-\delta(x_4+2x_8-2x_1)].
\end{eqnarray}
Since $(x_1-x_8)^2-4x_7x_9-(x_2-x_6)(x_3-x_5)-2x_4(x_1-x_8)<0$, $\delta$ can be small enough such that  $l_1(y)<0$.

Proof of A6:
Suppose $(i,j)=(1,5)$. For any $\delta\in (0,t)$, a calculation gives
\begin{eqnarray}
	\label{dd15}
	l_1(y)=2\delta[(x_5-x_3)(2x_9+x_2-x_6)+(x_8-x_1)(2x_4+x_2-x_6)]
	-\delta^2(2x_9+2x_4+2x_2-2x_6)<0.
\end{eqnarray}

Proof of A7:
Suppose $(i,j)=(4,5)$. For any $\delta\in (0,t)$, a calculation gives
\begin{eqnarray}
l_1(x_{4,5}(\delta))=&&-2\delta[(x_1-x_8)^2-4x_7x_9+(x_1-x_8)(x_2-x_6)+2x_9(x_3-x_5)]-2x_9\delta^2<0,
\end{eqnarray}
which follows from (\ref{det45}). The remaining five cases in A7 can be verified in a similar manner by applying the other five inequalities
in Lemma \ref{pf3e} (iii) accordingly.

Proof of A8:
Suppose $(i,j)=(4,8)$. For any $\delta\in (0,t)$, a calculation gives
	\begin{eqnarray}
		l_1(y)
	=-2\delta[(x_1-x_8)^2-4x_7x_9+(x_3-x_5)(x_2-x_6)
	+2x_4(x_1-x_8)+\delta^2+2(x_1-x_8)\delta]
		-2x_4\delta^2.
	\end{eqnarray}
Recalling from (\ref{3331}), we have
\begin{eqnarray}
	\label{zz1}
	\notag
	&&(x_1-x_8)^2-4x_7x_9+(x_3-x_5)(x_2-x_6)+2x_4(x_1-x_8)\\
	\notag
	-&&[(x_1-x_8)^2-4x_7x_9+(x_1-x_8)(x_3-x_5)+2x_7(x_2-x_6)]\\
=&&(x_3-x_5-2x_7)(x_2-x_6)+(x_1-x_8)(2x_4+x_5-x_3)
	\ge(x_2-x_6)(2x_4-2x_7)\ge 0.
\end{eqnarray}
Combining (\ref{zz1}) with (\ref{det444}), we obtain that $(x_1-x_8)^2-4x_7x_9+(x_3-x_5)(x_2-x_6)+2x_4(x_1-x_8)\ge 0$. It then follows that $l_1(y)<0$. The other case in A8 can be proved similarly by using the inequality (\ref{det69}).

In conclusion, the claim holds through the aforementioned proof.

(ii) For any $\delta\in (0,t)$, a calculation gives
\begin{eqnarray}
		\label{lol}
		\notag
		&&l_1((x_1,\cdots,x_{5},x_6+\delta,x_{7}-\delta,x_{8},x_9))\\
	=&&2\delta[(x_2-x_6)(x_2-x_6+2x_7)+(x_3-x_5)(x_1-x_8)-4x_4x_9
		-(2x_2-2x_6+x_7)\delta+\delta^2].
	\end{eqnarray}
Since $x\in \cD_1$, we have $l_2(x)-l_1(x)\ge 0$. Another computation yields
\begin{eqnarray}
	(x_2-x_6)(x_2-x_6+2x_7)+(x_3-x_5)(x_1-x_8)-4x_4x_9+l_2(x)-l_1(x)
	=-(2x_2-x_6)(x_6-x_7)<0.
\end{eqnarray}
This implies that $(x_2-x_6)(x_2-x_6+2x_7)+(x_3-x_5)(x_1-x_8)-4x_4x_9<0$. 
It follows that $\delta$ can be small enough such that $l_1((x_1,\cdots,x_{5},x_6+\delta,x_{7}-\delta,x_{8},x_9))<0$.

(iii) The proof is identical to that of (i). In other words, each case of $(i,j)$ can be handled through direct calculations and the application of the results in Lemma \ref{pf3e} (iii) and (iv).

(iv) The proof is completely similar to that of (ii) and can be derived through direct computations.
\end{proof}

\vspace{0.2cm}
{\bf Proof of Theorem \ref{3bb}}

Firstly, by Lemma \ref{le:rankrhoA=m} (ii) and Theorem \ref{maj}, we obtain that the claim holds if $\r$ has deficient rank. In the following, assume that $\r$ has full rank. For ease of notation, we denote the non-increasing ordered eigenvalue vectors of $\r$ and $\sigma$ as $\l=(\l_1,\cdots,\l_9)$ and $\mu=(\mu_1,\cdots,\mu_9)$, respectively. Define  $\tilde{x}_k:=\sum_{i=1}^k x_i$. From the hypothesis, we have 
\begin{eqnarray}
	\label{3ma4}
	&&\tilde{\mu}_i\ge \tilde{\l}_i, i=1,\cdots,8, \\
	\label{3ma3}
	&&\tilde{\mu}_9=\tilde{\l}_9=1,
\end{eqnarray}
where at least one inequality in (\ref{3ma4}) is strict. 
Define 
\begin{eqnarray}
	&&a:=\min\{i|\tilde{\mu}_i>\tilde{\l}_i\},\\
	&&b:=\min\{j|j>a,\tilde{\mu}_j=\tilde{\l}_j\}.
\end{eqnarray}
It immediates that $1\le a\le 8$ and $\mu_a>\l_a$. If $a>1$, then 
\begin{eqnarray}
	\label{ida-1}
	\mu_i=\l_i, i=1,\cdots,a-1,
\end{eqnarray}
which implies that
\begin{eqnarray}
	\label{aanda-1}
	\l_{a-1}=\mu_{a-1}\ge \mu_a>\l_{a}.
\end{eqnarray}
On the other hand, from the definition of $b$, we know that $a<b\le 9$, and
\begin{eqnarray}
	\label{a+1b-1}
	\tilde{\mu}_i>\tilde{\l}_i, i=a+1,\cdots,b-1.
\end{eqnarray}
In particular, if $b<9$, then 
\begin{eqnarray}
	\label{bandb+1}
	\l_b=\tilde{\l}_b-\tilde{\l}_{b-1}>\tilde{\mu}_b-\tilde{\mu}_{b-1}=\mu_{b}\ge \mu_{b+1}=\tilde{\mu}_{b+1}-\tilde{\mu}_{b}\ge  \tilde{\l}_{b+1}-\tilde{\l}_{b}=\l_{b+1}.
\end{eqnarray}

Recalling that $\r$ is a boundary point of $\app_{3,3}$, according to Lemma \ref{le:L1+L21}, there are two cases: (I) $l_1(\l)=0, l_2(\l)\ge 0$, (II) $l_1(\l)\ge 0, l_2(\l)=0$. Here we prove case (I), it turns out the proof of case (II) is similar.

(I) Suppose $l_1(\l)=0, l_2(\l)\ge 0$. Hence $\l\in \cD_1$ defined in (\ref{cD1}). There are two subcases for $a,b$ to consider:
(I1) $(a,b)\neq (6,7)$, (I2) $(a,b)=(6,7)$.

(I1) Suppose $(a,b)\neq (6,7)$. 
Let 
\begin{eqnarray}
	\label{rot}
	0<t_1<\min\{\tilde{\mu}_a-\tilde{\l}_a,\tilde{\mu}_{a+1}-\tilde{\l}_{a+1},\cdots,\tilde{\mu}_{b-1}-\tilde{\l}_{b-1},\l_{a-1}-\l_a,\l_{b}-\l_{b+1}\},
\end{eqnarray}
where the element $\l_{a-1}-\l_a$ is considered nonexistent if $a=1$, the element $\l_{b}-\l_{b+1}$ is replaced by $\l_9$ if $b=9$.
The positivity of $t_1$ is guaranteed by inequalities (\ref{aanda-1}) -(\ref{bandb+1}). Consequently, $(\l_1,\cdots,\l_{a-1},\l_a+t_1,\l_{a+1}\cdots,\l_{b-1},\l_b-t_1,\l_{b+1},\cdots, \l_9)\in \cZ_9^+$. Using Lemma \ref{ppp} (i), there exists $\delta_1\in (0,t_1)$ such that 
$
l_1(\eta)<0,
$
where $\eta:=(\l_1,\cdots,\l_{a-1},\l_a+\delta_1,\l_{a+1}\cdots,\l_{b-1},\l_b-\delta_1,\l_{b+1},\cdots, \l_9)\in \cZ_9^+$. From the expression of $\eta$ and (\ref{ida-1}), (\ref{rot}), we have
\begin{eqnarray}
	\notag
	\label{yc4}	
	&&\tilde{\mu}_i=\tilde{\l}_i=\tilde{\eta}_i, i=1,\cdots,a-1,\\
	\notag
	\label{yc1}
	&&\tilde{\mu}_i>\tilde{\l}_i+t_1>\tilde{\l}_i+\delta_1=\tilde{\eta}_i, i=a,\cdots,b-1,\\
	\label{yc2}
	&&\tilde{\mu}_i\ge\tilde{\l}_i=\tilde{\eta}_i, i=b,\cdots,9.
\end{eqnarray}
which implies that $\mu\succ \eta$. If $\sigma\in \app_{3,3}$, using Theorem \ref{maj}, $\diag(\eta)\in \app_{3,3}$, and so $L_1(\eta)\ge \mathcal{O}$. However, this contradicts with $l_1(\eta)<0$. Hence $\sigma\notin \app_{3,3}$
and the claim holds.

(I2) Suppose $(a,b)=(6,7)$. It follows from (\ref{aanda-1}) and (\ref{bandb+1}) that $\l_5>\l_6$ and $\l_7>\l_8$. Let
$
0<t_2<\min\{\l_5-\l_6,\l_7-\l_8,\tilde{\mu}_6-\tilde{\l}_6\}.
$
Thus $\omega:=(\l_1,\cdots,\l_5,\l_6+t_2,\l_7-t_2,\l_8,\l_9)\in \cZ_9^+$.
From (\ref{3ma4}), (\ref{3ma3}) and the expression of $\omega$, we have $\mu\succ \omega$.

Suppose $\l_6=\l_7$. If $\sigma \in\app_{3,3}$, then by Theorem \ref{maj}, $\diag(\omega)\in \app_{3,3}$ and hence $l_1(\omega)\ge 0$, $l_2(\omega)\ge 0$. However, a direct calculation gives
$
l_1(\omega)+l_2(\omega)=4t_2^2(-2\l_2+\l_6)<0,
$
which leads to a contradiction. Hence $\sigma\notin \app_{3,3}$.

On the other hand, suppose $\l_6>\l_7$. Using Lemma \ref{ppp} (ii), there exist $\delta_2\in (0,t_2)$ such that  $\kappa:=(\l_1,\cdots,\l_5,\l_6+\delta_2,\l_7-\delta_2,\l_8,\l_9)\in \cZ_9^+$, and
$
l_1(\kappa)<0.
$
It follows that $\omega \succ \kappa$ and thus $\mu \succ \kappa$. If $\sigma\in \app_{3,3}$, then $\diag(\kappa)\in \app_{3,3}$, which contradicts with $l_1(\kappa)<0$. Thus $\sigma \notin \app_{3,3}$.
This completes the proof.
$\hfill\square$

\vspace{0.2cm}
{\bf Proof of Theorem \ref{nonbou}}

The if part is obvious. We prove the only if part. Suppose $\r$ is a boundary but non-extreme point. We know that $\r$ has full rank and $\r\neq \frac{1}{9}I_9$. Let 
$
0<\epsilon<\frac{1}{18}\min\limits_{j=1,\cdots,8 \atop \l_{j}> \l_{j+1}}\{\l_j-\l_{j+1},\l_9\}.
$
	Using Theorem \ref{thm:main}(ii), there exist two linearly independent diagonal states $\a,\b\in \app_{3,3}$ such that $\r=\frac{1}{2}(\a+\b)$ and $||\r-\a||_2=||\r-\b||_2<\epsilon$. We can write as 
	\begin{eqnarray}
		\label{alpha}
		&&\a=\diag(\a_1,\cdots,\a_9)=\diag(\l_1+x_1,\cdots,\l_9+x_9),\\
		\label{beta}
		&&\b=\diag(\b_1,\cdots,\b_9)=\diag(\l_1-x_1,\cdots,\l_9-x_9),
	\end{eqnarray}
	where $x_i$ are not all zero such that $\sum_{i=1}^9x_i=0$ and $\sqrt{\sum_{i=1}^{9}x_i^2}<\epsilon$.
This implies that $|x_1|,\cdots,|x_9|<\epsilon$. 
	Consequently, for any $1\le j\le 8$ such that $\l_j>\l_{j+1}$, from the range of $\epsilon$, we obtain that
	\begin{eqnarray}
		\label{rrr1}
		&&\a_j-\a_{j+1}=\l_j+x_{j}-\l_{j+1}-x_{j+1}\ge \l_j-\l_{j+1}-2\epsilon>0,\\
		\label{rrr2}
		&&\b_j-\b_{j+1}=\l_j-x_{j}-\l_{j+1}+x_{j+1}\ge \l_j-\l_{j+1}-2\epsilon>0.
	\end{eqnarray}
	Next, for any $1\le k\le 8$ and $l\ge 1$ such that $\l_{k-1}>\l_k=\cdots=\l_{k+l}>\l_{k+l+1}$. 
	Let $\a'=\diag(\a_1',\cdots,\a_9')$ and $\b'=\diag(\b_1',\cdots,\b_9')$, where
	\begin{eqnarray}
		\label{aa''}
		&&\a_i'=\begin{cases} 
			\a_i, & i\neq k,\cdots,k+l \\
			\l_k+\frac{x_k+\cdots+x_{k+l}}{l},&  i=k,\cdots,k+l\\
		\end{cases},\\
		\label{bb''}
		&&\b_i'=\begin{cases} 
			\b_i, & i\neq k,\cdots,k+l \\
			\l_k-\frac{x_k+\cdots+x_{k+l}}{l},&  i=k,\cdots,k+l\\
		\end{cases}.
	\end{eqnarray}
	We have $\r=\frac{1}{2}(\a'+\b')$, where
	\begin{eqnarray}
		\notag
		&&\a'_{k}=\cdots=\a'_{k+l},\\
		\notag
		&&\a'_{k-1}-\a'_{k}=\l_{k-1}+x_{k-1}-\l_{k}-\frac{x_k+\cdots+x_{k+l}}{l}>\l_{k-1}-\l_k-2\epsilon>0,\\
	&&\a'_{k+l}-\a'_{k+l+1}=\l_{k}+\frac{x_k+\cdots+x_{k+l}}{l}-\l_{k+l+1}-x_{k+l+1}>\l_{k+l}-\l_{k+l+1}-2\epsilon>0,
	\end{eqnarray}
	and similarly, $\b'_{k}=\cdots=\b'_{k+l}$, $\b'_{k-1}>\b'_{k}, \b'_{k+l}>\b'_{k+l+1}$. Combining (\ref{rrr1}) with (\ref{rrr2}), we establish that $\a_j'>\a_{j+1}'$ and $\b_j'>\b_{j+1}'$ whenever $\l_j>\l_{j+1}$, at the same time, $\a_k'=\a_{k+1}'$ and $\b_k'=\b_{k+1}'$ whenever $\l_k=\l_{k+1}$. This ensures that the diagonal entries of $\a',\b'$ are in non-increasing order.
	By applying Lemma \ref{hjk} (ii), we have $\a\succ \a'$ and $\b\succ \b'$, and it follows from Theorem \ref{maj} that $\a',\b'\in \app_{3,3}$. It now remains to prove that
	$\a'$ and $\b'$ are linearly independent. If they were linearly dependent, i.e., $\a'=\b'=\r$. From (\ref{aa''}), we would have $x_i=0$ for $i\neq k,\cdots,k+l$ and $x_k+\cdots+x_{k+l}=0$.  
From (\ref{alpha}), we have $\a\succ \r$.  However, recalling that $\r$ is a boundary point, by Theorem \ref{3bb}, $\a\notin \app_{3,3}$, which leads to a contradiction. Hence $\a',\b'$ must be linearly independent. This completes the proof.
$\hfill\square$

\vspace{0.2cm}
{\bf Proof of Theorem \ref{judge}}

Without loss of generality, assume that $\r=\diag(\l_1,\cdots,\l_9)$.
For convenience of the writing, we denote the two matrices $\bma 2t_9&&t_8-t_1&&t_6-t_2\\t_8-t_1&&2t_7&&t_5-t_3\\t_6-t_2&&t_5-t_3&&2t_4\ema$ and $\bma 2t_9&&t_8-t_1&&t_7-t_2\\t_8-t_1&&2t_6&&t_5-t_3\\t_7-t_2&&t_5-t_3&&2t_4\ema$ as $T_1$ and $T_2$, respectively.

(i) From the hypothesis, we know that $L_2(\l)>\mathcal{O}$ and $L_1(\l)$ has deficient rank. If the rank is one, then the first two rows are linearly dependent and hence $\l_1=\l_8+2\sqrt{\l_7\l_9}$. It follows from Lemma \ref{pf3e} (i) that $\l_2=\l_6$, thus $L_1(\l)=\bma 2\lambda_9&&\lambda_8-\lambda_1\\\lambda_8-\lambda_1&&2\lambda_7\ema\oplus 2\l_4$, which cannot have rank one. This is a contradiction. Hence $L_1(\l)$ has rank two, consequently, $D_1$ in (\ref{D1}) can be written as 
$D_1'\oplus 0$, where $D_1'>\mathcal{O}$ has order two.

We begin with the only if part. Suppose $t_1,\cdots,t_9$ is a non-trivial solution of  Eqs.(\ref{fn1})-(\ref{fn2}). We shall show that $\r$ is non-extreme.
Firstly, from (\ref{fn3}), there exists small enough $\epsilon>0$ such that
\begin{eqnarray}
\notag
	&&\l_1+\epsilon t_1\ge \cdots \ge \l_9+\epsilon t_9>0,\\
	\label{zz2}
	&&\l_1-\epsilon t_1\ge \cdots \ge \l_9-\epsilon t_9>0.
\end{eqnarray}
Secondly, from (\ref{fn2}), we have
\begin{eqnarray}
	\label{plm}
	U^T T_1 U=\bma U_1&U_2&U_3\ema^T T_1\bma U_1&U_2&U_3\ema=T_1'\oplus 0,
\end{eqnarray}
where $T_1'\in \cH_2$. Since $D_1'>\mathcal{O}$, 
and $L_2(\l)>\mathcal{O}$, there exists a small enough $0<\delta<\epsilon$ such that
\begin{eqnarray}
	\label{clj1}
	&&D_1'\pm \delta T_1'>\mathcal{O},\\
	\label{clj2}
	&&L_2(\l)\pm \delta T_2>\mathcal{O}.
\end{eqnarray}
Let 
\begin{eqnarray}
	\notag
&&\a=\diag(\l_1+\delta t_1 ,\cdots,\l_9+\delta t_9),\\
&&\b=\diag(\l_1-\delta t_1 ,\cdots,\l_9-\delta t_9).
\end{eqnarray}
We have $\r=\frac{1}{2}(\a+\b)$, and from (\ref{fn1}), $\tr(\a)=\tr(\b)=1$. 
Since $t_i$ are not all zero, $\a$ and $\b$ are linearly independent. 
Further, it follows from (\ref{zz2}) and $\delta<\epsilon$ that the diagonal entries of both $\a,\b$ are in non-increasing order. We have
\begin{eqnarray}
	\label{sccc}
	\notag
	&&L_1(\l(\a))=L_1(\l)+\delta T_1=U((D_1'+\delta T_1')\oplus 0)U^T\ge\mathcal{O},\\
	\notag
	&&L_2(\l(\a))=L_2(\l)+\delta T_2>\mathcal{O},\\
	\notag
	&&L_1(\l(\b))=L_1(\l)-\delta T_1=U((D_1'-\delta T_1')\oplus 0)U^T\ge\mathcal{O},\\
	&&L_2(\l(\b))=L_2(\l)-\delta T_2>\mathcal{O},
\end{eqnarray}
where the inequalities from (\ref{plm}), (\ref{clj1}) and (\ref{clj2}).  
The four inequalities in (\ref{sccc}) jointly imply that $\a,\b\in \app_{3,3}$. So $\r$ is a non-extreme point.

We next prove the if part. Suppose $\r$ is a non-extreme point. We shall show that there exists a non-trivial solution of Eqs.(\ref{fn1})-(\ref{fn2}).
Using Theorem \ref{nonbou}, there exist two linearly independent states $\a=\diag(\a_1,\cdots,\a_9),\b=\diag(\b_1,\cdots,\b_9)\in \app_{3,3}$ such that $\r=\frac{1}{2}(\a+\b)$ where $\a_1\ge\cdots\ge \a_9$ and $\b_1\ge\cdots\ge \b_9$. Let $t_i=\a_i-\l_i=\l_i-\b_i$ for $i=1,\cdots,9$. We know that $t_i$ are not all zero. It suffices to prove that $t_1,\cdots,t_9$ satisfy Eqs.(\ref{fn1})-(\ref{fn2}).

The Eq.(\ref{fn1}) can be verified directly since $\tr(\a)=1+\sum_{i=1}^9t_i=1$. For Eq.(\ref{fn3}), suppose $\l_k=\l_{k+1}$ for some $1\le k\le 8$. Since $\a_k\ge \a_{k+1}$, we have $t_k\ge t_{k+1}$. On the other hand, since $\b_k\ge \b_{k+1}$, we have $t_k\le t_{k+1}$.
Hence $t_k=t_{k+1}$. It remains to prove that $t_1,\cdots,t_9$ satisfy (\ref{fn2}). Since $\a,\b\in \app_{3,3}$, we have
\begin{eqnarray}
	\notag
	&&L_1(\l(\a))=L_1(\l)+T_1\ge \mathcal{O},\\
	&&L_1(\l(\b))=L_1(\l)-T_1\ge \mathcal{O},
\end{eqnarray}
recalling from (\ref{D1}),
\begin{eqnarray}
	\notag
	\label{ASD1}
	&&U^TL_1(\l(\a))U=(D_1'\oplus 0)+U^TT_1U \ge \mathcal{O},\\
	\label{ASD2}
	&&U^TL_1(\l(\b))U=(D_1'\oplus 0)-U^TT_1U \ge \mathcal{O}.
\end{eqnarray}
The two inequalities in (\ref{ASD2}) jointly imply that $U^TT_1U$ must have the form $T_1'\oplus 0$, where $T_1'\in \cH_2$. It follows by a direct calculation that (\ref{fn2}) holds. This completes the proof.

(ii) The hypothesis implies that $L_1(\l)>\mathcal{O}$ and $L_2(\l)$ has deficient rank. One can also verify that the rank of $L_2(\l)$ is two by using Lemma \ref{pf3e} (ii). The remaining proof is totally similar to that of (i).

(iii) The proof is also essentially similar to that of (i). We know that the rank of both $L_1(\l)$ and $L_2(\l)$ is two.
For the only if part, suppose there exists a non-trivial solution $t_1,\cdots,t_9$ of Eqs.(\ref{fn1})-(\ref{kn2}). Then there exists small enough $\epsilon>0$ such that $\a=\diag(\l_1+\epsilon t_1,\cdots,\l_9+\epsilon t_9)>\mathcal{O}$ and $\b=\diag(\l_1-\epsilon t_1,\cdots,\l_9-\epsilon t_9)>\mathcal{O}$, where the diagonal entries are both in non-increasing order. Since $t_1,\cdots,t_9$ satisfy (\ref{fn2}) and (\ref{kn2}), we can restrict $\epsilon$ to be smaller such that $L_1(\l(\a))=L_1(\l)+\epsilon T_1\ge \mathcal{O}$, and  $L_2(\l(\a))=L_2(\l)+\epsilon T_2\ge \mathcal{O}$. Thus $\a\in\app_{3,3}$. Similarly, we have $\b\in\app_{3,3}$. Hence $\r=\frac{1}{2}(\a+\b)$ and $\r$ is a non-extreme point.

For the if part, suppose $\r$ is a non-extreme point. By Theorem \ref{nonbou}, there exist linearly independent states $\a=\diag(\l_1+t_1,\cdots,\a_9+t_9),\b=\diag(\l_1-t_1,\cdots,\a_9-t_9)\in \app_{3,3}$ such that $\r=\frac{1}{2}(\a+\b)$ where the diagonal entries are both in non-increasing order. It is straightforward to see that $t_1,\cdots,t_9$ are not all zero and satisfy Eqs.(\ref{fn1}) and (\ref{fn3}). Further, Eq.(\ref{fn2}) follows from that $L_1(\l)$ has rank two and $L_1(\a)=L_1(\l)+T_1$ and $L_1(\b)=L_1(\l)-T_1$ are simultaneously positive semidefinite. Similarly, Eq.(\ref{kn2}) follows from that $L_2(\l)$ has rank two and $L_2(\l)\pm T_2\ge \mathcal{O}$. Hence $t_1,\cdots,t_9$ is a non-trivial solution of Eqs.(\ref{fn1})-(\ref{kn2}) and the claim holds. 
$\hfill\square$

\vspace{0.2cm}
{\bf Proof of Corollary \ref{ccll}}
	
(i)	We first prove the only if part. If $\r$ is an extreme point, then it is also a boundary point, meaning at least one of $l_1(\l),l_2(\l)$ equals to zero. Since $\r$ has two distinct eigenvalues, through direct computations, we have $\l(\r)$ is the same as one of $\l(\z_k)$ listed in (\ref{twodi}). The claim holds. To prove the only if part, it suffices to prove that the eight states in (\ref{twodi}) are all extreme points of $\app_{3,3}$.
	This can be proved directly by using Theorem \ref{judge}. 
	
(ii) The conditions imply that $l_1(\l)=l_2(\l)=0$. Using Theorem \ref{judge} (iii), to prove $\r$ is an extreme point, it suffices to prove that the following equations have only the trivial solution:
\begin{eqnarray}
	\label{sumj}
	&&t_1+t_2+7t_9=0,\\
	\label{tyu}
	&&\bma 2t_9&&t_9-t_1&&t_9-t_2\\t_9-t_1&&2t_9&&0\\t_9-t_2&&0&&2t_9\ema\cdot U_3=\bma 0\\0\\0\ema.
\end{eqnarray}
Note that we have utilized the fact that Eqs.(\ref{fn2}) and (\ref{kn2}) are identical. Let $U_3=\bma u_{13}&u_{23}&u_{33}\ema^T$. Eq.(\ref{tyu}) can thus be reformulated as
\begin{eqnarray}
	\label{nnre}
	\bma -u_{23}&&-u_{33}&&2u_{13}+u_{23}+u_{33}\\-u_{13}&&0&&u_{13}+2u_{23}\\0&&-u_{13}&&u_{13}+2u_{33}\ema \cdot \bma t_1\\t_2\\t_9\ema=\bma 0\\0\\0\ema,
\end{eqnarray}
where the rank of the coefficient matrix of $t_1,t_2,t_9$ is at least two. On the other hand, we recall from (\ref{D1}) that $(t_1,t_2,t_9)=(a,b,c)$ satisfies (\ref{tyu}) and therefore also satisfies (\ref{nnre}). Hence the solution to (\ref{nnre}) is of the form $(ka,kb,kc)$ for any real number $k$. Combining this with (\ref{sumj}), where $a,b,c$ are all positive, we conclude that $t_1, t_2,t_9$ must be all zero. This completes the proof.

(iii) Suppose $\r$ has at least eight distinct eigenvalues. We aim to prove that $\r$ is a non-extreme point. Since $\r$ is necessarily a boundary point, there are two cases, (A) exactly one of $l_1(\l),l_2(\l)$ equals to zero, (B) $l_1(\l)=l_2(\l)=0$.

(A) Suppose exactly one of $l_1(\l),l_2(\l)$ equals to zero. Here we only prove the case that  $l_1(\l)=0$ and $l_2(\l)>0$. The other case can be proved similarly. Using Theorem \ref{judge} (i), it suffices to prove that Eqs.(\ref{fn1})-(\ref{fn2}) have a non-trivial solution.
Firstly, it is obvious that for Eqs.(\ref{fn1}) and (\ref{fn2}), the rank of the coefficient matrix is at most four. However, 
from Eq.(\ref{fn3}), we deduce that the number of variables involved across
Eqs. (\ref{fn1}) and (\ref{fn2}) is at least eight. Therefore, there must exist a non-trivial solution to Eqs.(\ref{fn1})-(\ref{fn2}). The claim holds.

(B) Suppose $l_1(\l)=l_2(\l)=0$. From Eq.(\ref{fn3}), the number of variables in terms of Eqs. (\ref{fn1}), (\ref{fn2})  and (\ref{kn2}) is at least eight, however, the rank of the coefficient matrix is at most seven. This implies that there must exist a non-trivial solution. Using Theorem \ref{judge} (iii), we obtain that $\r$ is a non-extreme point.
$\hfill\square$

\vspace{0.2cm}
{\bf Proof of Theorem \ref{3ndpd}}

We first prove the if part. Let $\r$ satisfy conditions (I) and (II). Suppose $\rho$ is a non-extreme point. We have
\begin{eqnarray}
	\label{fgg}
	\l_3=\cdots=\l_k\ge \l_{k+1}=\cdots=\l_{3n-5},
\end{eqnarray}
for some $3\le k\le 3n-6$. Using Theorem \ref{thm:main} (ii), there exist two linearly independent diagonal states $\a,\b\in\app_{3,n}$ such that $\r=\frac{1}{2}(\a+\b)$. Define $\mathcal{J}_1:=\{3,\cdots,k\}$ and $\mathcal{J}_2:=\{k+1,\cdots,3n-5\}$. Set
$\cK:=\{1,2,a,b,3n-4,\cdots,3n\}$,
where $a\in J_1$ and $b\in J_2$. So
$\r_{\cK}=\frac{1}{2}(\a_{\cK}+\b_{\cK}).
$
Using Theorem \ref{le:mxn=AS}, we obtain that 
$\r_{\cK}, \a_{\cK}, \b_{\cK}\in \app_{3,3}$. 
Since $\r$ satisfies condition (I), it follows from (\ref{fgg}) and the definition of $\cK$ that $\r_{\cK}$ is an extreme point of $\app_{3,3}$. This implies that $\a_{\cK}$ and $\b_{\cK}$ are linearly dependent. As $a$ and $b$ vary over $\mathcal{J}_1$ and 
$\mathcal{J}_2$ respectively,  the two states $\a_{\cK}$ and $\b_{\cK}$ remain linearly dependent. We conclude that $\a,\b$ are linearly dependent, which contradicts with the assumption. So $\r$ must be an extreme point.

We next prove the only if part.
Suppose $\r$ is an extreme point of $\app_{3,n}$. Firstly, assume condition (II) is violated, specifically, $\l_3>\l_k>\l_{3n-5}$ holds for some $4\le k\le 3n-6$. Let $0<\epsilon<\{\l_3-\l_k,\l_k-\l_{3n-5}\}$. 
The two states $\a:=\frac{1}{1+\epsilon}\diag(\l_1,\cdots,\l_{k-1},\l_k+\epsilon,\l_{k+1},\cdots,\l_{3n})$ and $\b:=\frac{1}{1-\epsilon}\diag(\l_1,\cdots,\l_{k-1},\l_k-\epsilon,\l_{k+1},\cdots,\l_{3n})$ are linearly independent. Moreover, $\a,\b\in \app_{3,n}$, since their three largest and six smallest eigenvalues are proportional to those of $\r$. Hence $\r=\frac{1+\epsilon}{2}\a+\frac{1-\epsilon}{2}\b$ implies that it is a non-extreme point, which contradicts the initial assumption that $\r$ is extreme. Hence $\r$ satisfies condition (II).

Next, suppose condition (II) is satisfied and condition (I) is violated. That is, (\ref{fgg}) holds and $\diag(\l_1,\l_2,\l_3,\l_{3n-5},\cdots,\l_{3n})\in \app_{3,3}$ is a non-extreme point. Using Theorem \ref{nonbou}, there exist two linearly independent (unnormalized) states $\g=\diag(\g_1,\g_2,\g_3,\g_{3n-5},\cdots,\g_{3n})\in \app_{3,3}$  and $\eta=\diag(\eta_1,\eta_2,\eta_3,\eta_{3n-5},\cdots,\eta_{3n})\in \app_{3,3}$ such that $\tr(\g)=\tr(\eta)$ and 
$\diag(\l_1,\l_2,\l_3,\l_{3n-5},\cdots,\l_{3n})=\frac{1}{2}(\g+\eta),
$
where the diagonal entries of $\g,\eta$ are both in non-increasing order.
Let $\g':=\diag(\g_1',\cdots,\g_{3n}')$, $\eta':=\diag(\eta_1',\cdots,\eta_{3n}')$, where 
\begin{eqnarray}
	\notag
	&&\g_i'=\begin{cases} 
		\g_i, & \text{if } i\neq 4,\cdots,3n-6, \\
		\g_3, & \text{if } i=4,\cdots,k,\\
		\g_{3n-5}, & \text{if } i=k+1,\cdots,3n-6
	\end{cases}\\
	\label{ee'}
	&&\eta_i'=\begin{cases} 
		\eta_i, & \text{if } i\neq 4,\cdots,3n-6, \\
		\eta_3, & \text{if } i=4,\cdots,k,\\
		\eta_{3n-5}, & \text{if } i=k+1,\cdots,3n-6
	\end{cases}
\end{eqnarray}
We obtain that $\g',\eta'\in \app_{3,n}$ since the three largest and six smallest eigenvalues of $\gamma',\eta'$ are proportional to those of $\gamma$ and $\eta$, respectively. Further, the two states $\gamma',\eta'$ are linearly independent since $\gamma,\eta$ are linearly independent. Hence $\r=\frac{\tr(\gamma')}{2}\frac{1}{\tr(\g')}\g'+\frac{\tr(\eta')}{2}\frac{1}{\tr(\eta')}\eta'$ implies that it is a non-extreme point. This is a contradiction. Hence condition (I) holds. This completes the proof.
$\hfill\square$

\section{Proof of Theorem \ref{battle}}
\label{james}
(i) Firstly, one can verify from Lemma \ref{le:rankrhoA=m} (i) that  $\frac{3}{mn+2}(\r+\frac{mn-1}{3}\diag(0,\frac{1}{mn-1},\cdots,\frac{1}{mn-1}))\in\as_{m,n}
$. So $\ar(\r)\le \frac{mn-1}{3}$. It remains to prove that $\ar(\r)$ cannot be less than $\frac{mn-1}{3}$. This is equivalent to showing that $\frac{1}{1+t}(\mu+t\r)\notin \as_{m,n}$ for any $m\times n$ state $\mu$ and $t>\frac{3}{mn-1}$.
Assume there exists a state $\mu$ and $t>\frac{3}{mn-1}$ such that $\frac{1}{1+t}(\mu+t\r)\in \as_{m,n}$, where the non-increasing diagonal entries of $\mu$ are denoted as $\mu_1,\cdots,\mu_{mn}$.
We have 
\begin{eqnarray}
	\notag
\frac{1}{1+t}(\mu+t\r)&&\succ \frac{1}{1+t}(\diag(\mu)+t\r)\succ \frac{1}{1+t}\diag(\mu_{mn}+t,\mu_1,\cdots,\mu_{mn-1})\\&&\succ \frac{1}{1+t}\diag(\mu_{mn}+t,\frac{1-\mu_{mn}}{mn-1},\cdots,\frac{1-\mu_{mn}}{mn-1}),
\end{eqnarray}
where the first relation follows from Schur Theorem, the second and third relations follow from Lemma \ref{hjk}(i) and (ii), respectively. By Theorem \ref{maj}, $\frac{1}{1+t}\diag(\mu_{mn}+t,\frac{1-\mu_{mn}}{mn-1},\cdots,\frac{1-\mu_{mn}}{mn-1})\in \as_{m,n}$. However, since $t>\frac{3}{mn-1}$, we have
$\mu_{mn}+t>\mu_{mn}+\frac{3}{mn-1}\ge 3\cdot\frac{1-\mu_{mn}}{mn-1}$.
Therefore $\frac{1}{1+t}\diag(\mu_{mn}+t,\frac{1-\mu_{mn}}{mn-1},\cdots,\frac{1-\mu_{mn}}{mn-1})\notin \app_{m,n}$ as it violates (\ref{eq:e1}). This is a contradiction, hence the assumption is invalid.

Finally, we prove the uniqueness of the optimal state. Suppose
there exists another state $\z$ such that $\frac{3}{mn+2}(\r+\frac{mn-1}{3}\z)\in\as_{m,n}$. By using Schur Theorem and Lemma \ref{hjk} (iii), it can be verified that $\frac{3}{mn+2}(\r+\frac{mn-1}{3}\z)\succ \frac{3}{mn+2}\diag(1,\frac{1}{3},\cdots,\frac{1}{3})$, where the two states have distinct eigenvalues. Recalling from Theorem \ref{maj}, this implies that $\frac{3}{mn+2}\diag(1,\frac{1}{3},\cdots,\frac{1}{3})$ is a non-extreme point. However, this contradicts with Lemma \ref{le:rankrhoA=m} (i). So the assumption does not hold. This completes the proof.

(ii) One can verify that
$\frac{3k}{2n+2k}(\r+\frac{2n-k}{3k}\diag(0,\cdots,0,\frac{1}{2n-k},\cdots,\frac{1}{2n-k}))\in \as_{2,n}$.
Hence $\ar(\r)\le \frac{2n-k}{3k}$. Assume there exists a $2\times n$ state $\a$ and $t>\frac{3k}{2n-k}$ such that $\frac{1}{1+t}(\a+t\r)\in \as_{2,n}$, where the non-increasing diagonal entries of $\a$ are $\a_1,\cdots,\a_{2n}$. Let 
\begin{eqnarray}
\a'=\frac{1}{1+t}\diag(\a_{2n}+\frac{1}{k}t,\cdots,\a_{2n-k+1}+\frac{1}{k}t,\frac{1-\sum_{j=0}^{k-1}\a_{2n-j}}{2n-k},\cdots,\frac{1-\sum_{j=0}^{k-1}\a_{2n-j}}{2n-k}).
\end{eqnarray}
Using Schur Theorem, Lemma \ref{hjk} (i) and (ii), we obtain that $\frac{1}{1+t}(\a+t\r)\succ\a'$, implying $\a'\in\as_{2,n}$. However, as $t>\frac{3k}{2n-k}$, we have
\begin{eqnarray}
	\a_{2n-k+1}+\frac{1}{k}t\ge \a_{2n}+\frac{1}{k}t>\frac{3}{2n-k}\ge 3\cdot \frac{1-\sum_{j=0}^{k-1}\a_{2n-j}}{2n-k}.
\end{eqnarray}
This implies that $\a'\notin\as_{2,n}$, since it violates (\ref{eq:2xnAS}). This contradiction indicates that the assumption is invalid.

We next prove the uniqueness of the optimal state. Suppose there exists another state $\omega$ such that $\frac{3k}{2n+2k}(\r+\frac{2n-k}{3k}\omega)\in \as_{m,n}$. We also have $\frac{3k}{2n+2k}(\r+\frac{2n-k}{3k}\omega)\succ \frac{3k}{2n+2k}\diag(\frac{1}{k},\cdots,\frac{1}{k},\frac{1}{3k},\cdots,\frac{1}{3k})$, where the two states have distinct eigenvalues. This implies that $\frac{3k}{2n+2k}\diag(\frac{1}{k},\cdots,\frac{1}{k},\frac{1}{3k},\cdots,\frac{1}{3k})$ is non-extreme by Theorem \ref{maj}. However, this contradicts Theorem \ref{mmain}. So the assumption does not hold.

(iii) It can be verified that $\frac{n-1}{2-2\sqrt{2}+n}(\r+\frac{3-2\sqrt{2}}{n-1}\diag(0,\cdots,0,\frac{1}{2},\frac{1}{2})
)\in\as_{2,n}$. So $\ar(\r)\le \frac{3-2\sqrt{2}}{n-1}$. 
Assume there exists a state $\b$ and $t>\frac{n-1}{3-2\sqrt{2}}$ such that $\frac{1}{1+t}(\b+t\r)\in \as_{2,n}$, where the non-increasing diagonal entries of $\b$ are $\b_1,\cdots,\b_{2n}$.
Let $
\b'
=\frac{1}{1+t}\diag(\b_{2n-1}+\frac{1}{2}t,\b_{2n}+\frac{1}{2}t,\frac{1-\b_{2n}-\b_{2n-1}}{2n-2},\cdots,\frac{1-\b_{2n}-\b_{2n-1}}{2n-2}).
$
We have $\frac{1}{1+t}(\b+t\r)\succ\b'$, implying $\b'\in\as_{2,n}$. However, since $t>\frac{n-1}{3-2\sqrt{2}}$, we have
\begin{eqnarray}
	\label{ttpp1}
	\b_{2n-1}+\frac{1}{2}t\ge \b_{2n}+\frac{1}{2}t>\frac{1-\b_{2n}-\b_{2n-1}}{2n-2}.
\end{eqnarray}
Further,
\begin{eqnarray}
	\label{ttpp2}
	\notag
	&&\frac{1-\b_{2n}-\b_{2n-1}}{2n-2}+2\sqrt{(\b_{2n}+\frac{1}{2}t)(\frac{1-\b_{2n}-\b_{2n-1}}{2n-2})}\\
\le&&\frac{1-\b_{2n}}{2n-2}+2\sqrt{(\b_{2n}+\frac{1}{2}t)(\frac{1-\b_{2n}}{2n-2})}
	\le\frac{1}{2n-2}+\sqrt{\frac{t}{n-1}},
\end{eqnarray}
where the second inequality is obtained from that the derivative function of  $f(\b_{2n})=\frac{1-\b_{2n}}{2n-2}+2\sqrt{(\b_{2n}+\frac{1}{2}t)(\frac{1-\b_{2n}}{2n-2})}$ is strictly less than zero. Consequently,
\begin{eqnarray}
	\label{ttpp3}
	\b_{2n-1}+\frac{1}{2}t-(\frac{1}{2n-2}+\sqrt{\frac{t}{n-1}})>0,
\end{eqnarray}
where the inequality follows from the fact that the function  $g(t)=\frac{1}{2}t-(\frac{1}{2n-2}+\sqrt{\frac{t}{n-1}})$ is strictly increasing for $t\ge \frac{n-1}{3-2\sqrt{2}}$ and $g(\frac{n-1}{3-2\sqrt{2}})>0$.
Hence (\ref{ttpp1})-(\ref{ttpp3}) jointly imply that $\b'\notin\as_{2,n}$, as it violates (\ref{eq:2xnAS}). This contradiction indicates that the assumption is invalid. The proof of the uniqueness of  the optimal state is similar to that of (ii).

(iv) We first prove the case that $\frac{1}{2}\le a\le \frac{3}{4}$. It can be verified that $\r+(4-4\sqrt{1-a}-2a)\diag(0,0,\frac{1}{2},\frac{1}{2})\in \as_{2,2}$. So $\ar(\r)\le 4-4\sqrt{1-a}-2a$. 
Assume there exists a state $\g$ and $t>\frac{1}{4-4\sqrt{1-a}-2a}$ such that $\frac{1}{1+t}(\g+t\r)\in\as_{2,2}$. Denote the non-increasing ordered diagonal entries of $\g$ as $\g_1,\cdots,\g_4$. Let 
$\g'=\frac{1}{1+t}\diag(\g_4+at,\g_3+(1-a)t,\frac{1-\g_3-\g_4}{2},\frac{1-\g_3-\g_4}{2})$.
We have $\frac{1}{1+t}(\g+t\r)\succ\g'$. Further, since $\frac{1}{2}\le a\le \frac{3}{4}$ and $t>\frac{1}{4-4\sqrt{1-a}-2a}$, one can verify that
\begin{eqnarray}
	\label{jmp}
	at\ge(1-a)t\ge \frac{1}{2}.
\end{eqnarray}
Let $\g''=\frac{1}{1+t}\diag(at,(1-a)t,\frac{1}{2},\frac{1}{2})$.
From Lemma \ref{hjk} (ii)  and (\ref{jmp}), we obtain that $\g'\succ \g''$,   implying $\g''\in\as_{2,2}$. However, since $t>\frac{1}{4-4\sqrt{1-a}-2a}$, we have 
\begin{eqnarray}
	\label{bnn}
at>\frac{1}{2}+2\sqrt{\frac{1}{2}(1-a)t},
\end{eqnarray}
which follows from that the function $h_1(t)=at-\frac{1}{2}-2\sqrt{\frac{1}{2}(1-a)t}$ is strictly increasing and 
$h_1(\frac{1}{4-4\sqrt{1-a}-2a})=0$. Further, (\ref{bnn}) and (\ref{jmp}) jointly imply that $\g''\notin\as_{2,2}$, which is a contradiction. Hence the assumption does not hold. 

We next prove the case that $\frac{3}{4}< a\le 1$. It can be verified that  $\r+(2a-1)\diag(0,\frac{4a-3}{6a-3},\frac{a}{6a-3},\frac{a}{6a-3})\in \as_{2,2}$. So $\ar(\r)\le 2a-1$. 
Assume there exists a state $\eta$ and $t>\frac{1}{2a-1}$ such that $\frac{1}{1+t}(\eta+t\r)\in\as_{2,2}$. Denote the non-increasing diagonal entries of $\eta$ as $\eta_1,\cdots,\eta_4$. Let 
\begin{eqnarray}
	\eta'=\frac{1}{1+t}\diag(\eta_4+at,\eta_3+(1-a)t,\frac{1-\eta_3-\eta_4}{2},\frac{1-\eta_3-\eta_4}{2}).
\end{eqnarray}
We have $\frac{1}{1+t}(\eta+t\r)\succ\eta'$. Further, since $\frac{3}{4}<a\le 1$ and $t>\frac{1}{2a-1}$, one can verify that
$at-(1-a)t>\frac{1}{2}$, consequently,
\begin{eqnarray}
	\label{fnu}
	\eta_4+at\ge at>\max \{\eta_3+(1-a)t,\frac{1}{2}\}.
\end{eqnarray}
Let $\eta''=\frac{1}{1+t}\diag(at,\eta_3+(1-a)t,\frac{1-\eta_3}{2},\frac{1-\eta_3}{2})$. Using (\ref{fnu}) and Lemma \ref{hjk}(ii), one can verify that $\eta'\succ \eta''$. Hence $\eta_2\in\as_{2,2}$. 
Define the function
$h(\eta_3,t):=at-\frac{1-\eta_3}{2}-2\sqrt{\frac{1-\eta_3}{2}(\eta_3+(1-a)t)},
$
for $\eta_3\ge 0$ and $t\ge\frac{1}{2a-1}$. By calculating its partial derivatives, we obtain that $h(\eta_3,t)>h(\eta_3,\frac{1}{2a-1})\ge h(\frac{4a-3}{6a-3},\frac{1}{2a-1})=0$. Combining with (\ref{fnu}), we have $\eta''\notin\as_{2,2}$ as it violates (\ref{eq:2xnAS}). This is a contradiction. So the assumption does not hold. 
This completes the proof. 
$\hfill\square$

\bibliographystyle{unsrt}


\bibliography{ASandAP}

\end{document}